\def\withcomments{1}
\def\full{1}
\titlespacing{\section}{0pt}{2ex}{1ex}
\titlespacing{\subsection}{0pt}{1ex}{1ex}
\titlespacing{\subsubsection}{0pt}{0.5ex}{0ex}
\renewenvironment{itemize}[1]{\begin{compactitem}#1}{\end{compactitem}}
\renewenvironment{enumerate}[1]{\begin{compactenum}#1}{\end{compactenum}}
\newtheorem{theorem}{Theorem}[section]
\newtheorem{lemma}[theorem]{Lemma}
\newtheorem{proposition}[theorem]{Proposition}
\newtheorem{claim}[theorem]{Claim}
\newtheorem{corollary}[theorem]{Corollary}
\newtheorem{remark}[theorem]{Remark}
\newtheorem{definition}{Definition}[section]
\numberwithin{figure}{section}
\newcommand{\cP}{{\mathcal P}}
\newcommand{\C}{{\mathcal C}}
\newcommand{\eps}{{\epsilon}}
\newcommand{\dis}{dist}
\newcommand{\Dis}{Dist}
\newcommand{\ste}{{\epsilon\emph{-tester}}}
\newcommand{\Accept}{\textbf{Accept}\xspace}
\newcommand{\accept}{\textbf{accept}\xspace}
\newcommand{\reject}{\textbf{reject}\xspace}
\newcommand{\integerset}[1]{[0..{#1})}
\newcommand{\domain}{\integerset{n}^2}
\newcommand{\gp}{\text{GP}}
\newcommand{\side}{r}
\newcommand{\rblock}{$\side$-block\xspace}
\newcommand{\lind}{t} 
\newcommand{\hull}{\text{Hull}}
\newcommand{\Best}{{\sf Best}\xspace}
\newcommand{\Compst}{{\sf ComputeStatus}\xspace}
\newcommand{\Constgr}{{\sf ConstructGraph}\xspace}
\newcommand{\BestFixed}{{\sf Best For Fixed Base}\xspace}
\newcommand{\errle}{\dout_{\text{left}}}
\newcommand{\errri}{\dout_{\text{right}}}
\newcommand{\Tend}{{\bf T}_{\text{end}}}
\newcommand{\Tfin}{{\bf T}_{\text{fin}}}
\newcommand{\Tcut}{{\bf T}_{\text{cut}}}
\newcommand{\Tstart}{{\bf T}_0}
\newcommand{\stripset}{{\bf R}}
\newcommand{\err}{\mbox{\it err}_S}
\newcommand{\con}{144}
\newcommand{\hp}[2]{H^{#1}_{#2}}
\newcommand{\sepline}[2]{L^{#1}_{#2}}
\newcommand{\hpi}[2]{M^{#1}_{#2}}
\newcommand{\myerr}[1]{Err(#1)}
\DeclareMathOperator*{\E}{\mathbb E}
\DeclareMathOperator*{\Var}{\mathrm Var}
\newcommand{\mydelta}{\epsilon} 
\newcommand{\bigdelta}{{\epsilon_0}} 
\newcommand{\dsquares}{d_{\rm squares}}
\newcommand{\dhatsquares}{\hat{d}_{\rm squares}}
\newcommand{\dout}{\hat{d}}
\newenvironment{myproof}{\begin{proof}
}{
\end{proof}}
   \newcommand{\mnote}[1]{{\color{red}\footnote{{\color{red} {\bf M:} #1}}}}
   \newcommand{\pnote}[1]{{\color{green}\footnote{{\color{green} {\bf P:} #1}}}}
   \newcommand{\snote}[1]{{\color{black}\footnote{{\color{black} {\bf S:} #1}}}}
   \newcommand{\mnote}[1]{}
   \newcommand{\pnote}[1]{}
   \newcommand{\snote}[1]{}
\newcommand{\mch}[1]{{\color{black}#1}}
\date{}
\title{Tolerant Testers of Image Properties\footnote{A preliminary version of this article was published in the proceedings of the 43rd International Colloquium on Automata, Languages, and Programming, ICALP, 2016~\cite{BMR16icalp}}}
\title{Tolerant Testers of Image Properties}
\author{Piotr Berman\thanks{Pennsylvania State University, USA; {\tt berman@cse.psu.edu}.}\\
\and Meiram Murzabulatov\thanks{Pennsylvania State University, USA; {\tt meyram85@yahoo.co.uk}. This author was supported by NSF CAREER award CCF-0845701 and NSF award CCF-1422975.
}\\
\and Sofya Raskhodnikova\thanks{Pennsylvania State University, USA; {\tt sofya@cse.psu.edu}. This author was supported by NSF CAREER award CCF-0845701, NSF award CCF-1422975, and by Boston University's Hariri Institute for Computing and Center for Reliable Information Systems and Cyber Security and, while visiting the Harvard Center for Research on Computation \& Society, by a Simons Investigator grant to Salil Vadhan.
}
}
\begin{document}
\raggedbottom
\setlength{\parskip}{0pt}
\maketitle
\begin{abstract}
We initiate a systematic study of tolerant testers of image properties or, equivalently, algorithms that approximate the distance from a given image to the desired property (that is, the smallest fraction of pixels that need to change in the image to ensure that the image satisfies the desired property). Image processing is a particularly compelling area of applications for sublinear-time algorithms and, specifically, property testing. However, for testing algorithms to reach their full potential in image processing, they have to be tolerant, which allows them to be resilient to noise. Prior to this work, only one tolerant testing algorithm for an image property (image partitioning) has been published.

We design efficient approximation algorithms for the following fundamental questions: What fraction of pixels have to be changed in an image so that it becomes a half-plane? a representation of a convex object? a representation of a connected object? More precisely, our algorithms approximate the distance to three basic properties (being a half-plane, convexity, and connectedness) within a small additive error $\eps$, after reading a number of pixels polynomial in $1/\eps$ and independent of the size of the image. The running time of the testers for half-plane and convexity is also polynomial in $1/\eps$.
Tolerant testers for these three properties were not investigated previously. For convexity and connectedness, even the existence of distance approximation algorithms with query complexity independent of the input size is not implied by previous work. (It does not follow from the VC-dimension bounds, since VC dimension of convexity and connectedness, even in two dimensions, depends on the input size. It also does not follow from the existence of non-tolerant testers.)

Our algorithms require very simple access to the input: uniform random samples for the half-plane property and convexity, and samples from uniformly random blocks for connectedness. However, the analysis of the algorithms, especially for convexity, requires many geometric and combinatorial insights. For example, in the analysis of the algorithm for convexity, we define a set of reference polygons  $P_\mydelta$ such that (1) every convex image has a nearby polygon in $P_\mydelta$ and (2) one can use dynamic programming to quickly compute the smallest empirical distance to a polygon in $P_\mydelta$. This construction might be of independent interest.

\end{abstract}
\section{Introduction}\label{sec:intro}
Image processing is a particularly compelling area of applications for sublinear-time algorithms and, specifically, property testing. Images are huge objects, and our visual system manages to process them very quickly without examining every part of the image. Moreover, many applications in image analysis have to process a large number of images online, looking for an image that satisfies a certain property among images that are generally very far from satisfying it. Or, alternatively, they look for a subimage satisfying a certain property in a large image (e.g., a face in an image where most regions are part of the background.) There is a growing number of proposed {\em rejection-based} algorithms that employ a quick test that is likely to reject a large number of unsuitable images (see, e.g., citations in \cite{KleinerKNB11}).

Property testing~\cite{RS96,GGR98} is a formal study of fast algorithms that accept objects with a given property and reject objects that are far. Testing image properties in this framework was first considered in \cite{Ras03}. Ron and Tsur~\cite{TsurR10} initiated property testing of images with a different input representation, suitable for testing properties of sparse images. Since these models were proposed, several sublinear-time algorithms for visual properties were implemented and used: namely, those by Kleiner et al.\ and Korman et al.\ \cite{KleinerKNB11,KormanRT,KormanRTA13}.

However, for sublinear-time algorithms to reach their full potential in image processing, they have to be resilient to noise: images are often noisy, and it is undesirable to reject images that differ only on a small fraction of pixels from an image satisfying the desired property. Tolerant testing was introduced by Parnas, Ron and Rubinfeld~\cite{PRR06}  exactly with this goal in mind---to deal with noisy objects. It builds on the property testing model and calls for algorithms that accept objects that are close to having a desired property and reject objects that are far. Another related task is approximating distance of a given object to a nearest object with the property within additive error~$\mydelta$. (Distance approximation algorithms imply tolerant testers in a straightforward
\ifnum\full=1
way: see the remark after Definition~\ref{def:distance-approx-alg}).
\else
way.)
\fi
The only image problem for which tolerant testers were studied is the image partitioning problem investigated by Kleiner et al.~\cite{KleinerKNB11}.
\subparagraph{Our results.}
We design efficient approximation algorithms for the following fundamental questions: What fraction of pixels have to be changed in an image so that it becomes a half-plane? a representation of a convex object? a representation of a connected object?
In other words, we design algorithms that approximate the distance to being a half-plane, convexity and connectedness within a small additive error or, equivalently, tolerant testers for these properties. These problems were not investigated previously in the tolerant testing framework.
For all three properties, we give $\mydelta$-additive distance approximation algorithms that run in constant time (i.e., dependent only on $\mydelta$, but not the size of the image). We remark that even though it was known that these properties can be tested in constant time \cite{Ras03}, this fact does not necessarily imply constant-query tolerant testers for these properties. E.g., Fischer and Fortnow~\cite{FischerF06} exhibit a property (of objects representable with strings of length $n$) which is testable with a constant number of queries, but for which every tolerant tester requires $n^{\Omega(1)}$ queries.
For convexity and connectedness, even the existence of distance approximation algorithms with query (or time) complexity independent of the input size does not follow from previous work. It does not follow from the VC-dimension bounds, since VC dimension of convexity and connectedness, even in two dimensions, depends on the input size\footnote{For $n\times n$ images, the VC dimension of convexity is $\Theta(n^{2/3})$ (this is the maximum number of vertices of a convex lattice polygon in an $n\times n$ lattice \cite{Barany00}); for connectedness, it is $\Theta(n)$.}. Implications of the VC dimension bound on convexity are further discussed below.

Our results on distance approximation are summarized  in Table~\ref{table:all-results}. Our algorithm for convexity is the most important and technically difficult of our results, requiring a large number of new ideas to get running time polynomial in $1/\eps.$ To achieve this, we define a set of reference polygons  $P_\mydelta$ such that (1) every convex image has a nearby polygon in $P_\mydelta$ and (2) one can use dynamic programming to quickly compute the smallest empirical distance to a polygon in $P_\mydelta$. It turns out that the empirical error of our algorithm is proportional to the sum of the square roots of the areas of the regions it considers in the dynamic program. To guarantee (2) and keep our empirical error small, our construction ensures that the sum of the square roots of the areas of the considered regions is small.
This construction might be of independent interest.
\begin{table*}[t]
\begin{center}
\begin{tabular}{| l || c | c| c |}
\hline
\multicolumn{1}{|c||}{Property}& Sample Complexity &  Run Time & Access to Input
\\
\hline
Half-plane  &  $O\left(\frac 1 {\mydelta^2} \log \frac 1 \mydelta\right)$ & $O\left(\frac 1 {\mydelta^3} \log \frac 1 \mydelta\right)$ & uniformly random pixels
\\
Convexity &  $O\left(\frac 1 {\mydelta^2} \log \frac 1 \mydelta\right)$ & $O\left(\frac 1 {\mydelta^8} \right)$ & uniformly random pixels
\\
Connectedness &  $O\left(\frac 1 {\mydelta^4} \right)$ \ \ \ \ & $\exp\left(O\left(\frac 1 \mydelta \right)\right)$ &uniformly random blocks of pixels
\\
\hline
\end{tabular}
\end{center}
\caption{Our results on distance approximation.
To get complexity of $(\eps_1,\eps_2)$-tolerant testing, substitute $\mydelta=(\eps_2-\eps_1)/2$.
}
\label{table:all-results}
\end{table*}
Our algorithms do not need sophisticated access to the input image: uniformly randomly sampled pixels suffice for our algorithms for the half-plane property and convexity. For connectedness, we allow our algorithms to query pixels from a uniformly random block. (See the end of Section~\ref{sec:defintions_notation} for a formal specification of the input access.)

Our algorithms for convexity and half-plane work by first implicitly learning the object\footnote{\label{fn:connection-to-learning}There is a known implication from learning to testing. As proved in \cite{GGR98}, a proper PAC learning algorithm for property ${\cal P}$
with sampling complexity $q(\eps)$ implies a 2-sided error (uniform) property tester for $\cal{P}$ that takes $q(\eps/2) + O(1/\eps)$ samples. There is an analogous implication from proper agnostic PAC learning to distance approximation with an overhead of $O(1/\eps^2)$ instead of $O(1/\eps)$. We choose to present our testers first and get learners as corollary because our focus is on testing and because we want additional features for our testers, such as 1-sided error, that do not automatically follow from the generic relationship.}.
\ifnum\full=1
PAC learning was defined by Valiant~\cite{Valiant84}, and agnostic learning, by Kearns et al.~\cite{KearnsSS94} and Haussler~\cite{Haussler92}.
\fi
 As a corollary of our analysis, we obtain fast proper agnostic PAC learners of half-planes and of convex sets in two dimensions that work under the uniform distribution.
 The sample and time complexity\footnote{All our results are stated for error probability $\delta=1/3$. To get results for general $\delta$, by standard arguments, it is enough to multiply the complexity of an algorithm by $\log 1/\delta$.} of 
 the PAC learners is as indicated in Table~\ref{table:all-results} for distance approximation algorithms for corresponding properties.

While the sample complexity of our agnostic half-plane learner (and hence our distance approximation algorithm for half-planes) follows from the VC dimension bounds, its running time does not. Agnostically learning half-spaces under the uniform distribution has been studied by \cite{KalaiKMS08}, but only for the hypercube $\{-1,1\}^d$ domains, not the plane.
Our PAC learner of convex sets, in contrast to our half-plane learner, \ dimension lower bounds on sample complexity. (The sample complexity of a PAC learner for a class
is at least proportional to the VC dimension of that class \cite{EHKV89}.) Since VC dimension of convexity of $n\times n$ images is
$\Theta(n^{2/3})$,
proper PAC learners of convex sets in two dimensions (that work under arbitrary distributions) must have sample complexity $\Omega(n^{2/3})$. However, one can do much better with respect to the uniform distribution.
Schmeltz~\cite{Sch92} showed that a non-agnostic learner for that task needs
$\Theta(\mydelta^{-3/2})$ samples.
Surprisingly, it appears that this question has not been studied at all for agnostic learners. Our agnostic learner for convex sets in
\ifnum\full=1
two dimensions
\else
2D
\fi
under the uniform distribution needs
$O\left(\frac 1 {\mydelta^2} \log \frac 1 \mydelta\right)$ samples and runs in time $O\left(\frac 1 {\mydelta^8} \right)$.

Finally, we note that for connectedness, we take a different approach. Our algorithms do not try to learn the object first; instead they rely on a combinatorial characterization of distance to connectedness. We show that distance to connectedness can be represented as an average of distances of sub-images to a related property.
\subparagraph{Comparison to other related work.}
Property testing has rich literature on graphs and functions, however, properties of images have been investigated very little. Even though superficially the inputs to various types of testing tasks might look similar, the problems that arise are different.
In the line of work on testing dense graphs, started by Goldreich et al.~\cite{GGR98}, the input is also an $n\times n$ binary matrix, but it represents an adjacency matrix of the dense input graph. So, the problems considered are different than in this work.
In the line of work on testing geometric properties, started by Czumaj, Sohler, and Ziegler~\cite{CzumajSZ00} and Czumaj and Sohler~\cite{CS01}, the input is a set of points represented by their coordinates. The allowed queries and the distance measure on the input space are different from ours. 

A line of work potentially relevant for understanding connectedness of images is on connectedness of bounded-degree graphs. Goldreich and Ron~\cite{GR02}  gave a tester for this property, subsequently improved by Berman et al.~\cite{BermanRY14}. Campagna et al.~\cite{CampagnaGR13} gave a tolerant tester for this problem. Even though we view our image as a graph in order to define connectedness of images, there is a significant difference in how distances between instances are measured (see~\cite{Ras03} for details). We also note, that unlike in~\cite{CampagnaGR13}, our tolerant tester for connectedness is fully tolerant, i.e., it works for all settings of parameters.

The only previously known tolerant tester for image properties was given by Kleiner et al.~\cite{KleinerKNB11}. They consider the following class of image partitioning problems, each specified by a $k\times k$ binary template matrix $T$ for a small {\em constant} $k$. The image satisfies the property corresponding to $T$ if it can be partitioned by $k-1$ horizontal and $k-1$ vertical lines into blocks, where each block has the same color as the corresponding entry of $T$. Kleiner et al.\ prove that $O(1/\mydelta^2)$ samples suffice for tolerant testing of image partitioning properties. Note that VC dimension of such a property is $O(1)$, so by Footnote~\ref{fn:connection-to-learning}, we can get a $O(1/\mydelta^2\log 1/\mydelta)$ bound. Our algorithms required numerous new ideas to significantly beat VC dimension bounds (for convexity and connectedness) and to get low running time.

For the properties we study, distance approximation algorithms and tolerant testers were not investigated previously. In the standard property testing model, the half-plane property can be tested in $O(\eps^{-1})$ time~\cite{Ras03}, convexity can be tested in $O(\eps^{-4/3})$ time~\cite{BMR16socg}, and connectedness can be tested in $O(\eps^{-2}\log \eps^{-1})$ time~\cite{Ras03,BermanRY14}. As we explained, property testers with running time independent of $\eps$ do not necessarily imply tolerant testers with that feature.
Many new ideas are needed to obtain our tolerant testers. In particular, the standard testers for half-plane and connectedness are adaptive while the testers here need only random samples from the image, so the techniques used for analyzing them are different. The tester for convexity in~\cite{BMR16socg} uses only random samples, but it is not based on dynamic programming.


\subparagraph{Open questions.}
In this paper we give tolerant testers for several important problems on images.
It is open whether these testers are optimal. No nontrivial lower bounds are known for these problems. (For any non-trivial property, an easy lower bound on the query complexity of a distance approximation algorithm is $\Omega(1/\eps^2)$. This follows from the fact that $\Omega(1/\eps^2)$ coin flips are needed to distinguish between a fair coin and a coin that lands heads with probability $1/2+\eps$.) Thus, our testers for half-plane and convexity are nearly optimal in terms of query complexity (up to a logorithmic factor in $1/\eps$). But it is open whether their running time can be improved.

\ifnum\full=1
\subparagraph{Organization.} We give formal definitions and notation in Section~\ref{sec:defintions_notation}. Algorithms for being a half-plane, convexity, and connectedness are given in Sections~\ref{sec:half-plane},~\ref{sec:convexity}, and~\ref{sec:connectedness}, respectively. The sections presenting algorithms for being a half-plane and convexity start by giving a distance approximation algorithm and conclude with the corollary about the corresponding PAC learner.
\else
\subparagraph{Organization.} We give formal definitions and notation in Section~\ref{sec:defintions_notation}, deferring some standard definitions to the {\color{black} full version}. Algorithms for being a half-plane, convexity, and connectedness are given in Sections~\ref{sec:half-plane},~\ref{sec:convexity}, and~\ref{sec:connectedness}, respectively.  We view our half-plane result as a good preparation for our distance approximation algorithm for convexity, the most technically difficult result in the paper. Corollaries about PAC learners as well as all omitted proofs and numerous figures can be found in the {\color{black} full version}.
\fi

\section{Definitions and Notation}\label{sec:defintions_notation}
We use $\integerset{n}$ to denote the set of integers $\{0,1,\ldots,n-1\}$ and $[n]$ to denote $\{1,2,\ldots,n\}$.
\ifnum\full=1
By $\log$ we mean the logarithm base 2, and by $\ln$, the logarithm base $e$.
\fi

\subparagraph{Image representation.}
We focus on black and white images. For simplicity, we only consider square images, but everything in this paper can be easily generalized to rectangular images.
We represent an image by an $n\times n$ binary matrix $M$ of pixel values, where 0 denotes white and 1 denotes black.
We index the matrix by $\integerset{n}^2$. The object is a subset of $\integerset{n}^2$
corresponding to black pixels; namely, $\{(i,j)\mid M[i,j]=1\}$.
\ifnum\full=1
The {\em left border of the image} is the set $\{(0,j)\mid j\in\integerset{n}\}$. The right, top and bottom borders are defined analogously. The image {\em border} is the set of pixels on all four borders.

For any region $R$, we use $A(R)$ to denote its area.
\fi

\ifnum\full=0
The {\em absolute distance}, $\Dis(M_1,M_2)$, between matrices $M_1$ and $M_2$ is the number of the entries on which they differ. The \emph{relative distance} between them is $\dis(M_{1},M_{2})=\Dis(M_{1},M_{2})/ n^{2}$.
A property $\mathcal{P}$ is a subset of binary matrices.
\else
\subparagraph{Distance to a property.} The {\em absolute distance}, $\Dis(M_1,M_2)$, between matrices $M_1$ and $M_2$ is the number of the entries on which they differ. The \emph{relative distance} between them is $\dis(M_{1},M_{2})=\Dis(M_{1},M_{2})/ n^{2}$.
A property $\mathcal{P}$ is a subset of binary matrices. The distance of an image represented by matrix $M$ to a property $\mathcal{P}$ is $\dis(M,\mathcal{P})=\min_{M'\in \mathcal{P}}$ $\dis(M,M')$. An image is {\em $\epsilon$-far} from the property if its distance to the property is at least $\epsilon$; otherwise, it is $\eps$-close to it.

\subparagraph{Computational Tasks.}
We consider several computational tasks: tolerant testing~\cite{PRR06}, additive approximation of the distance to the property, and proper (agnostic) PAC learning~\cite{Valiant84,KearnsSS94,Haussler92}.
\ifnum\full=1
Here we define them specifically for properties of images.
\fi
\begin{definition}[Tolerant tester]\label{def:tester}
 An {\em $(\eps_1,\eps_2)$-tolerant tester} for a property $\cP$ is a randomized algorithm that, given two parameters $\eps_1,\eps_2\in(0,1/2)$ such that $\eps_1<\eps_2$ and access to an $n\times n$ binary matrix $M$,
\ifnum\full=0
(1) accepts with probability at least 2/3 if $\dis(M,\cP)\leq\eps_1$;
(2) rejects with probability at least 2/3 if $\dis(M,\cP)\geq\eps_2$.
\else
\begin{enumerate}
\item accepts with probability at least 2/3 if $\dis(M,\cP)\leq\eps_1$;
\item rejects with probability at least 2/3 if $\dis(M,\cP)\geq\eps_2$.
\end{enumerate}
\fi

\end{definition}


\begin{definition}[Distance approximation algorithm]\label{def:distance-approx-alg}
An {\em $\mydelta$-additive distance approximation algorithm} for a property $\cP$ is a randomized algorithm that, given an error parameter $\mydelta\in(0,1/4)$ and access to an $n\times n$ binary matrix $M,$ outputs a value $\dout\in[0,1/2]$ that with probability at least 2/3 satisfies $|\dout -\dis(M,\cP)|\leq \mydelta$.
\end{definition}

As observed in \cite{PRR06}, we can obtain an $(\eps_1,\eps_2)$-tolerant tester for any property $\cP$ by running a distance approximation algorithm for $\cP$ with $\mydelta=(\eps_2-\eps_1)/2$. Thus, all our distance approximation algorithms directly imply tolerant testers.

\begin{definition}[Proper agnostic PAC learner]

A proper agnostic PAC learning algorithm for class $\cal P$ that works under the uniform distribution is given a parameter $\eps\in(0,1/2)$ and access to an image $M$. It can draw independent uniformly random samples $(i,j)$  and obtain $(i,j)$ and $M[i,j]$. With probability at least 2/3, it must output an image $M'\in {\cal P}$ such that $\dis(M,M')\leq\dis(M,{\cal P})+\eps$.
\end{definition}
\fi
\subparagraph{Access to the input.}
A {\em query-based} algorithm accesses its $n\times n$ input matrix $M$ by specifying a query pixel $(i,j)$ and obtaining $M[i,j]$.
\ifnum\full=1
The query complexity of the algorithm is the number of pixels it queries.
A {\em query-based} algorithm is {\em adaptive} if its queries depend on answers to previous queries and {\em nonadaptive} otherwise.
\fi
A {\em uniform} algorithm accesses its $n\times n$ input matrix by drawing independent samples $(i,j)$ from the uniform distribution over the domain (i.e., $\integerset{n}^2$) and obtaining $M[i,j]$.
A {\em block-uniform algorithm} accesses its $n\times n$ input matrix by specifying a block length $\side\in[n]$. For a block length $\side$ of its choice, the algorithm draws $x,y\in[\lceil n/\side\rceil]$ uniformly at random and obtains set $\{(i,j)\mid \lfloor i/\side\rfloor=x\text{ and } \lfloor j/\side\rfloor=y\}$ and $M[i,j]$ for all $(i,j)$ in this set.
The sample complexity of a {\em uniform} or a {\em block-uniform} algorithm is the number of pixels of the image it examines.

\begin{remark}\label{remark:bernoulli}
Uniform algorithms have access to independent (labeled) samples from the uniform distribution over the domain.
\ifnum\full=1
Sometimes it is more convenient to design
{\em Bernoulli algorithms} that
\else
{\em Bernoulli algorithms}
\fi
only have access to (labeled) Bernoulli samples from the 
image: namely, each pixel appears in the sample with probability $s/n^2$, where $s$ is the sample parameter that controls the expected sample complexity.
By standard arguments, a Bernoulli algorithm with the sample parameter $s$ can be used to obtain a uniform algorithm that takes $O(s)$ samples and has the same guarantees as the original algorithm (and vice versa).
\end{remark}

\section{Distance Approximation to the Nearest Half-Plane Image}\label{sec:half-plane}
{\color{black} 
{\color{black} An image is called a \emph{half-plane image}} if there exist an angle $\varphi\in[0,2\pi)$ and a real number $c$ such that pixel $(x,y)$ is black in the image iff $x\cos\varphi +y\sin\varphi\geq c$. The line $x\cos\varphi +y\sin\varphi= c$, denoted $\sepline{\varphi}{c}$, is {\em a separating line} of the half-plane image, i.e., it
separates black and white pixels of the image. We call $\varphi$ the {\em direction} of the half-plane image (and $\sepline{\varphi}{c}$). Note that $\varphi$ is the oriented angle between the $x$-axis and a line perpendicular to $\sepline{\varphi}{c}$. For all $\varphi\in[0,2\pi)$ and $c\in\mathbb R$, the half-plane image with a separating line $\sepline{\varphi}{c}$ is denoted $\hpi{\varphi}{c}$ and the closed half-plane whose every point $(x,y)$ satisfies the inequality $x\cos\varphi +y\sin\varphi \geq c$ is denoted $\hp{\varphi}{c}$. We can think of a half-plane image as a discretized half-plane.
 
}
\begin{theorem}\label{thm:half-plane-dist-appr}
For $\eps\in(\frac{90}n,\frac 1 4)$, there is a uniform $\mydelta$-additive distance approximation algorithm for the half-plane property with sample complexity  $O(\frac{1}{\mydelta^2}\log\frac{1}{\mydelta})$ and time complexity $O(\frac{1}{\mydelta^{3}}\log \frac{1}{\mydelta}).$
\end{theorem}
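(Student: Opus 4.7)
The plan is to use empirical risk minimization over a finite class of half-plane candidates. Draw a uniform random sample $S$ of size $m = O(\eps^{-2}\log\eps^{-1})$, build a finite list $\mathcal{C}$ of candidate half-plane images, compute for each $H \in \mathcal{C}$ the empirical distance $|\{(i,j)\in S : M[i,j]\neq H[i,j]\}|/m$, and output the minimum. Correctness reduces to two claims: (i) a \emph{discretization} lemma saying that $\mathcal{C}$ contains a half-plane image whose true distance to $M$ exceeds $\dis(M,\mathcal{P})$ by at most $\eps/2$, and (ii) a \emph{uniform convergence} bound guaranteeing that with probability at least $2/3$, every $H\in\mathcal{C}$ has empirical distance within $\eps/2$ of its true distance $\dis(M,H)$. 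Adding the two errors gives the required $\eps$-additive approximation.

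For the discretization, I would take $\mathcal{C}$ to consist of half-plane images $\hpi{\varphi}{c}$ where $\varphi$ ranges over a set $\Phi$ of $O(1/\eps)$ equally-spaced directions in $[0,2\pi)$ and, for each $\varphi\in\Phi$, $c$ ranges over the $O(m)$ values that place the separating line $\sepline{\varphi}{c}$ strictly between two consecutive projections of sample points onto the direction $\varphi$ (so every ``empirical equivalence class'' of thresholds is represented). Given an arbitrary half-plane image $\hpi{\varphi^*}{c^*}$, snap $\varphi^*$ to the nearest $\varphi\in\Phi$; a short geometric calculation shows the symmetric difference of $\hp{\varphi^*}{c^*}$ and $\hp{\varphi}{c^*}$, restricted to the image domain, has area $O(\eps n^2)$ since the boundary lines intersect inside (or near) the image and the angular gap is $O(\eps)$. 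Then snap $c^*$ to the appropriate discretized threshold, which only changes the empirical distance, not the true distance much more. The lower bound $\eps > 90/n$ ensures this pixel-level rounding is absorbed into the slack.

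For uniform convergence, I would apply a Hoeffding bound to each fixed $H\in\mathcal{C}$ with deviation $\eps/2$ and failure probability $O(1/|\mathcal{C}|)$; since $|\mathcal{C}| = O(m/\eps)$ and $m = \Theta(\eps^{-2}\log\eps^{-1})$, the exponent $\Omega(m\eps^2) = \Omega(\log\eps^{-1})$ dominates $\log|\mathcal{C}|$, so the union bound succeeds with probability $\ge 2/3$. (Alternatively, one could invoke the standard VC-dimension-3 uniform convergence bound for halfspaces in the plane.)

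For the running time, the key subroutine is: for each fixed direction $\varphi\in\Phi$, sort the $m$ sample projections onto $\varphi$ in $O(m\log m)$ time; then sweep $c$ through the sorted projections, maintaining the empirical disagreement count by $O(1)$ incremental updates as each sample crosses the separating line. This costs $O(m\log m + m)$ per direction and $O(m/\eps + m\log m / \eps)$ overall, which simplifies to $O(\eps^{-3}\log \eps^{-1})$ for the chosen $m$. The main obstacle in this plan is making the discretization argument in step~(i) quantitatively sharp: I need to control the area of the symmetric difference of two half-planes whose directions differ by $O(\eps)$ uniformly over all thresholds $c^*$, and to handle the degenerate case where $\sepline{\varphi^*}{c^*}$ barely clips a corner of the image (so the symmetric difference estimate must be in terms of chord length rather than the full diameter).
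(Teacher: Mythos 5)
Your approach matches the paper's at a high level: both do empirical risk minimization over a discretized family of candidate half-plane images, both discretize the direction into $O(1/\eps)$ reference angles and control the area of the symmetric difference incurred by snapping the direction, and both apply a uniform-convergence bound over the candidate family and sweep through thresholds to achieve the claimed running time. The substantive difference is in how you discretize the offset $c$: the paper places separating lines on a \emph{fixed} grid with spacing $\eps n/\sqrt 2$, giving a sample-independent reference set of $O(1/\eps^2)$ half-planes, while you use the $O(m)$ cut values induced by the projections of the sample points.

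That sample-dependent choice creates a genuine gap in claim (i) as you stated it. You assert that $\mathcal C$ contains a half-plane whose \emph{true} distance to $M$ is within $\eps/2$ of optimal, but $\mathcal C$ is defined from the random sample $S$, so this is not a deterministic fact about the image, and it is not clear that the strip of domain between two consecutive sample cuts has area $O(\eps n^2)$ (the projected density of uniform samples is proportional to chord length, so the corners of the image are sparsely populated and you would need a separate concentration argument about gap sizes). The standard way to make your version rigorous is to \emph{not} make a true-distance claim about $\mathcal C$ at all: snap only the direction, let $\hat c$ be the empirically optimal threshold for the snapped direction, and chain $\dis(M,\hpi{\varphi}{\hat c}) \le \dout(\hpi{\varphi}{\hat c})+\eps/4 \le \dout(\hpi{\varphi}{c^*})+\eps/4 \le \dis(M,\hpi{\varphi}{c^*})+\eps/2$, where the outer two inequalities come from VC uniform convergence over \emph{all} half-planes (not just $\mathcal C$). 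This is a fine route, but it is a restructuring of your two-claim outline, not an instance of it. The paper's fixed grid sidesteps all of this: its Lemma~\ref{lem:properties-of-reference-HP} is a clean deterministic statement, and then Hoeffding plus a union bound over the $O(1/\eps^2)$ fixed candidates suffices.

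Two smaller points. First, the running time: with a comparison sort of $m=\Theta(\eps^{-2}\log\eps^{-1})$ projections per direction, the cost is $O(m\log m/\eps)=O(\eps^{-3}\log^2\eps^{-1})$, one log factor more than claimed; the paper gets $O(\eps^{-3}\log\eps^{-1})$ by bucket-sorting the samples into $O(1/\eps)$ offset buckets per direction, which is $O(m)$ per direction. Second, bounding the \emph{area} of the symmetric difference is not the same as bounding the \emph{number of misclassified pixels}; the paper converts area to pixel count with Pick's theorem (Proposition~\ref{prop:area-pixel}), and that conversion, with its additive perimeter term, is exactly what forces the $\eps>90/n$ hypothesis. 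You flag this constraint but would need to actually carry out the lattice-point count to close the argument.
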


\begin{proof}
{\color{black}At a high level, our algorithm for approximating the distance to being a half-plane (Algorithm~\ref{alg:half-plane}) constructs a small set $\mathcal M_\mydelta$ of reference half-plane images. It samples pixels uniformly at random and outputs the empirical distance to the closest reference half-plane image. The core property of $\mathcal M_\mydelta$ is that the smallest empirical distance to a half-plane image in $\mathcal M_\mydelta$ can be computed quickly.

\begin{definition}[Reference directions and half-planes]\label{def:reference-half-planes}
Given $\mydelta\in(0,\frac{1}{4})$, let $a=\mydelta n/\sqrt{2}$. Let $D_\mydelta$ be the set of directions of the form $i\mydelta$ for  $i\in\integerset{\lceil 2\pi/\mydelta\rceil},$ called {\em reference directions}. The set of {\em reference half-plane images}, denoted $\mathcal M_\mydelta$, consists of every half-plane image for which $\sepline{\varphi}{c}$ is a separating line, where $\varphi\in D_\mydelta$ and $c$ is an integer multiple of $a$. 
\end{definition}
In other words, for every reference direction, we space separating lines of reference half-plane images distance $a$ apart. By definition, there are at most $\sqrt 2 n/a = 2/\mydelta$ reference half-plane images for each direction in $D_\mydelta$ and, consequently, $|\mathcal M_\mydelta|\leq 2 \pi/\mydelta \cdot(2/\mydelta) < 13/\mydelta^2$.

\begin{algorithm}
\caption{Distance approximation to being a half-plane.}
\label{alg:half-plane}
\SetKwInOut{Input}{input}\SetKwInOut{Output}{output}
\Input{parameters $n\in\mathbb{N}$, $\mydelta\in(90/n,1/4)$; Bernoulli access to an $n\times n$ binary matrix $M$.}
\DontPrintSemicolon
\BlankLine
\nl Sample a set $S$ of $s=\frac{6}{\mydelta^{2}}\ln\frac{7}{\mydelta}$ pixels uniformly at random with replacement.\;
\nl Let $D_\mydelta, \mathcal M_\mydelta$ be the sets of reference directions and half-planes, respectively (see Definition~\ref{def:reference-half-planes}) and $a=\eps n/\sqrt{2}$.\;
\tcp{Compute $\displaystyle\dout=\min_{M'\in \mathcal M_\mydelta}\dout(M')$,
where $\dout(M')=\frac 1 s \cdot |\{p\in S : M[p]\neq M'[p]\}|$:}
\nl\label{step:half-planes-foreach-dir} \ForEach {$\varphi\in D_\mydelta$}\do{\tcp{Lines with direction $\varphi$ partition the image. Bucket sort samples by position in the partition:}
\nl\quad     Assign each sample $(x,y)\in S$ to bucket $j=\lfloor(x\cos\varphi+y\sin\varphi)/a\rfloor$.\;
\nl\quad     For each bucket $j$, compute $w_j$ and $b_j$, the number of white and black pixels it has.\;
\nl\label{step:half-planes-compute-estimate}\quad     For each $j$, where $\hpi{\varphi}{ja}\in \mathcal M_\mydelta$, compute
$\dout(\hpi{\varphi}{ja})=
\frac{1}{s}\sum_{k < j}b_k+\
\frac{1}{s}\sum_{k\ge j}w_k$.\;
}
\nl Output $\dout$, the minimum of the values computed in Step~\ref{step:half-planes-compute-estimate}.\;
\end{algorithm}

\begin{lemma}\label{lem:properties-of-reference-HP}
 For every half-plane image $M$, there is $M'\in \mathcal M_\mydelta$ such that $\dis(M,M')\leq\mydelta/1.8$.
\end{lemma}

\begin{proof}
We mentioned that a half-plane image can be viewed as a discretized half-plane. Next we define a set of half-planes that we use in the proof of the lemma.
\begin{definition}\label{def:hp}
The set of reference half-planes, denoted $\mathcal H_{\mydelta}$, consists of every half-plane $\hp{\varphi}{c}$, where $\varphi\in D_\mydelta$ and $c$ is an integer multiple of $a$. 
\end{definition}
\begin{claim}\label{cl:properties-of-reference-HP}
For every half-plane $H$, there is a half-plane $H'\in\mathcal{H_{\eps}}$ such that the area of the symmetric difference of $H$ and $H'$ is at most $\mydelta n^2/2$.
\end{claim}
\begin{proof}
Consider a half-plane $\hp{\varphi}{c}$. Let $\varphi'$ be a reference direction closest to $\varphi$. Then $|\varphi-\varphi'|\le\mydelta/2$. 
We consider two cases. 
\ifnum\full=1
See Figures~\ref{fig:nearby-ref-halfplane} and~\ref{fig:nearby-ref-halfplane2}.

\begin{figure}[ht]
\begin{minipage}[b]{0.45\linewidth}
\centering
\includegraphics[width=\linewidth]{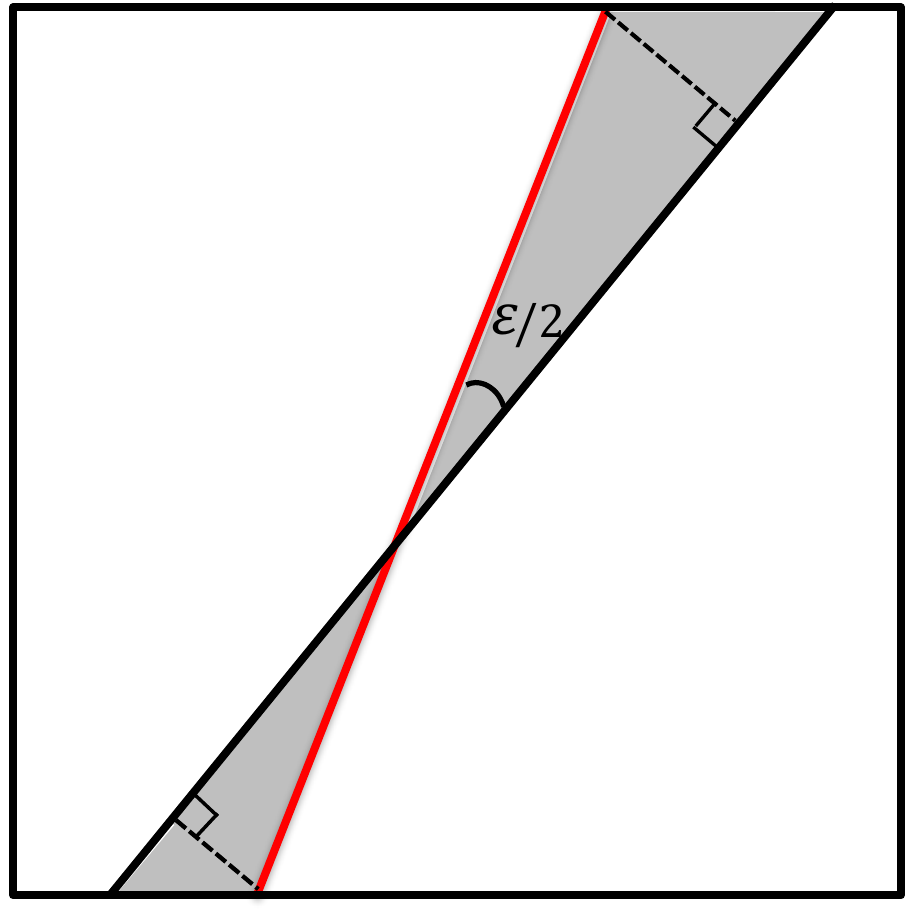}
\caption{Proof of Lemma~\ref{lem:properties-of-reference-HP}: triangular regions.}
\label{fig:nearby-ref-halfplane}
\end{minipage}
\hspace{0.1\linewidth}
\begin{minipage}[b]{0.45\linewidth}
\centering
\includegraphics[width=\linewidth]{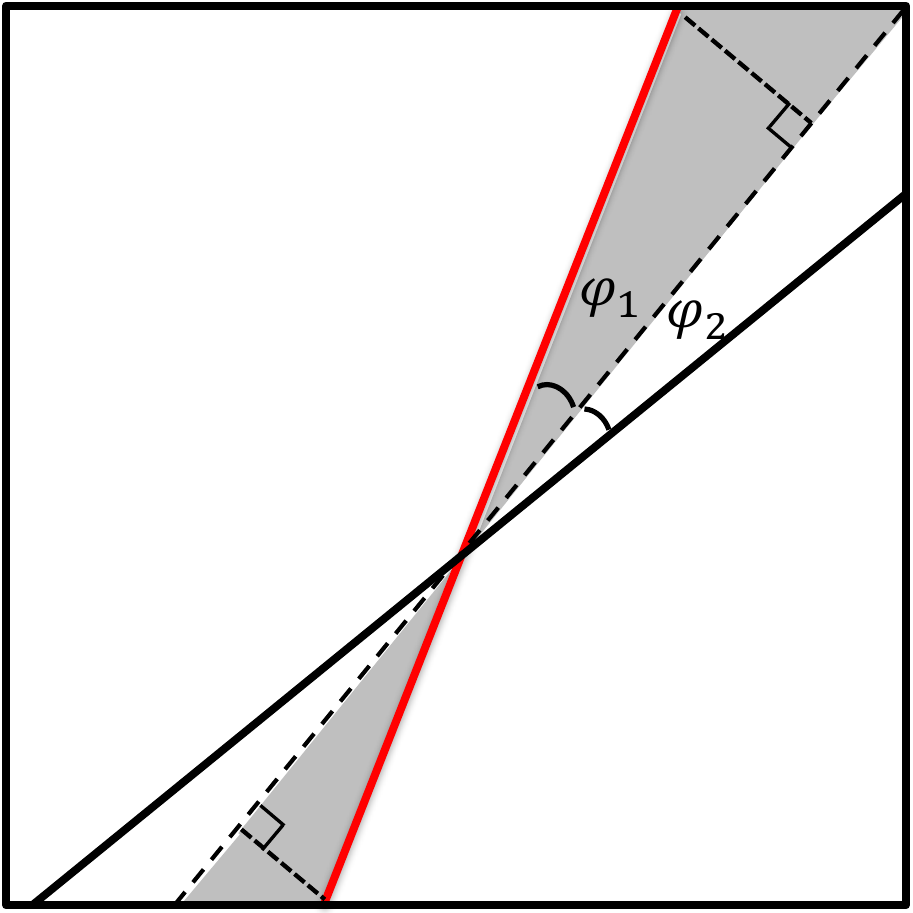}
\caption{Proof of Lemma~\ref{lem:properties-of-reference-HP}: triangular and quadrilateral regions.}
\label{fig:nearby-ref-halfplane2}
\end{minipage}
\end{figure}

\else
See the figures in the {\color{black} full version}.
\fi

{\bf Case 1:} Suppose that there is a reference half-plane $\hp{\varphi'}{c'}$ such that the lines $\sepline{\varphi}{c}$ and $\sepline{\varphi'}{c'}$ intersect inside inside $[0,n-1]^2$. Note that the length of every line segment inside $[0,n-1]^2$ is at most $\sqrt{2}n$. The symmetric difference of $\hp{\varphi}{c}$ and $\hp{\varphi'}{c'}$ inside $[0,n-1]^2$ consists of two regions
formed by lines $\sepline{\varphi}{c}$ and $\sepline{\varphi'}{c'}$. Each of these regions is either a triangle or (if it contains a corner of the image) a quadrilateral. First, suppose both regions are triangles. The sum of lengths of their bases, that lie on the same line, is at most $\sqrt{2}n$, whereas the sum of their heights is at most $\sin (\mydelta/2)\times\sqrt{2}n\leq \mydelta n/\sqrt{2}$. Hence, the sum of their areas is at most $\mydelta n^2/2$.

If exactly one of the regions is a quadrilateral, we add a line through the corner of the image contained in the quadrilateral and the intersection point of  $\sepline{\varphi}{c}$ and $\sepline{\varphi'}{c'}$. It partitions the symmetric difference of $\hp{\varphi}{c}$ and $\hp{\varphi'}{c'}$ into two pairs of triangular regions. Let $\varphi_1$ (respectively, $\varphi_2$) be the angle between the new line and $\sepline{\varphi}{c}$ (respectively, $\sepline{\varphi'}{c'}$). Then $\varphi_1+\varphi_2\leq\mydelta/2$. Applying the same reasoning as before to each pair of regions, we get that the sum of their areas is at most $\varphi_1 n^2+\varphi_2 n^2\leq \mydelta n^2/2$.
If both regions are quadrilaterals, we add a line as before for each of them and apply the same reasoning as before to the three resulting pairs of regions. Again, the area of the symmetric difference of $\hp{\varphi}{c}$ and $\hp{\varphi'}{c'}$ is at most $\mydelta n^2/2$. Thus, $\hp{\varphi'}{c'}$ is the required $M'$.

{\bf Case 2:} There exist reference half-planes $\hp{\varphi'}{c'}$ and $\hp{\varphi'}{c'+a}$ such that the line 
$\sepline{\varphi}{c}$ is between $L=\sepline{\varphi'}{c'}$ and $L'=\sepline{\varphi'}{c'+a}$. The region between $L$ and $L'$ inside the image has
length at most $\sqrt{2}n$ and width $a$. Thus, its area is at
most $\mydelta n^2$. Partition it into two regions: between $L$ and
$\sepline{\varphi}{c}$ and between $L'$ and $\sepline{\varphi}{c}$. One of the two regions has area at most
$\mydelta n^2/2$. Thus, $\hp{\varphi'}{c'}$ or $\hp{\varphi'}{c'+a}$ is the required $M'$.
\end{proof}

To complete the proof of the lemma we use the following theorem that relates the area of a lattice polygon and the number of integer points that the polygon covers. (A lattice polygon is a polygon whose vertices have integer coordinates.)
\begin{theorem}[Pick's theorem~\cite{pick}]\label{thm:pick}
For a simple lattice polygon $G$, let $\alpha$ denote the number of lattice points in the interior of $G$ and $\beta$ denote the number of lattice points on the boundary of $G$. Then $A(G)=\alpha+\beta/2-1$.
\end{theorem}

\begin{definition}
For a polygon $G$, let $Perim(G)$ denote the perimeter of $G$ and $Pix(G)$ denote the number of pixels in $G$, i.e., pixels in the interior of $G$ and on its boundary.
\end{definition}
\begin{proposition}\label{prop:area-pixel}
Let $G$ be a convex polygon. Then $Pix(G)\leq A(G)+Perim(G)/2+1$.
\end{proposition}
\begin{proof}
If all pixels in $G$ are collinear then $Pix(G)\leq Perim(G)/2+1\leq A(G)+Perim(G)/2+1$. This follows from the fact that the length of a line segment inside a polygon is at most half of the perimeter of the polygon and that the number of integer points on the line segment is at most the length of the line segment plus one. If not all pixels in $G$ are collinear then consider the convex hull of all pixels in $G$. Let $\alpha$ and $\beta$ denote the number of pixels in the interior and on the boundary of that convex hull, respectively. (Note that the convex hull is a lattice polygon). By Theorem~\ref{thm:pick}, we obtain that $\alpha+\beta/2-1\leq A(G)$ and $Pix(G)=\alpha+\beta\leq A(G)+\beta/2+1\leq A(G)+Perim(G)/2+1$.
\end{proof}

For some $\varphi$ and $c$, half-plane image $M=\hpi{\varphi}{c}$. Consider the half-plane $\hp{\varphi}{c}$. By Claim~\ref{cl:properties-of-reference-HP}, there is a half-plane $\hp{\varphi'}{c'}$ such that the area of the symmetric difference of $\hp{\varphi}{c}$ and $\hp{\varphi'}{c'}$ is at most $\mydelta n^2/2$, where $\varphi'\in D_{\mydelta}$ and $c$ is  a multiple of $a$. 

Recall that there are 4 cases for the symmetric difference of $H$ and $H'$. More precisely, it consists of: 1) two triangles, 2) a triangle and a quadrilateral, 3) a quadrilateral, or 4) two quadrilaterals. We consider the last case (this is the hardest case and the three other cases are handled similarly). Let the symmetric difference of $H$ and $H'$ consist of two quadrilaterals $Q_1$ and $Q_2$. (For reference, see Figure~\ref{fig:nearby-ref-halfplane2} where a triangle and a quadrilateral are shown.) Every line segment in the image has length at most $\sqrt{2}n$. Thus, $Perim(Q_1)+Perim(Q_2)\leq 6\sqrt{2}n$. By Proposition~\ref{prop:area-pixel}, we obtain that $Pix(Q_1)+Pix(Q_2)\leq A(Q_1)+A(Q_2)+(Perim(Q_1)+Perim(Q_2))/2+2\leq \eps n^2/2+3\sqrt{2}n+2\leq\eps n^2/1.8$ (recall that $\eps\in(90/n,1/4)$). This completes the proof.
\end{proof}
}
\subparagraph{Analysis of Algorithm~\ref{alg:half-plane}.}
Let $d_M$ be the distance of $M$ to being a half-plane. Then there exists a half-plane matrix $M^{*}$ such that $\dis(M,M^{*})=d_M$.
By a uniform convergence bound (see, e.g., \cite{Avrim-lecture-notes}), since $s\geq (2.6/\mydelta^2)(\ln |M_\mydelta| + \ln 6)$ for all $\mydelta\in(0,1/4)$, we get that with probability at least 2/3,
$|\dis(M,M')-\dout(M')|\leq \mydelta/2.25$ for all $M'\in M_\mydelta$. Suppose this event happened. Then $\dout\geq d_M-\mydelta/2.25$ because $\dis(M,M')\geq d_M$ for all half-planes $M'$. Moreover, by Lemma~\ref{lem:properties-of-reference-HP}, there is a matrix $\hat{M}\in H_{\mydelta}$
such that $\dis(M,\hat{M})\leq\dis(M,M^*)+\dis(M^*,\hat{M})
\leq d_M+\mydelta/1.8$. For this matrix, $\dout(\hat{M})\leq\dis(M,\hat{M})+\mydelta/2.25\leq d_M+\mydelta.$ Thus,
$d_M-\mydelta/2.25\leq\dout\leq d_M+\mydelta.$ That is, $|d_M-\dout|\leq \mydelta$ with probability 2/3, as required.

\subparagraph{Sample and time complexity.} The number of samples, $s$, is $O(1/\mydelta^2 \log 1/\mydelta)$. To analyze the running time, recall that $|D_\mydelta|=O(1/\mydelta)$. For each direction in $D_\mydelta$, we perform a bucket sort of all samples in expected $O(s)$ time. The remaining steps in the {\bf foreach} loop of Step~\ref{step:half-planes-foreach-dir} can also be implemented to run in $O(s)$ time. The expected running time of Algorithm~\ref{alg:half-plane} is thus $O(1/\mydelta \cdot s)=O(1/\mydelta^3\log 1/\mydelta)$.
Remark~\ref{remark:bernoulli} implies a tester with the same worst case running time.
\end{proof}


\ifnum\full=1
\begin{corollary}\label{cor:half-plane-agnostic-learner}
The class of half-plane images is properly agnostically PAC-learnable with sample complexity  $O(\frac{1}{\mydelta^2}\log\frac{1}{\mydelta})$ and time complexity $O(\frac{1}{\mydelta^{3}}\log \frac{1}{\mydelta})$ under the uniform distribution.
\end{corollary}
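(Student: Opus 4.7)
The plan is to extract a proper agnostic PAC learner directly from the analysis of Theorem~\ref{thm:half-plane-dist-appr}. I would modify Algorithm~\ref{alg:half-plane} so that, in addition to the minimum empirical distance $\dout$, it records and outputs the reference half-plane image $M^{\text{out}}\in\mathcal M_{\mydelta'}$ attaining that minimum (specified by its reference direction $\varphi\in D_{\mydelta'}$ and the integer multiple of $a$ defining its separating line). I would run the modified algorithm with a rescaled parameter $\mydelta'=\mydelta/2$ so that the two sources of error in the analysis combine to at most $\mydelta$; since $\mydelta'=\Theta(\mydelta)$, this affects sample and time complexity only by constant factors, yielding the same asymptotic bounds stated in Theorem~\ref{thm:half-plane-dist-appr}.

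The correctness proof reuses exactly the two ingredients already established. Let $d_M=\dis(M,\cP)$ and let $M^{*}$ be a half-plane image with $\dis(M,M^{*})=d_M$. First, the uniform convergence bound used in the theorem guarantees that, with probability at least $2/3$, every $M'\in\mathcal M_{\mydelta'}$ simultaneously satisfies $|\dis(M,M')-\dout(M')|\le\mydelta'/2.25$; condition on this event. Second, Lemma~\ref{lem:properties-of-reference-HP} applied to $M^{*}$ produces some $\hat M\in\mathcal M_{\mydelta'}$ with $\dis(M^{*},\hat M)\le\mydelta'/1.8$, hence $\dis(M,\hat M)\le d_M+\mydelta'/1.8$ by the triangle inequality.

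Chaining these inequalities, using that $M^{\text{out}}$ is the empirical minimizer over $\mathcal M_{\mydelta'}$, gives
$$\dis(M,M^{\text{out}})\le \dout(M^{\text{out}})+\mydelta'/2.25 \le \dout(\hat M)+\mydelta'/2.25 \le \dis(M,\hat M)+2\mydelta'/2.25 \le d_M+\mydelta'\bigl(1/1.8+2/2.25\bigr).$$
Since $1/1.8+2/2.25=5/9+8/9=13/9<3/2$, the choice $\mydelta'=\mydelta/2$ yields $\dis(M,M^{\text{out}})\le d_M+\mydelta$, which is exactly the agnostic PAC guarantee.

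I do not anticipate a real obstacle here: Algorithm~\ref{alg:half-plane} already performs the empirical risk minimization implicitly, and Lemma~\ref{lem:properties-of-reference-HP} is precisely the statement that this ERM over the reference class $\mathcal M_{\mydelta'}$ is nearly as good as ERM over all half-plane images. The only minor subtlety, compared with the distance-approximation analysis, is that uniform convergence must now be paid for twice (once to pass from $\dis(M,M^{\text{out}})$ to $\dout(M^{\text{out}})$ and once to pass from $\dout(\hat M)$ back to $\dis(M,\hat M)$), which is the reason for the rescaling of $\mydelta$. Sample and time complexity follow immediately from those of Theorem~\ref{thm:half-plane-dist-appr}.
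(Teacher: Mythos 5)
Your proposal is correct and follows the same strategy as the paper's proof: modify Algorithm~\ref{alg:half-plane} to output the empirically best reference half-plane and invoke the same two ingredients, the uniform-convergence bound over $\mathcal M_\mydelta$ and Lemma~\ref{lem:properties-of-reference-HP}. You are in fact slightly more careful than the paper's one-line justification: you correctly observe that the learner must pay the uniform-convergence deviation twice (once to pass from the true distance of the empirical minimizer to its empirical distance, and once to pass from $\dout(\hat M)$ back to $\dis(M,\hat M)$), so the raw chain gives only $d_M+\tfrac{13}{9}\mydelta$ rather than $d_M+\mydelta$; your rescaling $\mydelta'=\mydelta/2$ closes that constant-factor gap at no asymptotic cost.
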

\begin{proof}
We can modify Algorithm~\ref{alg:half-plane} to output, along with $\dout=\min_{M'\in M_\mydelta}\dout(M')$, a reference half-plane $\hat{M}$ that minimizes it. By the analysis of Algorithm~\ref{alg:half-plane}, with probability at least 2/3, the output $\hat{M}$ satisfies $\dis(M,\hat{M})\leq d_M+\mydelta.$
\end{proof}
\fi

\section{Distance Approximation to the Nearest Convex Image}\label{sec:convexity}
An image is \emph{convex} if the convex hull of all black pixels contains only black pixels.
%
\begin{theorem}\label{thm:convexity_dist_appr}
For $\eps\in(n^{-1/6},1/4)$, there is a uniform $\mydelta$-additive distance approximation algorithm for convexity with sample complexity $O(\frac{1}{\mydelta^{2}}\log \frac 1 \mydelta)$ and running time $O(\frac 1 {\mydelta^8})$.
\end{theorem}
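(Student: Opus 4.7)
The plan is to follow the same high-level template as the half-plane algorithm: construct a set $\mathcal{P}_\mydelta$ of reference convex shapes such that every convex image is close to some $P\in\mathcal{P}_\mydelta$, sample $s=\Theta(\tfrac{1}{\mydelta^2}\log\tfrac 1\mydelta)$ uniformly random pixels, and output $\dout=\min_{P\in\mathcal{P}_\mydelta}\dout(P)$, where $\dout(P)$ is the fraction of samples on which $M$ and $P$ disagree. The output is then within $\mydelta$ of $\dis(M,\text{Convex})$ provided (a) $\mathcal{P}_\mydelta$ has an element $\mydelta$-close (in relative distance) to every convex image, and (b) empirical distances to all elements of $\mathcal{P}_\mydelta$ concentrate around their true values simultaneously. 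The bound $\dout\leq 1/2$ is enforced by truncation.

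The construction of $\mathcal{P}_\mydelta$ is the geometric heart of the argument. I would pick $\Theta(1/\mydelta)$ reference directions $\varphi\in D_\mydelta$ (as in Definition~\ref{def:reference-half-planes}) and, for each, a set of $O(1/\mydelta)$ reference offsets, yielding a grid of reference half-planes $\mathcal{H}_\mydelta$ just as in Section~\ref{sec:half-plane}. A reference polygon is an intersection of such half-planes whose bounding edges use directions from $D_\mydelta$ in cyclic order. The covering lemma---every convex $M$ is at distance at most $\mydelta$ from some $P\in\mathcal{P}_\mydelta$---follows by replacing each supporting line of $M$ by the nearest reference supporting line and bounding the symmetric difference with the same triangle/quadrilateral argument used in the proof of Claim~\ref{cl:properties-of-reference-HP}, then passing from area of the symmetric difference to pixel count via Pick's theorem (Theorem~\ref{thm:pick}) and Proposition~\ref{prop:area-pixel}. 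The perimeter of the symmetric difference stays $O(n)$ because the number of edges is $O(1/\mydelta^{2/3})$, matching the lattice-convex-polygon vertex bound.

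For concentration, a naive union bound over $\mathcal{P}_\mydelta$ would cost $\log|\mathcal{P}_\mydelta|=\Theta(1/\mydelta^{2/3}\cdot\log(1/\mydelta))$ factors, which is too expensive. Instead, I would decompose every reference polygon $P$ into the $O(1/\mydelta^{2/3})$ corner cells it defines against the reference grid, charge the empirical deviation of $\dout(P)-\dis(M,P)$ to the sum of deviations inside these cells, and apply a Bernstein inequality per cell. Because the empirical deviation on a cell of area $A$ is (with high probability) proportional to $\sqrt{A/n^2}/\sqrt{s}$, the total deviation is controlled by the sum of $\sqrt{A_i/n^2}$ over cells traversed by the boundary. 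This sum, in turn, is $O(1)$ for any convex boundary crossing an $n\times n$ lattice aligned to the reference grid, which is precisely the property the construction is engineered to guarantee. A single $O(\log(1/\mydelta))$ factor then suffices in $s$.

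The algorithmic step is dynamic programming over the reference grid: I process reference directions in cyclic order around the image and, for each pair (previous corner cell, current corner cell) chosen from the $O(1/\mydelta^2)$ corner cells, I compute the minimum empirical disagreement along the edge connecting them using pre-bucketed sample counts in the reference cells (analogous to Steps~3--5 of Algorithm~\ref{alg:half-plane}). The DP table has $O(1/\mydelta)$ directions times $O(1/\mydelta^2)$ cells times $O(1/\mydelta^2)$ predecessors, with $O(s)$ work per transition, for a total of $O(1/\mydelta^8)$. The main obstacle is the geometric design: proving that the cell decomposition simultaneously (i) covers every convex shape up to $\mydelta$ relative distance, (ii) keeps the sum of square roots of cell areas bounded so that Bernstein gives $\mydelta$ error with only logarithmic oversampling, and (iii) admits a polynomial-size DP over convex boundary chains. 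Once these three constraints are met consistently, the rest of the analysis parallels Theorem~\ref{thm:half-plane-dist-appr}.
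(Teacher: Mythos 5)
Your high-level template is right and does match the paper's: reference polygons, uniform convergence over the reference class via Bernstein with the deviation on a region scaling as $\sqrt{A/n^2}$, and a dynamic program to find the nearest reference polygon. You also correctly identify the crux, namely that the total empirical deviation is governed by $\sum_i\sqrt{A_i/n^2}$ over the regions in a single polygon's decomposition, and that the construction must be engineered so this sum is $O(1)$.

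However, your DP is underspecified in a way that hides a real gap, and the paper's construction is built precisely to fill it. You describe transitions between ``corner cells'' that pay only the ``empirical disagreement along the edge connecting them.'' But the error of a candidate polygon is not a boundary quantity: it is (white samples in the interior) plus (black samples in the exterior). A DP over directions in cyclic order with edge-local costs does not see the interior or exterior at all, so it cannot compute $\dis(M,P)$. To make a wedge-style decomposition work you would need to fix an interior apex $o$ (over which you must also iterate, $\Theta(1/\mydelta^2)$ choices) and charge each transition the error of an entire wedge from $o$; you would then need a separate argument that the wedges are the right regions over which to apply the $\sum\sqrt{A}$ bound. None of this appears in your sketch. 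The paper avoids the issue with a genuinely different recursion: it starts from a reference box, excises an inscribed quadrilateral (colored black) leaving four gray triangles, and recursively subdivides each gray triangle via a Base Change followed by either a Subdivision or an End-of-Processing step (Definition~\ref{def:reference-polygons}). Each recursive sub-problem is a triangle with locally computable error, which is what makes the DP sound; and the explicit subdivision rule is what makes Lemma~\ref{lem:sum-of-roots-of-areas} (the bound $\sum_{T\in\Tend}\sqrt{A(T)}<11n$) provable by a clean geometric recursion (Claim~\ref{claim:sum-of-roots-of-areas}). You assert the analogous bound for boundary cells without proof; it is not obviously true for an arbitrary cell decomposition and is in fact the place the paper spends the most effort.

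Two smaller points. The ``$O(1/\mydelta^{2/3})$ corner cells'' figure is unmotivated: that exponent is the lattice-convex-polygon vertex bound $\Theta(n^{2/3})$, which depends on $n$, not on $\mydelta$; a reference polygon as you define it (intersection of half-planes in the $\Theta(1/\mydelta)$ reference directions) has $\Theta(1/\mydelta)$ edges. And your running-time count multiplies out to $O(\mydelta^{-7}\log\mydelta^{-1})$, not $O(\mydelta^{-8})$, which suggests the transition cost you have in mind is not the $O(s)$ you wrote; the paper gets $O(\mydelta^{-8})$ from a careful accounting of the DP table for $\Best$/$\BestFixed$ together with an amortized $O(1)$ per candidate after bucket-sorting samples.
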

\begin{proof}
The starting point for our algorithm for approximating the distance to convexity (Algorithm~\ref{alg:convexity-dist-approximation}) is similar to that of Algorithm~\ref{alg:half-plane} that approximates the distance to a nearest half-plane. We define a small set $P_\mydelta$ of reference polygons. Algorithm~\ref{alg:convexity-dist-approximation} 
implicitly learns a nearby reference polygon and outputs the empirical distance from the image to that polygon. The key features of $P_\mydelta$ is that (1) every convex image has a nearby polygon in $P_\mydelta$, and (2) one can use dynamic programming (DP) to quickly compute the smallest empirical distance to a polygon in $P_\mydelta$.

We start by defining reference directions, lines, points, and line-point pairs that are later used to specify our DP instances.  
Reference directions are almost the same as in Definition~\ref{def:reference-half-planes}.

\begin{definition}[Reference lines, line-point pairs]\label{def:reference-lines}
Fix $\bigdelta=\mydelta/\con$.  The set of {\em reference directions} is $D_\mydelta=\{\pi/2\}\cup\{i\bigdelta:~i\in [0,\lceil 2\pi/\bigdelta\rceil)\}.$
For every $\varphi\in D_\mydelta$, define the set of {\em reference lines} $L_\varphi = \{\ell : \ell \text{ passes through the image and satisfies the equation }x\cos\varphi +y\sin\varphi = c, $\ where $c$ is an integer multiple of  $\bigdelta n$$\}$.
For each reference line, the set of {\em reference points on $\ell$} contains points w.r.t.\ $\ell$, which are inside $[0,n-1]^2$, spaced exactly $\bigdelta n$  apart (it does not matter how the initial point is picked). A {\em line-point pair} is a pair $(\ell,b),$ where $\ell$ is a 
reference line and $b$ is a reference point w.r.t.\ $\ell$. (Note that there could be reference points on $\ell$ that were defined w.r.t.\ some other reference line. This is why we say ``a reference point w.r.t.\ $\ell$'', and not ``a reference point on $\ell$''.)
\end{definition}

Roughly speaking, a reference polygon is a polygon whose vertices are defined by line-point pairs. There are additional restrictions that stem from the fact that we need to be able to efficiently find a nearby reference polygon for an input image. The actual definition specifies which actions we can take while constructing a reference polygon. Reference polygons are built starting from reference boxes, which are defined next.

\ifnum\full=1
\begin{figure}[ht]
\begin{minipage}[b]{0.45\linewidth}
\centering
\includegraphics[width=\linewidth]{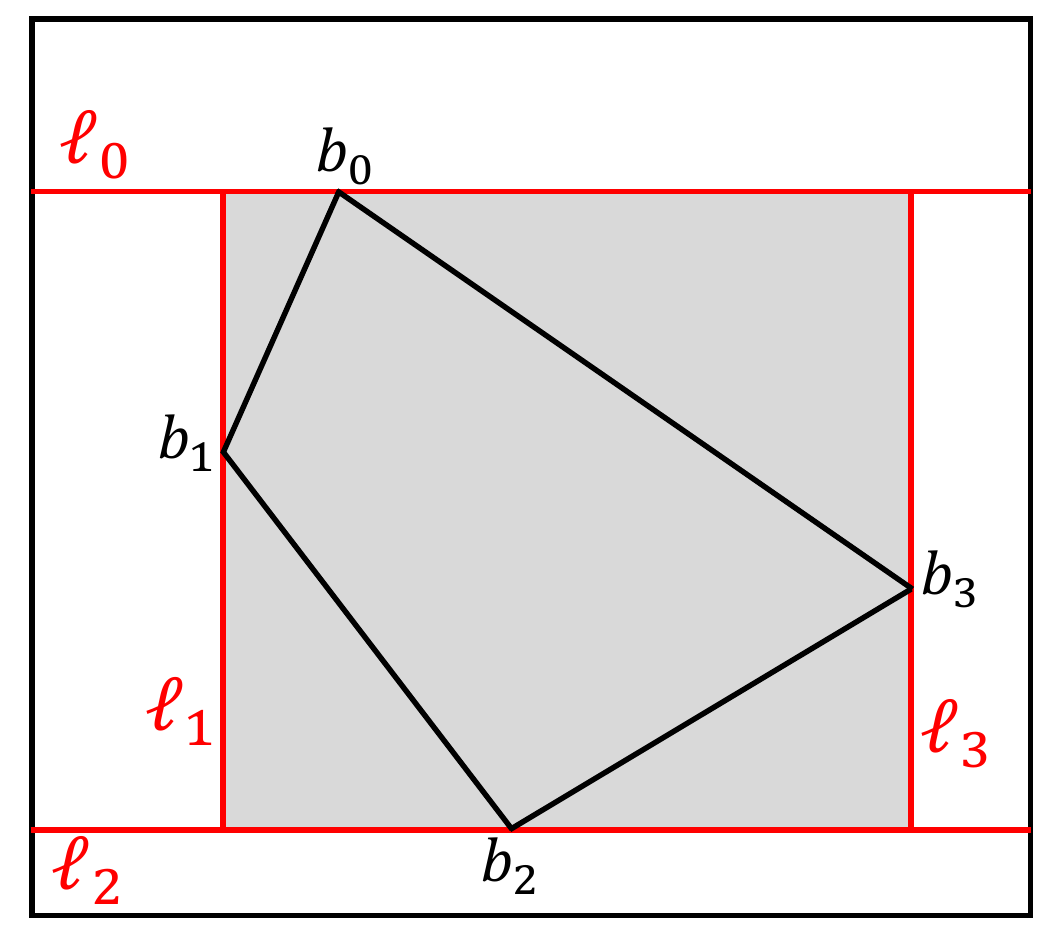}
\caption{ A reference box.}
\label{fig:bounding-box}
\end{minipage}
\hspace{0.1\linewidth}
\begin{minipage}[b]{0.45\linewidth}
\centering
\includegraphics[width=\linewidth]{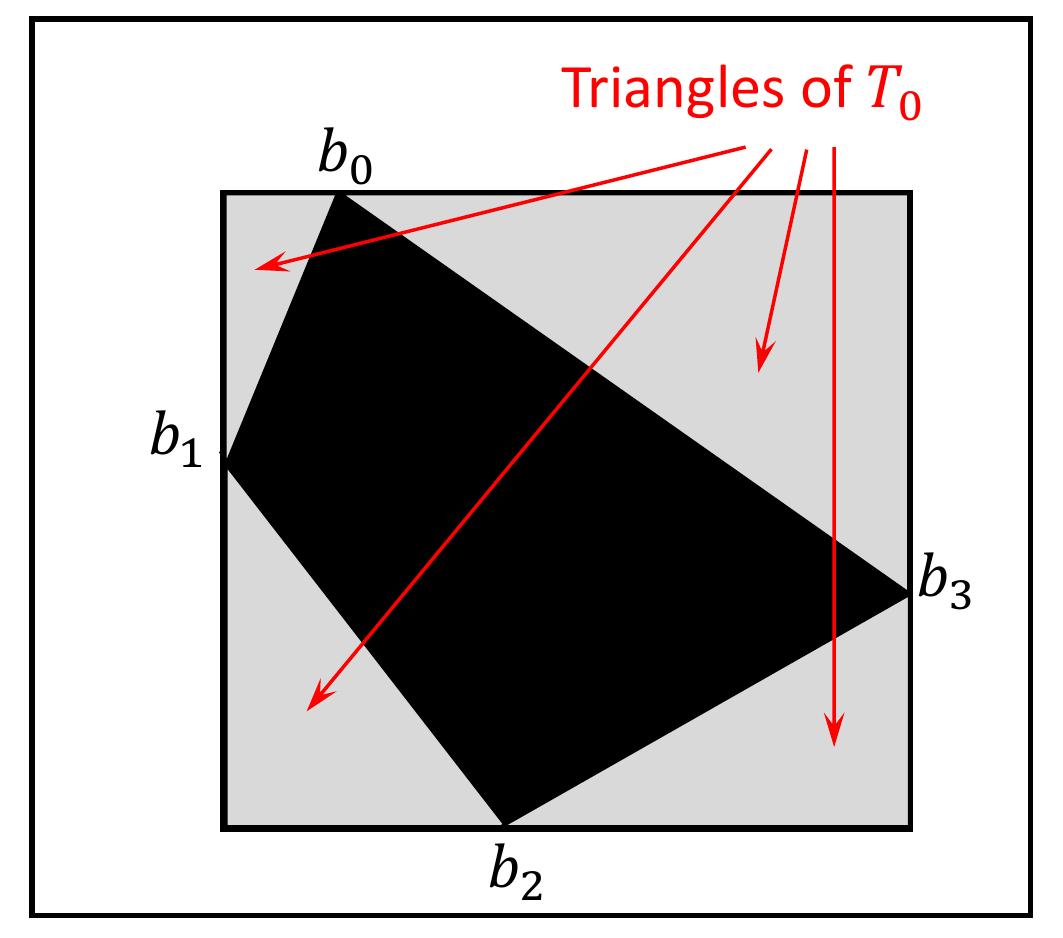}
\caption{ Triangles of the set $T_0$.}
\label{fig:triangles-T0}
\end{minipage}
\end{figure}
\fi

\begin{definition}[Reference box]\label{def:reference-box}
A {\em reference box} is a set of four line-point pairs $(\ell_i,b_i)$ for $i=0,1,2,3$, where $\ell_0,\ell_2$ are distinct horizontal lines, such that $\ell_0$ is above $\ell_2$, and $(\ell_1,\ell_3)$ are distinct vertical lines, such that $\ell_1$ is to the left of $\ell_3$. The reference box defines a vertex set $B_0=\{b_0,b_1,b_2,b_3\}$ and a triangle set $\Tstart,$ formed by removing the quadrilateral $b_0b_1b_2b_3$ from the rectangle delineated by the lines $\ell_0,\ell_1,\ell_2,\ell_3$\ifnum\full=1.
See Figures~\ref{fig:bounding-box}-~\ref{fig:triangles-T0}.
\else
.
\fi
\end{definition}
\ifnum\full=1
Note that line-point pairs do not depend on the input.
\else

\fi
Intuitively, by picking a reference box, we decide to keep the area inside the quadrilateral $b_0b_1b_2b_3$ black, the area outside the rectangle formed by $\ell_0,\ell_1,\ell_2,\ell_3$ white, and the triangles in $\Tstart$ gray, i.e., undecided for now.

\begin{definition}
For points $x,y$, let $\ell(x,y)$ denote the line that passes through $x$ and $y$. Let $xy$ denote the line segment between $x$ and $y$\ifnum\full=0 . \else\ and $|xy|$ denote the length of $xy$.\fi
\end{definition}

Reference polygons are defined next.  Intuitively, to obtain a reference polygon, we keep subdividing ``gray'' triangles in $\Tstart$ into smaller triangles and deciding to color the smaller triangles black or white or keep them gray (i.e., undecided for now). We also allow ``cutting off'' a quadrilateral that is adjacent to black and coloring it black (a.k.a.\ ``the base change operation'').
\ifnum\full=1
The main recoloring operation from Definition~\ref{def:reference-polygons} is illustrated in Figure~\ref{fig:tolerant-reducing-area}.
\fi
Even though the definition of reference polygons is somewhat technical, the readers can check their understanding of this concept by following Algorithm~\ref{alg:convexity-dist-approximation}, as it chooses the best reference polygon to approximate the input image.

\begin{definition}[Reference polygon]\label{def:reference-polygons}
A {\em reference polygon} is an image of a polygon $\hull(B),$ where the set
$B$ can be obtained from a reference box with a vertex set $B_0$ and a triangle set $\Tstart$ by the following recursive process.
Initially, $\Tend=\emptyset$ and $B=B_0$. 
While $\Tstart\neq\emptyset,$ move a triangle $T$ from $\Tstart$ to $\Tend$ and
perform the following steps:

\begin{enumerate}
\item\label{item:ref-poly-definition} {\sf (Base Change).} Let $T=\bigtriangleup b'b''v,$ where
 $b',b''\in B.$ Select reference point $b'_0$ on $b'v$ w.r.t.\ line $\ell(b',v)$, and reference point $b''_0$ on $b''v$ w.r.t.\ line $\ell(b'',v)$. Add $b'_0,b''_0$ to $B$. (This corresponds to coloring the quadrilateral $b'b'_0b''_0b''$ black.) Let $h$ be the height of $\bigtriangleup b'_0b''_0v$ w.r.t.\ the base $b'_0b''_0.$


\item\label{item:ref-poly-definition-tall} {\sf (Subdivision Step)} If $h>6\bigdelta n$, choose whether to proceed with this step or go to Step~\ref{item:ref-poly-definition-short} (both choices correspond to a legal reference polygon); otherwise, go to Step~\ref{item:ref-poly-definition-short}. Let $\varphi$ be the angle between $\ell(b'_0,b''_0)$ and the $x$-axis, and $\hat\varphi\in D_\mydelta$ be such that $|\hat\varphi-\varphi|\leq {\bigdelta/2}$. Select a reference line-point pair $(\ell,b),$ where the line $\ell\in L_{\hat{\varphi}}$ crosses $b'_0v$ and $b''_0v$, whereas $b$ is in the triangle $\bigtriangleup b'_0b''_0v$. Let $v'$ (resp., $v''$) be the point of intersection of $\ell$ and $b'_0v$ (resp., $\ell$ and $b''_0v$). Let $T'=\bigtriangleup b_0'bv'$, $T''=\bigtriangleup b_0''bv''$%
\ifnum\full=1
, as shown on Figure~\ref{fig:tolerant-reducing-area}%
\fi
. Add $b$ to $B$ and triangles $T',T''$ to $\Tstart$. (This represents coloring $\bigtriangleup b'_0b''_0b$ black and keeping $T'$ and $T''$ gray.)

\item\label{item:ref-poly-definition-short} {\sf (End of Processing)} Do nothing. (This represents coloring $\bigtriangleup b'_0b''_0v$ white).
\end{enumerate}
\end{definition}
\ifnum\full=1
\begin{figure}[ht]
\begin{minipage}[b]{0.55\linewidth}
\centering
\includegraphics[width=\linewidth]{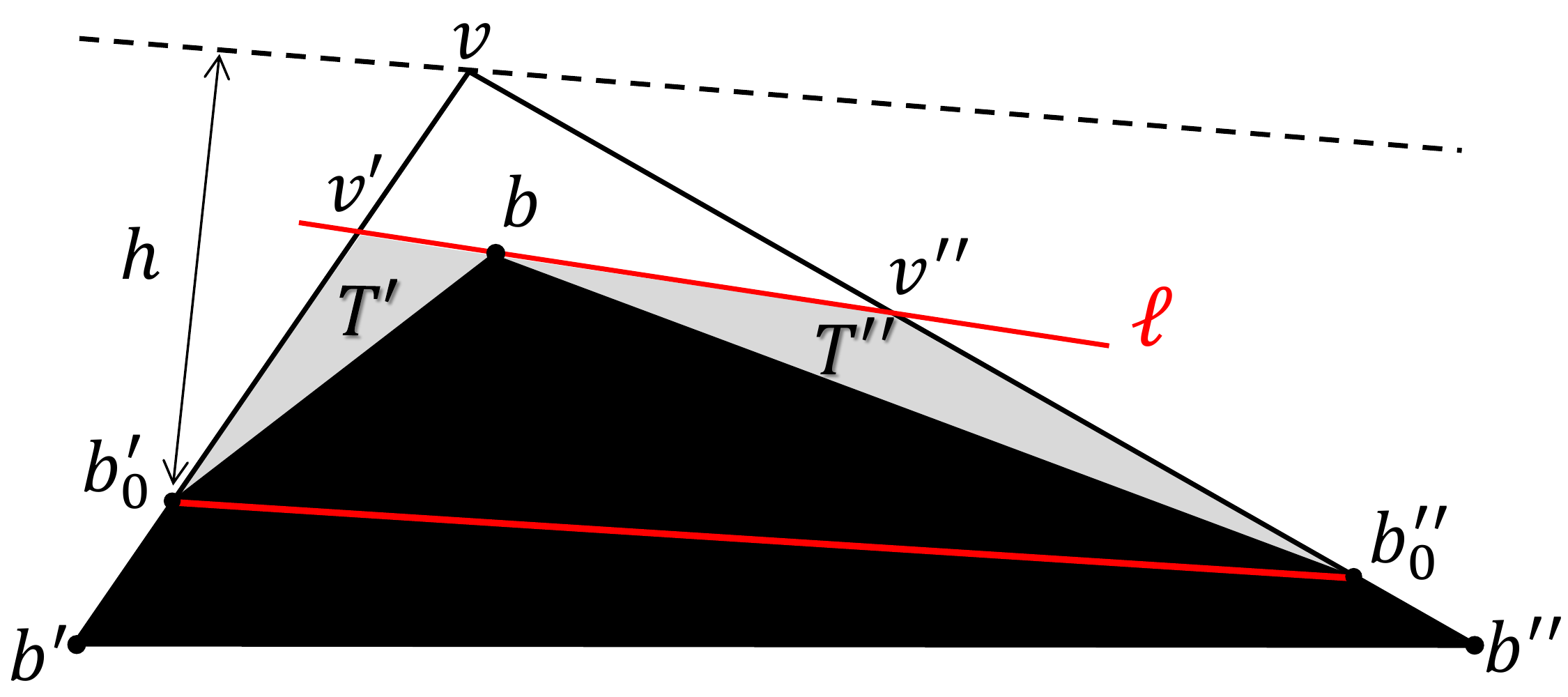}
\caption{ An illustration to Definition~\ref{def:reference-polygons}: Triangle $\bigtriangleup b'b''v$.}
\label{fig:tolerant-reducing-area}
\end{minipage}
\hspace{0.01\linewidth}
\begin{minipage}[b]{0.48\linewidth}
\centering
\includegraphics[width=\linewidth]{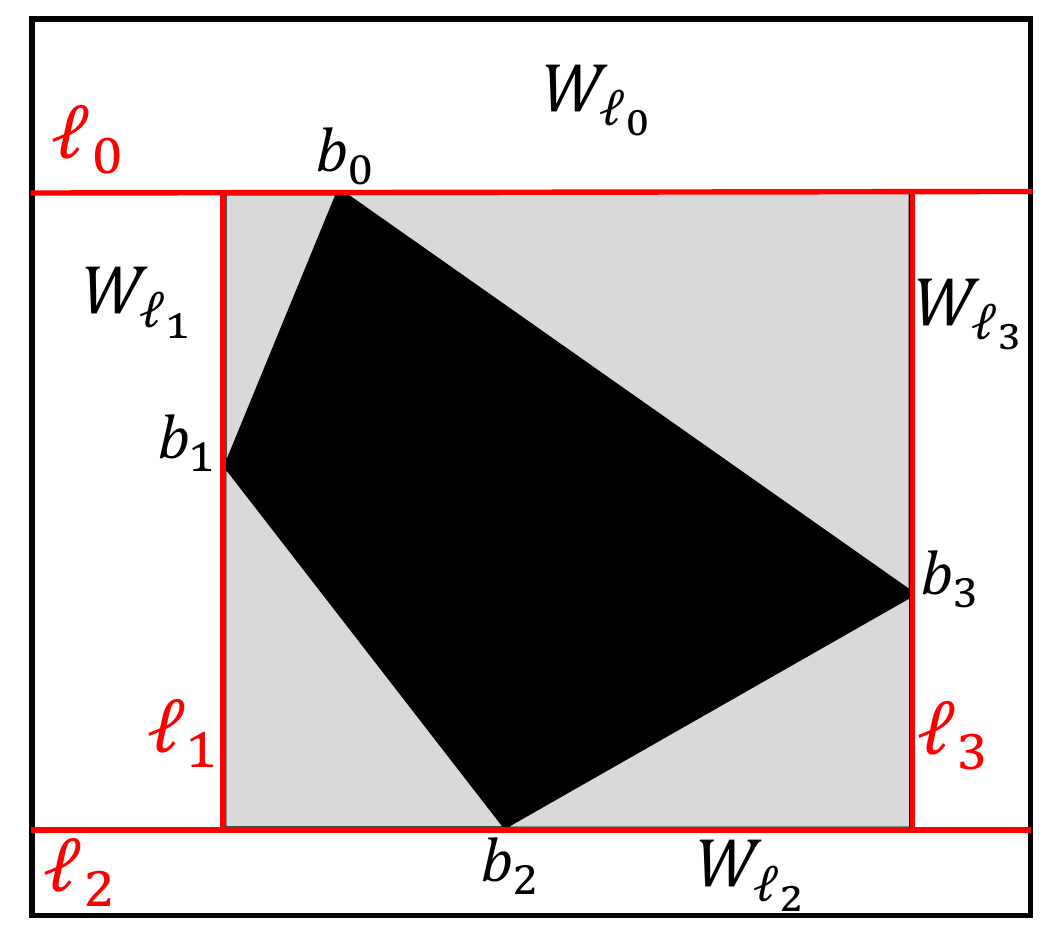}
\caption{ Regions $W_{\ell_0}$, $W_{\ell_1}$, $W_{\ell_2}$, and $W_{\ell_3}$.}
\label{fig:w-regions}
\end{minipage}
\end{figure}
\fi

By Remark~\ref{remark:bernoulli}, to prove Theorem~\ref{thm:convexity_dist_appr}, it suffices to design a Bernoulli tester that takes $s=O(\frac 1 {\mydelta^2}\log \frac 1\mydelta)$ samples in expectation and runs in time $O(\frac 1 {\mydelta^8})$. Our Bernoulli tester is Algorithm~\ref{alg:convexity-dist-approximation}.
In Algorithm~\ref{alg:convexity-dist-approximation}, we use the following notation for the (relative) empirical error with respect to an input image $M$, a set of sampled pixels $S,$ and the size parameter $s$. For an image $M'$, let
$\dout(M')=\frac 1 s \cdot |\{u\in S : M[u]\neq M'[u]\}|.$ For every region $R\subseteq\domain$, we let
$\dout_+(R)=\frac 1 s \cdot |\{u\in S\cap R : M[u]=0\}|,$ and
$\dout_-(R)=\frac 1 s \cdot |\{u\in S\cap R : M[u]=1\}|,$ i.e., the empirical error if we make $R$ black/white, respectively.

\begin{algorithm}
\caption{Bernoulli approximation algorithm for distance to convexity.}
\label{alg:convexity-dist-approximation}
\SetKwInOut{Input}{input}\SetKwInOut{Output}{output}
\Input{parameters $n\in\mathbb{N}$, $\mydelta\in(0,1/4)$; Bernoulli access to an $n\times n$ binary matrix $M$.}
\DontPrintSemicolon
\BlankLine
\nl\label{step:convexity-approx:sample}
Set $s=\Theta(\frac 1 {\mydelta^2}\log \frac 1\mydelta)$.
Include each image pixel in the sample $S$ w.p.\ $p=s/n^2$.\;
\tcp{Run the algorithm to find $\dout$, the smallest fraction of samples misclassified by a reference polygon in $P_\mydelta$. A dynamic programming implementation of the algorithm is given in \ifnum\full=0 Section~4.3 of the {\color{black} full version}.\else Section~\ref{sec:convexity-dist-approx-wrap-up}.\fi }
\nl
Let $W_{\ell_0}$ (resp., $W_{\ell_2}$) be the set of pixels of the image $M$ that lie either above $\ell_0$ or to the left of $b_0$ on $\ell_0$ (resp., either below $\ell_2$ or to the left of $b_2$ on $\ell_2$). Let $W_{\ell_1}$ (resp., $W_{\ell_3}$) be the set of pixels of $M-W_{\ell_0}-W_{\ell_2}$ to the left of $\ell_1$ (resp., to the right of $\ell_3$). \ifnum\full=1 (See Figure~\ref{fig:w-regions})\fi

\nl
Set $\dout=1$.
\;
\nl\label{step:polygons-foreach-top-bottom}
        \ForAll {line-point pairs $(\ell_0,b_0),(\ell_2,b_2)$, where $\ell_0,\ell_2$ are horizontal lines} \do{
\nl\quad\label{step:errle-init}     Set $\errle=1.$ \tcp*{\parbox[t]{4in}{The variable to compute the best error for the region to the left of $b_0b_2$, between $\ell_0$ and $\ell_2$.\;}}
\nl\quad\label{step:polygons-foreach-left}
\ForEach {line-point pair $(\ell_1,b_1)$, where $\ell_1$ is a vertical line} \do{
\nl\quad\quad Let $v_0$ (resp., $v_2$) be the point where $\ell_1$ intersects $\ell_0$ (resp., $\ell_1$ intersects $\ell_2$).\;
\nl\quad\quad\label{step:errle-compute}
               $\errle=\min (\errle,
                \dout_-(W_{\ell_1})+\dout_+(\bigtriangleup b_0b_1b_2)
                +\Best(\bigtriangleup b_0b_1v_0)+\Best(\bigtriangleup b_1b_2v_2))$\;
        }
\nl\quad        Similarly to Steps~\ref{step:errle-init}--\ref{step:errle-compute}, compute  $\errri$.\tcp*{\parbox[t]{10in}{The best error for the region to the right of $b_0b_2$, between $\ell_0$ and $\ell_2$.\;}}

\nl\quad    Compute $\dout=\min(\dout,
                    \dout_-(W_{\ell_0}\cup W_{\ell_2})
                    +\errle+\errri).$\;

    }
\nl         \Return $\dout$.
\end{algorithm}

\ifnum\full=1
Subroutine \Best, presented next,
\else
Subroutine \Best
\fi
chooses the option with the smallest empirical relative error among those given in Definition~\ref{def:reference-polygons}, items~\ref{item:ref-poly-definition}-\ref{item:ref-poly-definition-short}.
\ifnum\full=0
Its pseudocode is in the {\color{black} full version}.
\else
\begin{algorithm}
\caption{Subroutine \Best used in Algorithm~\ref{alg:convexity-dist-approximation}.}
\label{alg:subroutine-best}
\SetKwInOut{Input}{input}\SetKwInOut{Output}{output}
\Input{triangle $\bigtriangleup b'b''v$}
\DontPrintSemicolon
\BlankLine
\tcp{Use dynamic programming (see Section~\ref{sec:convexity-dist-approx-wrap-up} for implementation details).}
\nl Set $d^*=1.$

\nl\quad\label{step:forall-best-type2}\ForAll {reference points $b'_0$ and $b''_0$ on the sides $b'v$ and $b''v,$ respectively,}\do{

\nl\quad\quad Compute $d^*=\min(d^*, d_+(b'b''b''_0b'_0)+\BestFixed(b'_0b''_0v))$}

\nl\quad \Return $d^*$

\end{algorithm}

\begin{algorithm}\label{alg:subroutine-best-fixed-base}
\caption{Subroutine \BestFixed used in Algorithm~\ref{alg:subroutine-best}.}
\label{alg:subroutine-best-fixed-base}
\SetKwInOut{Input}{input}\SetKwInOut{Output}{output}
\Input{triangle $\bigtriangleup b'_0b''_0v$}
\DontPrintSemicolon
\BlankLine
\nl Set $d^*=d_-(\bigtriangleup b'_0b''_0v)$\;
\nl \If {the height of $\bigtriangleup b_1b_2v$ w.r.t.\ the base $b_1b_2$ is more than $6\bigdelta n$}{


\nl\quad\label{step:foreach-best-type1}\ForEach {line-point pair $(\ell,b)$, where $\ell\in L_{\hat\varphi}$ (see Definition~\ref{def:reference-polygons}, item 2), $b\in\bigtriangleup bb'_0b''_0$,
line $\ell$ intersects the side $b'_0v$ at some point $v'$ and the side $b''_0v$ at some point $v''$, resp.
} \do{

\nl\quad\quad Compute $d^*=\min(d^*, \dout_-(\bigtriangleup v'v''v)+\dout_+(\bigtriangleup b'_0b''_0b)
                +\Best(\bigtriangleup b'_0bv')+\Best(\bigtriangleup bb''_0v''))$
}}

\nl \Return $d^*$

\end{algorithm}
\fi

\ifnum\full=0
Our set of reference polygons has two critical features. First, for each convex image there is a nearby reference polygon. It turns out that the empirical error for a region is proportional to the square root of its area. The second key feature of our reference polygons is that, for each of them, the set of considered triangles, $\Tend$, has small $\sum_{T\in\Tend} \sqrt{A(T)},$ where $A(T)$ denotes the area of triangle $T$. The proofs of both features, as well as the analysis of the empirical error, are quite technical and appear in the {\color{black} full version}.\end{proof}

Here, we state and partially prove a lemma that puts together different parts of the analysis. It makes it clear why the empirical error of each region is proportional to the square root of its area which is, as explained in Footnote~\ref{fn:Pick}, a proxy for the number of pixels in it.

\else
Our set of reference polygons has two critical features. First, for each convex image there is a nearby reference polygon. This is proved in Section~\ref{sec:existence-of-nearby-ref-poly}. It turns out that the empirical error for a region is proportional to the square root of its area. The second key feature of our reference polygons is that, for each of them, the set of considered triangles, $\Tend$, has small $\sum_{T\in\Tend} \sqrt{A(T)}.$ The proof of this fact, as well as the analysis of the empirical error appears in Section~\ref{sec:convexity-dist-approx-error-analysis}. Finally, Section~\ref{sec:convexity-dist-approx-wrap-up} completes the analysis of the algorithm, gives details of its implementation and presents the corollary about agnostic PAC learning of convex objects.

\subsection{Existence of a nearby reference polygon}\label{sec:existence-of-nearby-ref-poly}
\begin{lemma}\label{lem:nearby-reference-polygon}
 For every convex image $M$, there exists $M'\in P_\mydelta$ such that $\dis(M,M')\leq\mydelta/6$.
\end{lemma}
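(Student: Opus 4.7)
The plan is to explicitly construct a reference polygon $M' \in P_\mydelta$ whose underlying region $\hull(B)$ approximates the convex hull $C$ of the black pixels of $M$. Since $M$ is convex, $M$ is the discretization of $C$, so it suffices to bound the area of the symmetric difference $C \triangle \hull(B)$ (pixel-level discretization is of lower order and can be absorbed by the constants).

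First I would pick an initial reference box so that $b_0b_1b_2b_3 \subseteq C$ and $C$ lies inside the rectangle delineated by $\ell_0,\ldots,\ell_3$, with each $b_i$ within distance $O(\bigdelta n)$ of the corresponding extreme (top, left, bottom, right) point of $C$. This is possible because reference lines and reference points lie on grids with spacing $\bigdelta n$. Next, for each of the four initial gray triangles $T = \triangle b'b''v \in \Tstart$, I would recursively apply Definition~\ref{def:reference-polygons} to approximate the arc of $\partial C$ inside $T$. At each Base Change, I pick $b'_0,b''_0$ as the reference points on $b'v,b''v$ just inside $C$, so that the black quadrilateral $b'b'_0b''_0b''$ stays inside $C$. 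At each Subdivision Step (while the triangle's height exceeds $6\bigdelta n$), I pick $\ell \in L_{\hat\varphi}$ close to a supporting line of $C$ in the direction $\hat\varphi$ (which lies within $\bigdelta/2$ of the chord's orientation), and $b$ as the reference point on $\ell$ nearest to the contact point of that supporting line with $\partial C$. Once the current triangle's height falls below $6\bigdelta n$, I invoke End-of-Processing, leaving that leaf triangle white.

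To bound $|C \triangle \hull(B)|$, I would sum three contributions: (i) the box setup, $O(\bigdelta n^2)$ from four strips of width $O(\bigdelta n)$ and length $O(n)$ between $\partial C$ and the bounding rectangle; (ii) the wedges between $\partial C$ and the chords added during Base Changes—each wedge sits between a chord whose endpoints are within $O(\bigdelta n)$ of $\partial C$ and the arc of $\partial C$ it spans, and summing over the $O(1/\bigdelta)$ wedges per initial corner gives $O(1/\bigdelta)\cdot O(\bigdelta^2 n^2) = O(\bigdelta n^2)$; and (iii) the leaf triangles colored white, each of area $O(\bigdelta^2 n^2)$ with $O(1/\bigdelta)$ leaves per initial corner, again totaling $O(\bigdelta n^2)$. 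With $\bigdelta = \mydelta/\con = \mydelta/144$, these three contributions sum to at most $\mydelta n^2/6$, which gives $\dis(M,M') \leq \mydelta/6$.

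The main obstacle I anticipate is item (ii): rigorously verifying that at every Subdivision Step the chosen reference line $\ell$ and reference point $b$ actually support $\partial C$ within the claimed $O(\bigdelta n)$ tightness, and that the heights of the child triangles shrink fast enough to ensure both $O(1/\bigdelta)$ Subdivision Steps per corner and the claimed wedge total. The interplay between the reference-direction spacing $\bigdelta$, the reference-point spacing $\bigdelta n$, the angular error $\bigdelta/2$ introduced per subdivision, and the $6\bigdelta n$ termination threshold has to be tracked carefully to yield the constant $1/6$, which is what drives the choice of $\con = 144$.
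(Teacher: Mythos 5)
Your high-level plan (construct a tight reference box, recursively subdivide the four gray corner triangles with choices anchored to $\partial C$, and bound the three sources of error) matches the paper's overall strategy, but the quantitative bookkeeping in items (ii) and (iii) has a genuine gap. You assert that each corner contributes only $O(1/\bigdelta)$ subdivision wedges and $O(1/\bigdelta)$ leaf triangles, each of area $O(\bigdelta^2 n^2)$. Neither count is justified, and both appear to be false. The subdivision produces a binary tree whose depth is $\Theta(\log(1/\bigdelta))$ (areas shrink by at most a constant factor per step, stopping only when the triangle height is below $6\bigdelta n$), so the number of leaves per corner can be $\mathrm{poly}(1/\bigdelta)$, not $O(1/\bigdelta)$. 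Moreover, the per-region area bound is wrong: a leaf triangle or wedge is not uniformly $O(\bigdelta^2 n^2)$; its area scales with the length of the underlying chord of $\partial C$. A long chord with endpoints each within $O(\bigdelta n)$ of $\partial C$ can cap a region of area $\Theta(|{\rm chord}|\cdot\bigdelta n)\gg\bigdelta^2 n^2$. With $\mathrm{poly}(1/\bigdelta)$ regions of varying size, your product bound $O(1/\bigdelta)\cdot O(\bigdelta^2 n^2)$ does not control the total.

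The paper avoids both problems by bounding the error of each wedge/leaf by $O(|{\rm chord}|\cdot\bigdelta n)$ rather than $O(\bigdelta^2 n^2)$, and then observing that the chords introduced across all subdivision and end-of-processing steps are non-overlapping edges of a convex polygon inscribed in the $n\times n$ image, so $\sum|{\rm chord}|\le 4n$. This gives a total of $O(\bigdelta n^2)$ \emph{independent of the number of regions}. You also wave off the discretization error, but the paper has to control the $+O(1)$-per-region terms: these sum to $\Theta(|\Tfin|)$, and bounding $|\Tfin|\le n$ requires the depth argument together with the hypothesis $\eps>n^{-1/6}$ (so that $2n\le\bigdelta n^2$). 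Without the perimeter-sum idea and the explicit $|\Tfin|$ bound, the constant $1/6$ does not follow.
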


\begin{proof}
Consider a convex image $M$. We will show how to construct a nearby reference polygon $M'$ using the recursive process in Definition~\ref{def:reference-polygons}. First, we obtain a reference box (see Definition~\ref{def:reference-box}) for $M$ as follows. Let $(\ell_0,b_0)$ be a line-point pair, where $b_0$ is black in $M$ and $\ell_0$ is the topmost horizontal line that contains such a reference point. Similarly, define $(\ell_2,b_2)$, replacing ``topmost'' with ``bottommost''. Analogously, define the two line-point pairs $(\ell_1,b_1)$, $(\ell_3,b_3)$ with vertical lines. The four line-point pairs $(\ell_i,b_i)$ for $i\in\integerset{4}$ define the reference box for $M$, as shown in Figure~\ref{fig:reference-box}.

\begin{figure}[ht]
\centering
\includegraphics[width=0.5\linewidth]{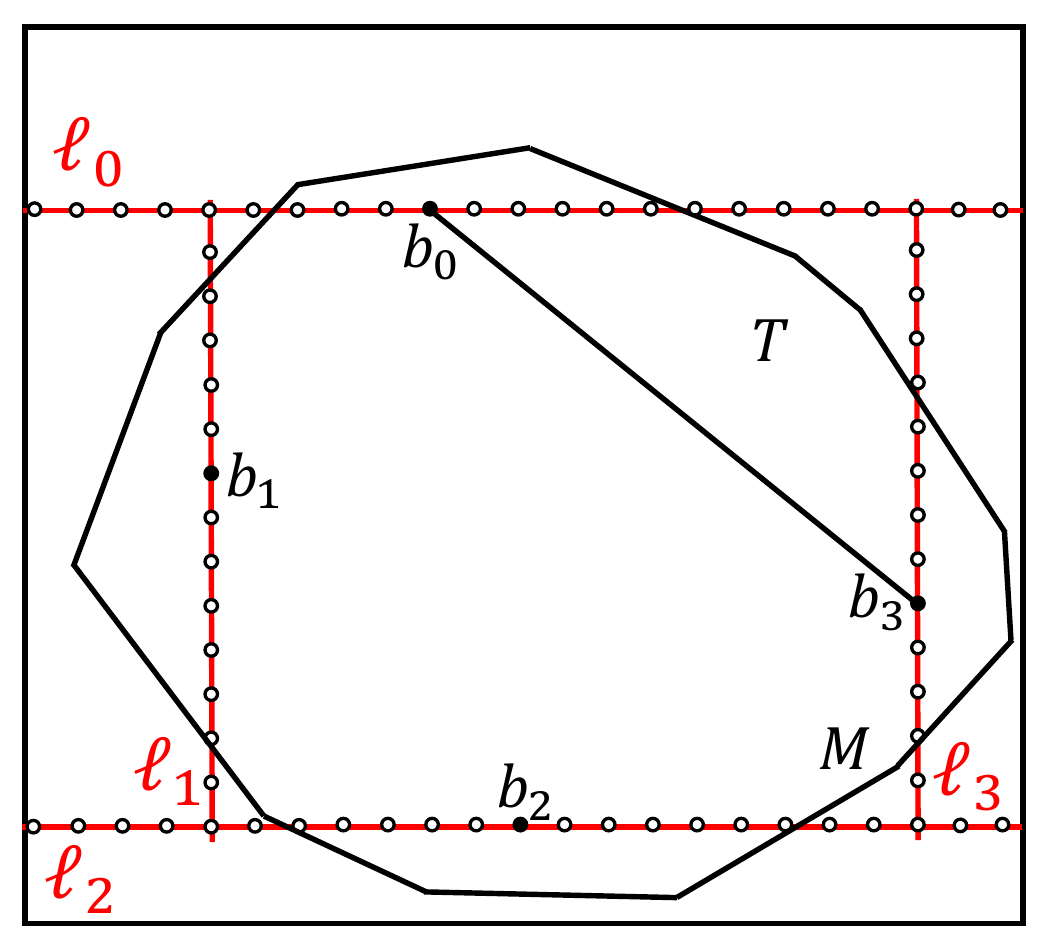}
\caption{ Reference box for a convex image $M$.}
\label{fig:reference-box}
\end{figure}

Next we construct the set $B$ from the reference box, as in Definition~\ref{def:reference-polygons}.  We also maintain two sets of line segments, $F_1$ and $F_2$, that are used in the analysis. Initially, $F_1=F_2=\emptyset$. The colors of the points in the description below are with respect to the convex image $M$. This is how we make the choices at each step of the recursive process in Definition~\ref{def:reference-polygons} to obtain our reference polygon:
\begin{enumerate}
\item {\sf (Base Change)} Choose $b'_0,b''_0$ to be the furthest from $b'b''$ black reference w.r.t.\ lines $\ell(b',v)$ and $\ell(b'',v)$, respectively. Recall that $h$ is the height of $\bigtriangleup b'_0b''_0v$ w.r.t.\ the base $b'_0b''_0$.

\item {\sf (Subdivision Step)} If $h>6\bigdelta n$, let $B_M$ denote the convex hull of all black pixels in $M$ and points $b'_1,b''_1$ be the intersection points of $B_M$ with $b'_0v$ and $b''_0v$, respectively. Choose a line-point pair $(\ell,b)$ such that $\ell\in L_{\hat\varphi}$ is the furthest from $b'_1b''_1$ line that intersects $b'v$ and $b''v$,  and $b$ is black.  Let $\ell$ intersect $b'v$ and $b''_1v$ at $v'$ and $v''$, respectively and let it intersect $B_M$ at $y'$ and $y'',$ as in Figure~\ref{fig:small-error-ref-poly-1}. Put the line segment $y'y''$ in $F_1$ and $\bigtriangleup v'v''v$ in $\Tcut$. If no line in $L_{\hat\varphi}$ contains a black reference point in $\bigtriangleup b'_1b''_1v$ or if $h\leq 6\bigdelta n,$ go to Step 3.

\item {\sf (End of Processing)} Put the line segment $b'_1b''_1$ in $F_2$ and $\bigtriangleup b'_0b''_0v$ in $\Tfin$. Triangle $\bigtriangleup b'_0b''_0v$ is not subdivided and is called a {\em final} triangle.
\end{enumerate}

Observe that $M$ and $M'$ differ only on three types of regions: outside of the reference box, inside the triangles in $\Tfin$, and inside the triangles in $\Tcut$. To show that $\Dis(M,M')\leq\frac{\mydelta n^2}{6},$ we prove in Claims~\ref{cl:error-in-strips}, \ref{cl:error-in-triangles}, and \ref{cl:small-area-above-line} that the number of disagreements in each of the three regions is small.
For any region $R\subseteq\domain$, let $\myerr{R}=|\{u\in R : M[u]\neq M'[u]\}|$.

Next claim
follows from the analysis of the convexity tester in \cite{Ras03}.
\begin{claim}
\label{cl:error-in-strips}
The number of black pixels in $M$
outside the reference box is at most $12\cdot\bigdelta n^2$.
\end{claim}
\mch{

\begin{claim}\label{cl:error-in-triangles}
Let $\bigtriangleup b'_0b''_0v$ be a final triangle and points $b'_1,b''_1$ be the points of intersection of $B_M$ with $b'_0v$ and $b''_0v$, respectively. Then $\myerr{\bigtriangleup b'_0b''_0v}\leq 4\cdot|b'_1b''_1|\bigdelta n+2$.
\end{claim}

\begin{figure}[ht]
\centering
\includegraphics[width=0.7\linewidth]{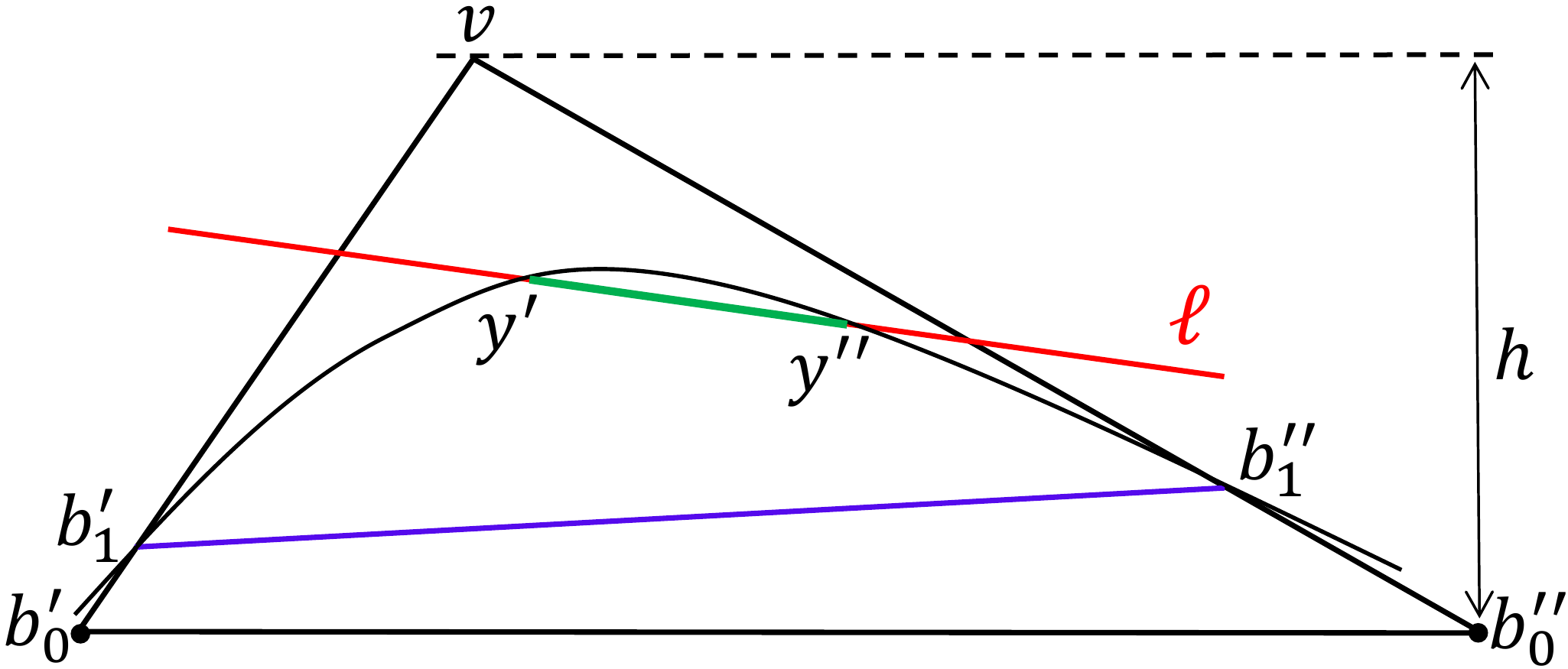}
\caption{ An illustration to Subdivision Step in $\bigtriangleup b'_0b''_0v$.}
\label{fig:small-error-ref-poly-1}
\end{figure}

\begin{proof}
By Proposition~\ref{prop:area-pixel}, $\myerr{\bigtriangleup b'_0b''_0v}\leq Pix(\bigtriangleup b'_1b''_1v)\leq A(\bigtriangleup b'_1b''_1v)+Perim(\bigtriangleup b'_1b''_1v)/2+1$. Note that $\angle b'_1vb''_1$ is obtuse.
\begin{proposition}
\label{prop:obtuse-height}
Let $T$ be a triangle with sides ${\bf a,b}$ and ${\bf c}$. Let $\alpha$ be the angle opposite to side ${\bf a}$, and ${\bf h_a}$ be the height w.r.t.\ base ${\bf a}$ in $T$. If $\alpha\geq \pi/2$ then ${\bf h_a}\leq {\bf a}/2$.
\end{proposition}
\begin{proof}
By the cosine theorem, ${\bf a^2}={\bf b^2}+{\bf c^2}-2{\bf b}{\bf c}\cdot\cos{\alpha}\geq {\bf b^2}+{\bf c^2}\geq 2{\bf b}{\bf c}\geq4\cdot A(T)=2\cdot{\bf a}\cdot {\bf h_a}$. Thus, ${\bf h_a}\leq {\bf a}/2$, as claimed.
\end{proof}
If $h\leq 6\bigdelta n$ then by Proposition~\ref{prop:obtuse-height}, the area $A(\bigtriangleup b'_1b''_1v)\leq 3\cdot|b'_1b''_1|\bigdelta n$. Since $Perim(\bigtriangleup b'_1b''_1v)\leq 3\cdot|b'_1b''_1|$ we obtain that $A(\bigtriangleup b'_1b''_1v)+Perim(\bigtriangleup b'_1b''_1v)/2+1\leq 3\cdot|b'_1b''_1|\bigdelta n+1.5\cdot|b'_1b''_1|+1\leq 4\cdot|b'_1b''_1|\bigdelta n+2$ and the claim holds (recall that $\eps=\Omega(1/n)$). Now assume that $h>6\bigdelta n$ and no line in $L_{\hat\varphi}$ with a black reference point intersects the line segments $b'_1v$ and $b''_1v$ in $\bigtriangleup b'_0b''_0v$.

\begin{figure}
\centering
\includegraphics[width=0.7\linewidth]{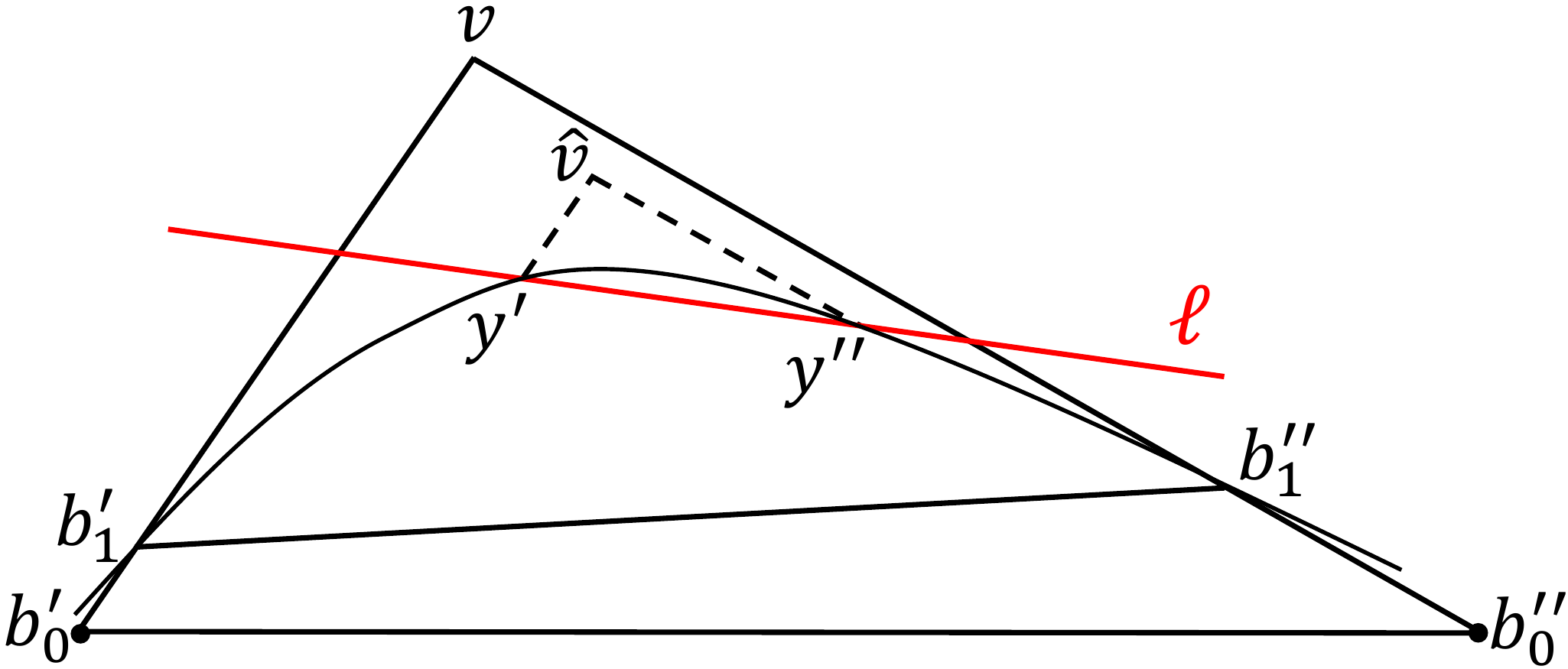}
\caption{ An illustration of triangle $\bigtriangleup y'y''\hat v$.}
\label{fig:small-error-ref-poly-2}
\end{figure}

\begin{proposition}
\label{prop:small-error-above}
Let $\bigtriangleup b'_0b''_0v$ be a triangle in which $\angle b'_0vb''_0$ is obtuse and $B_M$ intersects the sides $b'_0v$ and $b''_0v$. Let $\ell\in L_{\hat\varphi}$ be a line that intersects $B_M$ at $y'$ and $y''$, and it intersects $b'_0v$ and $b''_0v$ at $v'$ and $v''$, respectively. See Figure~\ref{fig:small-error-ref-poly-2}. Then $\myerr{v'v''v}\leq\frac{|y'y''|^2}{4}+\frac{3|y'y''|}2+1$.
\end{proposition}
\begin{proof}
Let $\hat{v}$ be a point (inside $\bigtriangleup b'_0b''_0v$) such that $\ell(y',\hat{v})$ is parallel to $\ell(b'_0v)$ and $\ell(y'',\hat{v})$ is parallel to $\ell(b''_0v)$. Since $B_M$ is convex, the portion of $B_M$ in $\bigtriangleup v'v''v$ is entirely inside $\bigtriangleup y'y''\hat{v}$. Angle $\angle y'\hat{v}y''$ is obtuse since $\angle y'\hat{v}y''=\angle b'_0vb''_0$. Then by Proposition~\ref{prop:obtuse-height}, $A(\bigtriangleup y'y''\hat{v})\leq\frac{|y'y''|^2}{4}$. Note that $Perim(\bigtriangleup y'y''\hat{v})\leq 3|y'y''|$. Since $\myerr{v'v''v}\leq Pix(\bigtriangleup v'v''v)$ then by Proposition~\ref{prop:area-pixel}, $\myerr{v'v''v}\leq\frac{|y'y''|^2}{4}+\frac{3|y'y''|}2+1$.
\end{proof}

Let $\ell\in L_{\hat\varphi}$ be the line that does not intersect the line segment $b'_1b''_1$ and that is closest to it. Let $\ell$ intersect the line segments $b'_1v$ and $b''_1v$ at $v'$ and $v''$. Then either $\angle v'v''v\leq \angle b'_0b''_0v$ or $\angle v''v'v\leq \angle b''_0b'_0v$. W.l.o.g. assume that $\angle v'v''v\leq \angle b'_0b''_0v$.
Let $\hat{\ell}$ be the line that is parallel to $\ell$ and that passes through point $b''_0$ as shown in Figure~\ref{fig:small-error-ref-poly-3}. Let $v'_1$ be the intersection point of $\hat{\ell}$ and the line segment $b'_0v$. Denote the angle between $\hat{\ell}$ and $\ell(b'_0,b''_0)$ by $\gamma$. The distance between $\ell$ and $\hat{\ell}$ is at  most $2\bigdelta n$. Otherwise there are two distinct lines from $L_{\hat\varphi}$ that pass through the line segment $b''_0b''_1$. Since $|b''_0b''_1|\leq \bigdelta n$ the distance between the two lines is less than $\bigdelta n$, contradiction.

\begin{figure}
\centering
\includegraphics[width=0.7\linewidth]{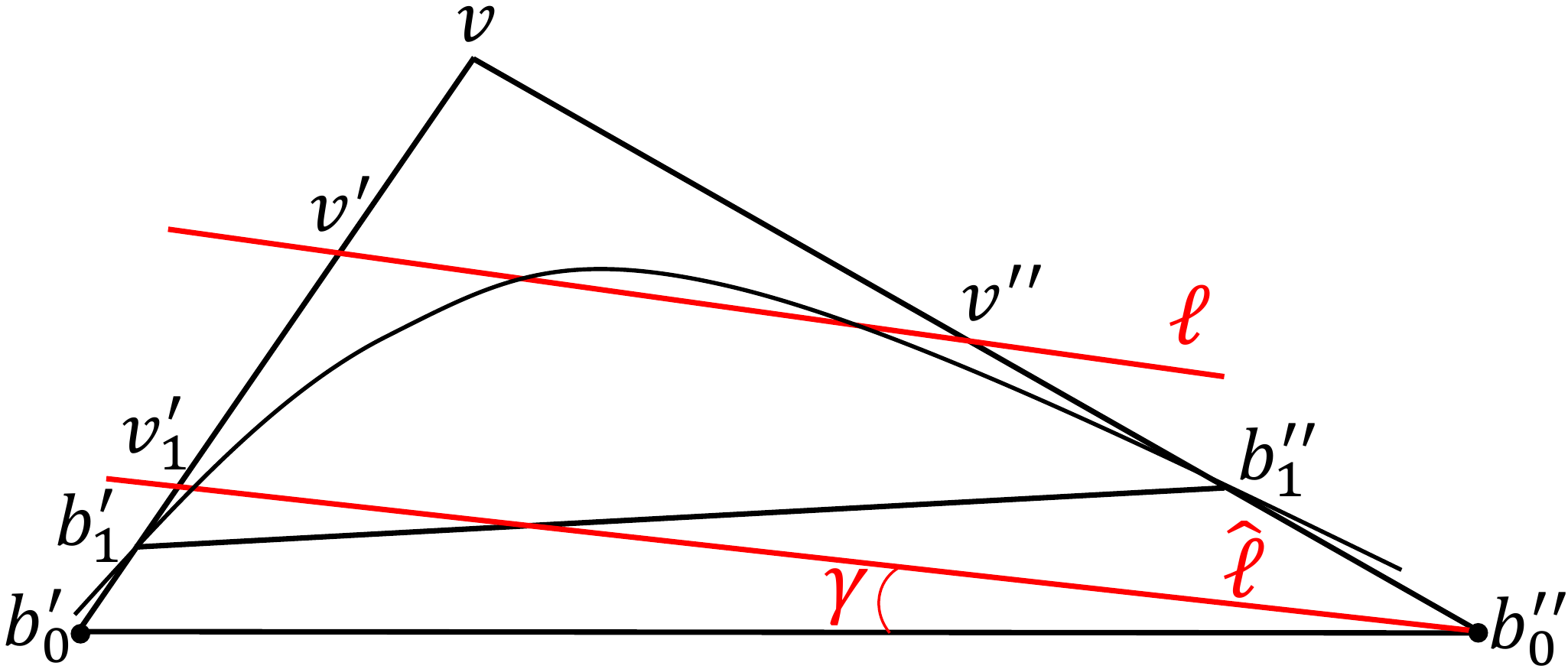}
\caption{ An illustration of line $\hat{\ell}$ in $\bigtriangleup b'_0b''_0v$.}
\label{fig:small-error-ref-poly-3}
\end{figure}

Now we find an upper on the number of black pixels in $\bigtriangleup b'_0b''_0v$. Let $B_M$ intersect $\ell$ at $y'$ and $y''$. Then $|y'y''|\leq \bigdelta n$. By Proposition~\ref{prop:small-error-above}, $Pix(\bigtriangleup v'v''v)\leq\frac{(\bigdelta n)^2}4+\frac{3\bigdelta n}2+1.$ The number of black pixels in the rectangle $b'_0b''_0v''v'$ is at most $Pix(b'_0b''_0v''v')$. The area
$$A(b'_0b''_0v''v')=A(v'_1b''_0v''v')+A(\bigtriangleup b'_0b''_0v'_1)\leq 2\bigdelta n|v'_1b''_0|+A(\bigtriangleup b'_0b''_0v'_1)\leq2|b'_0b''_0|\bigdelta n+A(\bigtriangleup b'_0b''_0v'_1).$$
The last inequality holds since $\angle b'_0v'_1b''_0$ is obtuse. Let $d_1$ (resp., $d_2$) denote the distance from the point $v'$ (resp., $v'_1$) to the line $\ell(b'_1,b''_1)$. We find an upper bound on $A(\bigtriangleup b'_0b''_0v'_1)$:

$$A(\bigtriangleup b'_0b''_0v'_1)=\frac{|b'_0b''_0|\cdot d_2}{2}=\frac{|b'_0b''_0|\cdot|b''_0v'_1|\cdot\sin{\gamma}}{2}\leq \frac{|b'_0b''_0|\cdot \sqrt{2}n(\bigdelta/2)}{2}<0.4|b'_0b''_0|\cdot\bigdelta n.$$ Thus, $A(b'_0b''_0v''v')\leq 2.4\cdot|b'_0b''_0|\bigdelta n$. The height $h\leq d_1+\bigdelta n$. By Proposition~\ref{prop:obtuse-height}, if $|b'_1b''_1|\leq10\bigdelta n$ then $d_1\leq5\bigdelta n$. It implies that $h\leq6\bigdelta n$, contradiction. Therefore, $|b'_1b''_1|>10\bigdelta n$. By the triangle inequality $|b'_0b''_0|\leq |b'_1b''_1|+2\bigdelta n$. Thus,
$$A(b'_0b''_0v''v')\leq 2.4\cdot(|b'_1b''_1|+2\bigdelta n)\bigdelta n\leq 3\cdot|b'_1b''_1|\bigdelta n.$$ The last inequality holds since $|b'_1b''_1|>10\bigdelta n$. Note that $$Perim(b'_0b''_0v''v')/2\leq 2\cdot|b'_0b''_0|\leq 2|b'_1b''_1|+4\bigdelta n.$$ Thus, by Proposition~\ref{prop:area-pixel}, $$Pix(b'_0b''_0v''v')\leq 3\cdot|b'_1b''_1|\bigdelta n+2|b'_1b''_1|+4\bigdelta n+1$$
and $$\myerr{\bigtriangleup b'_0b''_0v}\leq Pix(b'_0b''_0v''v')+Pix(\bigtriangleup v'v''v)\leq 3\cdot|b'_1b''_1|\bigdelta n+2|b'_1b''_1|+4\bigdelta n+1+\frac{(\bigdelta n)^2}4+\frac{3\bigdelta n}2+1\leq$$ $$\leq4\cdot|b'_1b''_1|\bigdelta n+2.$$
The last inequality holds since $|b'_1b''_1|>10\bigdelta n$. This completes the proof of Claim~\ref{cl:error-in-triangles}.
\end{proof}
\begin{claim}
\label{cl:small-area-above-line}
Let triangle $\bigtriangleup b'_0b''_0v$ and line $\ell$ be as defined in Step 2 of the recursive construction of $M'$. Let $v'$ and $v''$ denote the points of intersection of $\ell$ and $b'_0v$ and $b''_0v$, respectively. Let $y'$ and $y''$ be the points of intersection of $B_M$ and $\ell$. Then $\myerr{\bigtriangleup v'v''v}\leq 4\cdot|y'y''|\bigdelta n+2$.
\end{claim}
\begin{figure}[ht]
\centering
\includegraphics[width=0.7\linewidth]{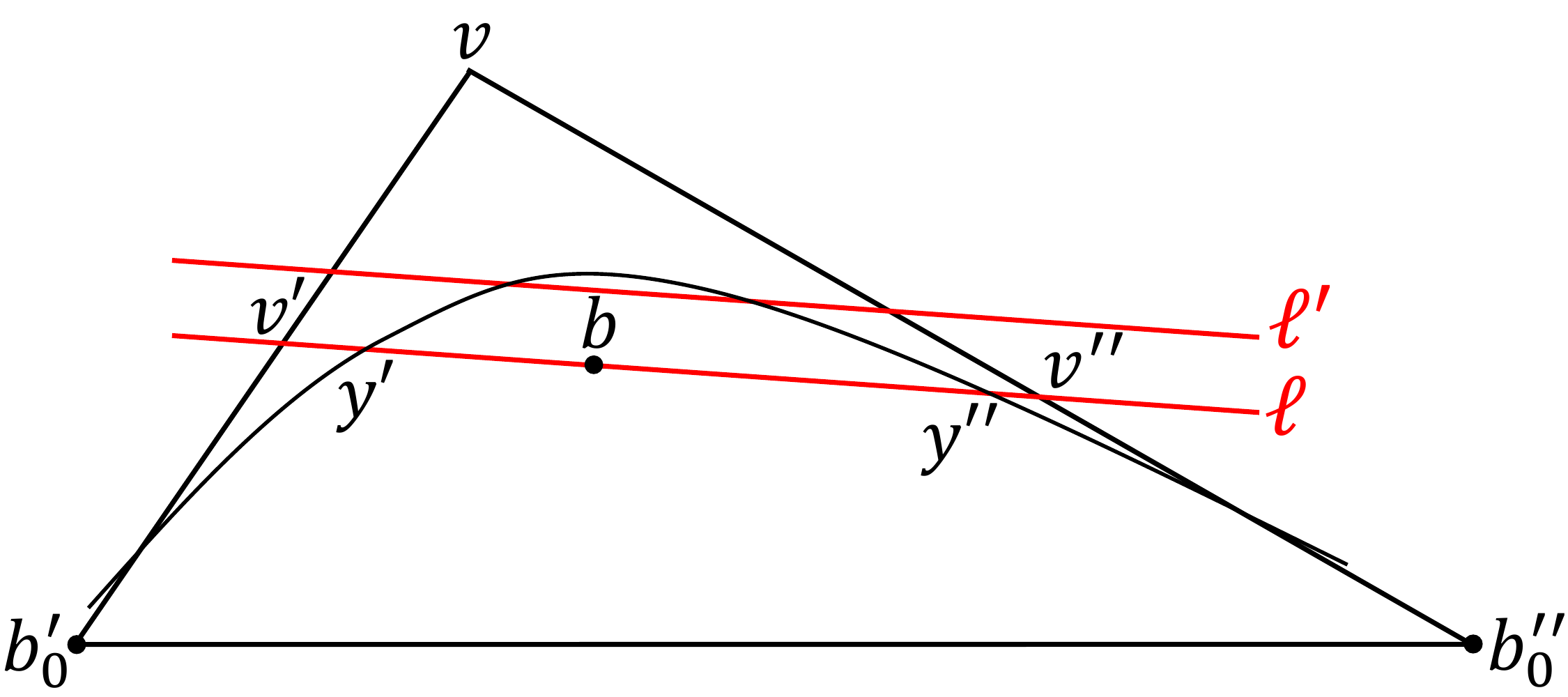}
\caption{ An illustration of line $\ell'$ in triangle $\bigtriangleup b'_0b''_0v$.}
\label{fig:error-above-l}
\end{figure}

\begin{proof}
If $|y'y''|\leq \bigdelta n$ then, by Proposition~\ref{prop:small-error-above}, $\myerr{\bigtriangleup v'v''v}\leq \frac{|y'y''|^2}4+\frac{3|y'y''|}2+1\leq 4\cdot|y'y''|\bigdelta n+2.$ Now assume that $|y'y''|>\bigdelta n$. Let $\ell'\in L_{\hat\varphi}$ be the line at distance $\bigdelta n$ from $\ell$ closer to $v$, as in Figure~\ref{fig:error-above-l}. Let $\ell'$ intersect $b'_0v$ and $b''_0v$ at $v'_1$ and $v''_1$, respectively. Then $\myerr{v'v'_1v''_1v''}$ is at most the number of black pixels in $v'v'_1v''_1v''$. Note that all black pixels in $v'v'_1v''_1v''$ are inside a rectangle with length $|y'y''|$. Thus, by Proposition~\ref{prop:area-pixel}, the number of black pixels in $v'v'_1v''_1v''$ is at most $|y'y''|\bigdelta n+2|y'y''|+1$. The distance between the points of intersection of $B_M$ with $\ell'$ is at most $\bigdelta n$. Thus, by Proposition~\ref{prop:small-error-above},
$$\myerr{v'v''v}\leq|y'y''|\bigdelta n+2|y'y''|+1+\frac{(\bigdelta n)^2}{4}+\frac{3\bigdelta n}2+1\leq 4\cdot|y'y''|\bigdelta n+2.$$ The last inequality holds since $|y'y''|>\bigdelta n$. This completes the proof of Claim~\ref{cl:small-area-above-line}.
\end{proof}

Observe that all points in $B$ lie on the boundary of a convex polygon. Images $M$ and $M'$ differ only on pixels outside of the reference box and inside the triangles $\bigtriangleup b'_1b''_1v$ and $\bigtriangleup v'v''v$. 
All the line segments in $F_1\cup F_2$ are the sides of a convex polygon which is inside an $n\times n$ square. Thus, the sum of the lengths of the line segments in $F_1\cup F_2$ is at most $4n$. Now we find an upper bound on $|\Tfin|$. Note that in the process of constructing a reference polygon starting from triangles in $\Tstart$, every triangle is subdivided into at most two new triangles. Fix a triangle $T\in\Tstart$. Consider a binary tree $\mathcal{B}_T$ rooted at $T$, where every node is some triangle obtained during the reference polygon construction and every triangle in $\mathcal{B}_T$ has at most two children triangles obtained after subdivision of their parent (during the construction). Triangles in $\Tfin$ correspond to the leaves of the binary tree. Thus, to upperbound $|\Tfin|$ we need to find the maximum possible height of $\mathcal{B}_T$ and we need to assume that the tree is full. Recall $\bigtriangleup b_0'b_0''v$ and $h$ from the construction of a reference polygon. Triangle $\bigtriangleup b_0'b_0''v$ is not subdivided if $h\leq6\bigdelta n$. By Proposition~\ref{prop:obtuse-height}, if $A(\bigtriangleup b_0'b_0''v)\leq 36(\bigdelta n)^2$ then $h\leq 6\bigdelta n$. Thus, a triangle is not subdivided if its area drops below $36(\bigdelta n)^2$. Note that every triangle in $\Tstart$ has area at most $n^2$. Consider a triangle $T_1$ in $\mathcal{B}_T$ with two children $T'_1$ and $T''_1$. Let $k$ be the height of $\mathcal{B}_T$. By Claim~\ref{claim:sum-of-roots-of-areas}, $\max\{A(T'_1),A(T'_2)\}\leq \frac 2 3 A(T_1)$.
Thus, every triangle in level $i\in[k]$ of $\mathcal{B}_T$ has area at most $(2/3)^i n^2$. The area of every triangle in level $k-1$ of $\mathcal{B}_T$ is at least $36(\bigdelta n)^2$ (otherwise, non of the triangles in this level is subdivided and the height of $\mathcal{B}_T$ cannot be $k$). We obtain that $(2/3)^{k-1} n^2\geq 36(\bigdelta n)^2$ and thus, $k\leq 5\cdot\ln \frac {30} {\eps}$. Therefore, the number of leaves in $\mathcal{B}_T$ is at most $2^k\leq n/4$ (recall that $\eps>n^{-1/5}$) and $|\Tfin|\leq 4\cdot(n/4)=n$. By Claims~\ref{cl:error-in-strips},~\ref{cl:error-in-triangles} and \ref{cl:small-area-above-line},
$$
\Dis(M,M')\leq (\sum\nolimits_{ b'_1b''_1 \in F_2}|b'_1b''_1|+\cdot\sum\nolimits_{ y'y'' \in F_1}|y'y''|)\cdot4\bigdelta n+12\bigdelta n^2+2|\Tfin|\leq 26\bigdelta n^2+2n\leq 27\bigdelta n^2.
$$
This completes the proof of Lemma~\ref{lem:nearby-reference-polygon}. }
\end{proof}


\subsection{Error analysis}\label{sec:convexity-dist-approx-error-analysis}

\begin{lemma}\label{lem:sum-of-roots-of-areas}
For each set $\Tend$ obtained in the construction of a reference polygon in Definition~\ref{def:reference-polygons},
$$\sum_{T\in\Tend} \sqrt{A(T)}<11n.$$
\end{lemma}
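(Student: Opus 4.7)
My plan is to organize $\Tend$ by depth in the subdivision forest (depth $0$ for the four initial triangles from the reference box, and depth $k+1$ for the children $T',T''$ spawned from a depth-$k$ triangle at Step~\ref{item:ref-poly-definition-tall} of Definition~\ref{def:reference-polygons}), and then bound the per-depth contribution via Cauchy--Schwarz combined with a geometric shrinkage in total area.

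The central geometric estimate is that every subdivided $T$ with children $T', T''$ (both later processed and placed in $\Tend$) satisfies $A(T')+A(T'') \leq \gamma\cdot A(T)$ for a universal $\gamma < 1/2$. Since $T'\cup T''\subseteq\bigtriangleup b'_0 b''_0 v$ (the base-changed sub-triangle of $T$), it suffices to prove the bound with $A(\bigtriangleup b'_0 b''_0 v)$ in place of $A(T)$. Placing $b'_0=(0,0)$, $b''_0=(b,0)$, $v=(v_x,h)$, I parametrize $v'=\ell\cap\overline{b'_0 v}$ and $v''=\ell\cap\overline{b''_0 v}$ by height fractions $\alpha=y_{v'}/h$, $\beta=y_{v''}/h$, and $b$ on $\overline{v'v''}$ by $t\in[0,1]$. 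Decomposing $\bigtriangleup b'_0 b''_0 v$ into the apex triangle $\bigtriangleup vv'v''$ of area $(1-\alpha)(1-\beta)\,A(\bigtriangleup b'_0 b''_0 v)$, the central black triangle $\bigtriangleup b'_0 b b''_0$ of area $[(1-t)\alpha+t\beta]\,A(\bigtriangleup b'_0 b''_0 v)$, and the two children, a direct computation yields
$$A(T')+A(T'')=A(\bigtriangleup b'_0 b''_0 v)\cdot[\beta(1-\alpha)+t(\alpha-\beta)],$$
whose maximum over $t$ equals $\max(\alpha(1-\beta),\beta(1-\alpha))$. Simple calculus shows that for $|\alpha-\beta|\leq c$, this maximum is at most $((1+c)/2)^2$.

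To bound $c=|\alpha-\beta|$, I use that $\ell\in L_{\hat\varphi}$ with $|\hat\varphi-\varphi|\leq\bigdelta/2$, so the angle $\epsilon$ between $\ell$ and $\overline{b'_0 b''_0}$ satisfies $\tan\epsilon\leq\bigdelta/2$. Rearranging the slope condition for $\ell$ through $v',v''$ yields $|\beta-\alpha|=\tan\epsilon\cdot b(1-\beta)/(h-\tan\epsilon\cdot v_x)$. Using $b,|v_x|\leq\sqrt{2}\,n$ and the subdivision threshold $h>6\bigdelta n$ (precisely the purpose of this threshold), I obtain $c<0.14$, giving $\gamma<((1+0.14)/2)^2<0.33$.

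At depth $k$ there are at most $4\cdot 2^k$ triangles, and iterating the shrinkage gives total area at most $\gamma^k n^2$ (the depth-$0$ bound being the reference box area of at most $n^2$). Cauchy--Schwarz then yields $\sum_{T \text{ at depth } k}\sqrt{A(T)}\leq\sqrt{(4\cdot 2^k)(\gamma^k n^2)}=2n\,(2\gamma)^{k/2}$. Since $2\gamma<0.66<1$, summing the geometric series gives
$$\sum_{T\in\Tend}\sqrt{A(T)}\leq\frac{2n}{1-\sqrt{2\gamma}}<\frac{2n}{1-\sqrt{0.66}}<11n.$$
The main obstacle is keeping $\gamma$ strictly below $1/2$: the parallel case gives $\gamma=1/4$ exactly, while the subdivision threshold $h>6\bigdelta n$ must be tight enough that the non-parallel perturbation leaves $\gamma$ inside the $\gamma<1/2$ budget needed for the depth-indexed series to converge.
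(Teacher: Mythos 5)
Your proof is correct, and it is a genuinely different route from the paper's. The paper (Claim~\ref{claim:sum-of-roots-of-areas}) controls the sum of square roots directly: it shows $\sqrt{A(T')}+\sqrt{A(T'')}\leq\sqrt{\tfrac23 A(T)}$ by observing that for fixed $\ell$ the maximum over the position of $b$ is $\sqrt{C_1+C_2}$ (where $C_1,C_2$ are the two endpoint areas) and then bounding $(C_1+C_2)/A(T)\leq 2/3$ via a trigonometric computation on the base angles; the lemma then follows from a single geometric series with ratio $\sqrt{2/3}$, plus one use of concavity over the four initial triangles. You instead bound the areas themselves, $A(T')+A(T'')\leq\gamma A(T)$ with $\gamma<1/3$, and then pair the depth-$k$ total-area bound $\gamma^k n^2$ with the depth-$k$ count $4\cdot 2^k$ by a level-by-level Cauchy--Schwarz. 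Your series requires $\sqrt{2\gamma}<1$, i.e.\ $\gamma<1/2$, which is a strictly harder contraction requirement than the one the paper relies on: the paper's square-root recursion only forces $A(T')+A(T'')\leq\tfrac23 A(T)$, which would be useless for your argument. The extra margin you need is supplied by the same geometry as in the paper --- $|\hat\varphi-\varphi|\leq\bigdelta/2$ together with $h>6\bigdelta n$ keeps $\ell$ nearly parallel to $\overline{b'_0 b''_0}$, giving $|\alpha-\beta|<0.14$ and $\gamma\leq\bigl(\tfrac{1+c}{2}\bigr)^2<0.33$, and then $2n/(1-\sqrt{2\gamma})<11n$ --- but you correctly flag that the constant $6$ in the subdivision threshold is now doing load-bearing work for the $\gamma<1/2$ requirement, whereas the paper's ratio on square roots would stay below $1$ even with a smaller threshold constant. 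I checked the intermediate identities ($A(\bigtriangleup vv'v'')=(1-\alpha)(1-\beta)A_0$, $A(T')+A(T'')=A_0[\beta(1-\alpha)+t(\alpha-\beta)]$, the slope rearrangement, and the maximization $\max(\alpha(1-\beta),\beta(1-\alpha))\leq((1+c)/2)^2$) and they are all right. What your route gains is a transparent count-times-area accounting per level; what it costs is tighter constant-chasing on the area contraction.
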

\begin{figure}[ht]
\centering
\includegraphics[width=0.7\linewidth]{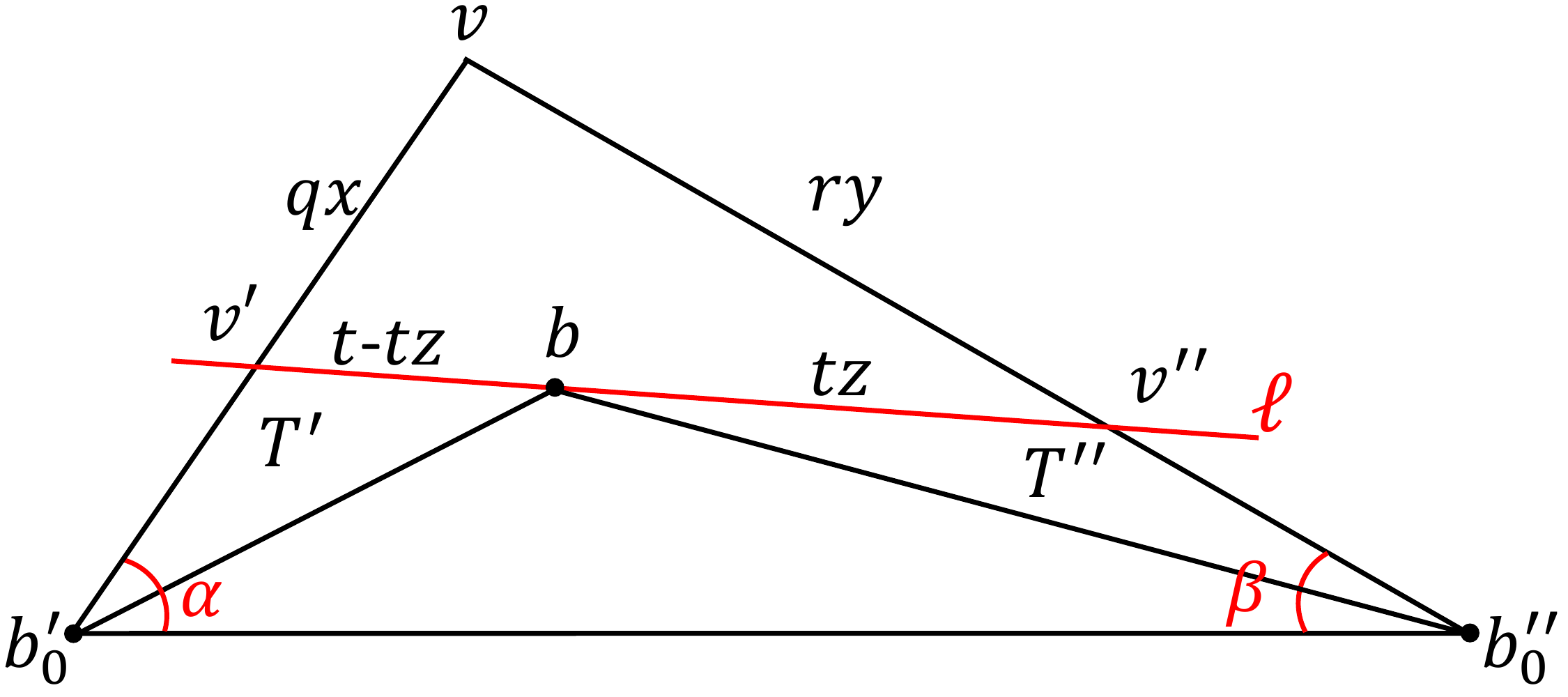}
\caption{ Triangle $\bigtriangleup b'_0b''_0v$.}
\label{fig:sum-of-roots}
\end{figure}

\begin{proof}
All triangles in $\Tend$ 
are obtained by partitioning the four initial triangles in $\Tstart$. The following claim analyzes how the area is affected by one step of partitioning.
\begin{claim}\label{claim:sum-of-roots-of-areas}
Let $T'$ and $T''$ be two gray triangles obtained from a triangle $T$ in Subdivision Step of Definition~\ref{def:reference-polygons}. Then
$\sqrt{A(T')}+\sqrt{A(T'')}\leq \sqrt{\frac 2 3 \cdot A(T)}.$
\end{claim}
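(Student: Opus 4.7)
My approach reduces the claim to an elementary optimization over the subdivision parameters. Place $T = \bigtriangleup b'_0 b''_0 v$ in coordinates with $b'_0 = (0,0)$, $b''_0 = (a, 0)$, and $v = (v_x, h)$; write $v' = u_1 v$ and $v'' = b''_0 + u_2 (v - b''_0)$ for fractions $u_1, u_2 \in [0,1]$ along the two sides, and $b = v' + t(v'' - v')$ for $t \in [0,1]$. The last parameterization is valid because $b$ is a reference point on $\ell$ lying inside $T$, so it is on segment $v'v''$. A standard barycentric computation then gives
\[A(T') = t\, u_1\, (1-u_2)\, A(T), \qquad A(T'') = (1-t)\, u_2\, (1-u_1)\, A(T).\]

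Next, I would maximize $\sqrt{A(T')} + \sqrt{A(T'')}$ over $t$ alone, obtaining $\sqrt{A(T) \cdot Q}$ where $Q = u_1(1-u_2) + u_2(1-u_1)$, attained at $t = u_1(1-u_2)/Q$. Setting $s = u_1 + u_2$ and $d = u_2 - u_1$, the quantity $Q$ simplifies to $s - s^2/2 + d^2/2$, which over $u_1, u_2 \in [0,1]$ is maximized at $s = 1$ with value $1/2 + d^2/2$. It therefore suffices to show $d^2 \leq 1/3$.

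To bound $|d|$, I invoke the geometric constraint in Definition~\ref{def:reference-polygons} that $\hat\varphi$ is within $\bigdelta/2$ of the direction $\varphi$ of $b'_0 b''_0$; in the chosen coordinates, $\ell$ has slope $|m| \leq \tan(\bigdelta/2) \leq \bigdelta/2$. Equating this to the slope of $v'v''$ yields the identity $(u_2 - u_1)(h - m v_x) = m(1 - u_2)\, a$. The precondition $h > 6\bigdelta n$ that triggers the subdivision step, together with $a \leq \sqrt{2}\, n$ and $|v_x| \leq n$ (from $T \subseteq [0,n-1]^2$), yields $|h - m v_x| \geq 11h/12$ and hence $|d| \leq \sqrt{2}/11 < 1/\sqrt{3}$, so $Q \leq 1/2 + d^2/2 < 2/3$ and $\sqrt{A(T')} + \sqrt{A(T'')} \leq \sqrt{(2/3)\,A(T)}$. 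The main obstacle is precisely this tilt analysis: without the precondition $h > 6\bigdelta n$, the line $\ell$ could tilt enough to drive $u_1 \to 0$ and $u_2 \to 1$, making $Q$ approach $1$ and violating the claim; the height precondition is exactly what rules out such near-degenerate orientations.
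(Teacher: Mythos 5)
Your proof is sound in its overall structure and reaches the same conclusion, but the route it takes to the final numeric bound is genuinely different from the paper's, and there are two small arithmetic slips worth correcting. The shared part: both proofs first eliminate the position of $b$ on the chord $v'v''$ by a Cauchy--Schwarz/AM--GM step, reducing the problem to bounding a single scalar $Q = u_1(1-u_2)+u_2(1-u_1)$ by $2/3$. (In the paper's notation $x=1-u_1$, $y=1-u_2$, and $Q = x+y-2xy$.) From there the paper reparametrizes by the \emph{ratio} $\hat c=y/x=\frac{\sin\beta\sin(\alpha+\gamma)}{\sin\alpha\sin(\beta-\gamma)}$, applies another AM--GM step, and reduces the claim to the algebraic inequality $1\le\hat c\le 3$, which it obtains from the lower bound $\alpha,\beta>4\bigdelta$ on the base angles (itself derived from $h>6\bigdelta n$). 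You instead reparametrize by $s=u_1+u_2$ and the \emph{difference} $d=u_2-u_1$, optimize over $s$ to get $Q\le\tfrac12+\tfrac{d^2}{2}$, and bound $|d|$ directly from the exact slope identity $(u_2-u_1)(h-mv_x)=ma(1-u_2)$. Your route avoids trigonometric ratios entirely and trades the angle-bound lemma for a one-line linear identity; the paper's route is perhaps more ``conceptual'' (the ratio $\hat c$ has a natural geometric meaning and the condition $\hat c\le 3$ is sharp for the AM--GM step), while yours is more mechanical but also gives much more slack (you get $|d|\lesssim 0.13$ when only $|d|<1/\sqrt3\approx 0.58$ is needed).

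Two small errors to fix: (1) you write $\tan(\bigdelta/2)\le\bigdelta/2$, but $\tan\theta\ge\theta$ for $\theta>0$; use instead $\tan(\bigdelta/2)\le\bigdelta$, which holds for the small values of $\bigdelta$ in scope. (2) You take $|v_x|\le n$, but $v_x$ is a coordinate in a frame rotated so the base $b'_0b''_0$ lies on the $x$-axis, so the correct bound is $|v_x|\le\sqrt 2\,n$. Repeating your computation with $|m|\le\bigdelta$ and $|v_x|\le\sqrt2\,n$ gives $|h-mv_x|\ge h(1-\tfrac{\sqrt2}{6})$ and hence $|d|\le\frac{\sqrt2\bigdelta n}{(6-\sqrt2)\bigdelta n}=\frac{\sqrt2}{6-\sqrt2}\approx 0.31$, still comfortably below $1/\sqrt3$, so the conclusion survives. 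You should also note explicitly that you are bounding against $A(\bigtriangleup b'_0b''_0v)\le A(T)$, which only strengthens the claim; the paper handles this point by observing that $\sqrt{A(T')}+\sqrt{A(T'')}$ is maximized when $b'_0=b'$, $b''_0=b''$, and you implicitly do the same by never referring to the larger triangle.
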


\begin{proof}
Observe that $\sqrt{A(T')}+\sqrt{A(T'')}$ is maximized when $b'_0=b'$ and $b''_0=b''$. W.l.o.g.\ we prove the lemma for this case. We use notation from Figure~\ref{fig:sum-of-roots}. Recall that a triangle $T$ is partitioned only if its height $h\geq 6\bigdelta n.$ Since the sides of $T$ are of length at most $\sqrt 2 n$, the height is that large only if both angles adjacent to the base $b'_0b''_0$ are greater than $4\bigdelta$. (To see this, consider an angle $\alpha$ between the base and a side of length $a$. We get
$6\bigdelta n\leq h=a\cdot\sin\alpha\leq \sqrt 2 n \cdot \alpha$.
Thus, $\alpha\geq 6\bigdelta/\sqrt 2> 4\bigdelta.$)

First, we find the maximum value of $\sqrt{A(T')}+\sqrt{A(T'')}$ for a fixed line $\ell$ on which position of point $b$ varies. 
\mch{Let $\alpha=\angle b''_0b'_0v$, $\beta=\angle b''_0b'_0v$ and $\gamma$ be the angle between lines $\ell$ and $\ell(b'_0,b''_0)$. W.l.o.g.\ assume that $\angle v'v''v\leq \beta$. Then $\angle v''v'v=\alpha+\gamma$ and $\angle v'v''v=\beta-\gamma$. By the construction of triangles in $\Tend$, $\alpha+\beta\leq\frac{\pi}{2}$ and $\gamma\leq\frac \bigdelta 2$. Let $q=|b'_0v|$, $r=|b''_0v|$, $t=|v'v''|$ and $qx=|v'v|$, $ry=|v''v|$, $tz=|bv''|$ ($x,y,z\in [0,1]$).}  Let\mch{
$$
f(z)=\sqrt{A(T')}+\sqrt{A(T'')}=\sqrt{\frac{q(1-x)\cdot t(1-z)\sin(\alpha+\gamma)}{2}}+\sqrt{\frac{r(1-y)\cdot tz\sin(\beta-\gamma)}{2}}.
$$
Thus, $f(z)=\sqrt{C_1\cdot(1-z)}+\sqrt{C_2\cdot z}$, where $C_1=A(\bigtriangleup b'_0v'v'')$, $C_2=A(\bigtriangleup b''_0v''v')$ are constants. By the  Cauchy-Schwarz inequality, $f(z)=\sqrt{C_1\cdot(1-z)}+\sqrt{C_2\cdot z}\leq\sqrt{(C_1+C_2)(1-z+z)}=\sqrt{C_1+C_2}$.

Next, we find the maximum value of $C_1+C_2$ varying position of $\ell$ inside $T$. We use the fact that
$$
C_1=A(\bigtriangleup b'_0v''v)-A(\bigtriangleup v'v''v)=\frac{(q-qx)ry\cdot\sin(\alpha+\beta)}{2},
$$ $$
C_2=A(\bigtriangleup b''_0v'v)-A(\bigtriangleup v'v''v)=\frac{(r-ry)qx\cdot\sin(\alpha+\beta)}{2}
$$ to obtain
$$
C_1+C_2=\frac{(x+y-2xy)qr\cdot\sin(\alpha+\beta)}{2}=(x+y-2xy)A(T).
$$
We need to show that $x+y-2xy\leq 2/3$. Let $\hat c=\frac y x=\frac{\sin\beta\sin(\alpha+\gamma)}{\sin\alpha\sin(\beta-\gamma)}$. Since $\hat c$ is constant and the geometric mean of two numbers is at most their arithmetic mean $$\sqrt{x+y-2xy}=\sqrt{2\hat c} \cdot \sqrt{x(\frac{1+\hat c}{2\hat c}-x)}\leq \sqrt{2\hat c}\cdot \frac 1 2\cdot (x+\frac{1+\hat c}{2\hat c}-x)=\frac{1+\hat c}{\sqrt{8\hat c}}.$$ We prove that $\frac{(1+\hat c)^2}{8\hat c}\leq \frac 2 3$ which is equivalent to $(3\hat c-1)(\hat c-3)\leq0$. The latter inequality holds if $1\leq\hat c\leq3.$ Function $\sin\theta$ is increasing on $[0,\pi/2]$ thus, $1\leq\hat c.$ Now we show that $\hat c\leq3.$ If $\gamma=0$ the inequality holds. Let us assume that $\gamma>0.$
We need to prove that
$$
\frac{\sin\beta\sin(\alpha+\gamma)}{\sin\alpha\sin(\beta-\gamma)}=\frac{\cot\gamma+\cot\alpha}{\cot\gamma-\cot\beta}\leq3.
$$
Function $\cot\theta$ is decreasing on $(0,\pi/2]$ thus,
$$
\frac{\cot\gamma+\cot\alpha}{\cot\gamma-\cot\beta}\leq\frac{\cot\gamma+\cot4\bigdelta}{\cot\gamma-\cot4\bigdelta}\leq\frac{\cot\gamma+\cot\bigdelta}{\cot\gamma-\cot\bigdelta}\leq3.
$$
The last inequality is equivalent to $2\cot\bigdelta\leq\cot\gamma$ which is true since $2\cot\bigdelta\leq\cot\frac \bigdelta 2\leq \cot\gamma$. This completes the proof of Claim~\ref{claim:sum-of-roots-of-areas}}\end{proof}

Let $A_1,\dots,A_4$ be the areas of the first four triangles in $\Tstart$. Then $\sum_{i=1}^4 A_i\leq n^2$.
By construction of triangles in $\Tend$, Claim~\ref{claim:sum-of-roots-of-areas}, and concavity of the square root function,
$$
\sum_{T\in \Tend} \sqrt{A(T)}\leq
K\cdot\sum_{j=1}^4 \sqrt{A_j}\leq 2K\sqrt{A_1+A_2+A_3+A_4}\leq 2K\cdot n,
$$
where $K=\sum_{m=0}^{\infty}(\sqrt{2/3})^m=(1-\sqrt{2/3})^{-1}<5.5$.
%
This completes the proof of Lemma~\ref{lem:sum-of-roots-of-areas}.
\end{proof}

Let $M$ be an input image, $S$ be the set of samples obtained by the algorithm, and $s$ be the parameter in the algorithm.
For any image $M'$, let $d(M')=\dis(M,M')$ and
$\dout(M')=\frac 1 s \cdot |\{u\in S : M[u]\neq M'[u]\}|.$
Also, for any region $R\subseteq\domain$, let $d(M'|_R)=\frac 1 {n^2} \cdot |\{u\in R : M[u]\neq M'[u]\}|$
and $\dout(M'|_R)=\frac 1 s \cdot |\{u\in S\cup R : M[u]\neq M'[u]\}|.$
\fi

\begin{lemma}\label{lem:accuracy-on-ref-polygons}
With probability at least $2/3$ over the choice of the samples taken by Algorithm~\ref{alg:convexity-dist-approximation},
$|\dout(M')-\dis(M,M')|\leq 5\mydelta/6$ for all reference polygons $M'$.
\end{lemma}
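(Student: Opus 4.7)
The plan is to obtain uniform convergence over all reference polygons without union-bounding over them directly — they are too numerous — by union-bounding instead over the polynomial-sized family of \emph{atomic regions}: the triangles and quadrilaterals whose vertices lie in the set $\mathcal V$ of reference line-point pairs. By Definition~\ref{def:reference-lines}, $|\mathcal V|=O(1/\eps^3)$ (from $O(1/\eps)$ reference directions, each with $O(1/\eps)$ reference lines carrying $O(1/\eps)$ reference points), so the number of atomic regions is $N=O(1/\eps^{12})$ and $\log N=O(\log\eps^{-1})$.

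For each atomic region $R$ and color $\sigma\in\{+,-\}$, $s\cdot\dout_\sigma(R)$ is distributed as $\mathrm{Binomial}(D_\sigma(R),p)$ with $p=s/n^2$ and $D_\sigma(R)\leq A(R)$. Bernstein's inequality, a union bound over the $2N$ events, and the choice $s=\Theta(\eps^{-2}\log\eps^{-1})$ yield, with probability at least $2/3$,
$$
|\dout_\sigma(R)-d_\sigma(R)|\ \leq\ c_1\bigl(\eps\sqrt{A(R)}/n+\eps^2\bigr)\quad\text{for every }R\text{ and }\sigma,
$$
where $c_1$ can be made arbitrarily small by tuning the hidden constant in $s$. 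Assuming this event, I fix an arbitrary reference polygon $M'$ and decompose its image, per Definition~\ref{def:reference-polygons}, into the fixed white complement of the reference box, the fixed central black quadrilateral $b_0b_1b_2b_3$, and, for each $T=\bigtriangleup b'b''v\in\Tend$, one base-change black quadrilateral plus one further atomic piece (white $\bigtriangleup b'_0b''_0v$ if End-of-Processing was taken, black $\bigtriangleup b'_0b''_0b$ if Subdivision was taken). Each piece has area at most $A(T)$ for some $T\in\Tend$, so summing the leading per-region bounds and invoking Lemma~\ref{lem:sum-of-roots-of-areas} gives $(c_1\eps/n)\sum_R\sqrt{A(R)}\leq 22c_1\eps$.

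The main obstacle is the additive $c_1\eps^2$ term summed over the $O(|\Tend|)$ pieces, since $|\Tend|$ could be as large as $\mathrm{poly}(1/\eps)$. I plan to neutralize it by splitting atomic pieces by area into \emph{large} ($A(R)\geq\eps^2 n^2$) and \emph{small} ($A(R)<\eps^2 n^2$). For large pieces, the additive $\eps^2$ is absorbed into the leading $\eps\sqrt{A(R)}/n\geq\eps^2$ term, and Lemma~\ref{lem:sum-of-roots-of-areas} permits at most $O(1/\eps)$ of them in any one polygon (since each has $\sqrt{A(R)}\geq\eps n$), so their cumulative additive contribution is $O(\eps)$. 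For small pieces, Cauchy--Schwarz combined with Lemma~\ref{lem:sum-of-roots-of-areas} yields $\sum_{\mathrm{small}}A(R)\leq\eps n\cdot\sum\sqrt{A(R)}=O(\eps n^2)$, so the true-error contribution $\sum_{\mathrm{small}}d_\sigma(R)\leq O(\eps)$; the matching signed empirical contribution has aggregate variance $O(\eps/s)=O(\eps^3/\log\eps^{-1})$, and an additional aggregated Bernstein argument (treating the polygon-dependent union of small pieces as a derived random variable whose deviation is controlled jointly with the per-atomic-region events) confines its fluctuation to $o(\eps)$. Choosing $c_1$ small enough delivers the claimed bound $5\eps/6$.
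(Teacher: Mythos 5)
Your high-level plan is the same as the paper's: union-bound a per-region Bernstein estimate over a $\mathrm{poly}(1/\eps)$ family of regions, decompose a reference polygon $M'$ into pieces from that family, and aggregate via Lemma~\ref{lem:sum-of-roots-of-areas}. The paper's analogue of your atomic family is the set $\stripset$ of $O(1/\bigdelta^8)$ regions that Algorithm~\ref{alg:convexity-dist-approximation} actually evaluates (Claim~\ref{claim:convexity-approx-regions-count}), and its analogue of your per-region bound is Claim~\ref{claim:error-for-R}. Up to this point your route is essentially the paper's.

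Where you diverge, and where the gap lies, is in the form of the per-region bound and what you do about it. The paper sets the Bernstein threshold at $z=c\eps n\sqrt{|\Gamma|}$, i.e.\ proportional to the square root of the number of \emph{misclassified} pixels in $R$, not of its area; this choice makes the failure probability $2\exp(-\tfrac38c^2\eps^2s)$ independent of $\gamma$, and it yields a per-region deviation of $c\eps\sqrt{|\Gamma|}/n\le c\eps\sqrt{A(R)}/n$ \emph{with no additive term}. Lemma~\ref{lem:sum-of-roots-of-areas} then finishes immediately. Your version, with threshold keyed to $A(R)$, picks up an additive $\eps^2$, and the last third of your proposal is devoted to neutralizing it. But the mechanism you sketch for small pieces --- an ``aggregated Bernstein argument'' on the polygon-dependent union of small atomic regions --- does not survive scrutiny: the union depends on $M'$, the number of reference polygons is far too large to union-bound, and you cannot extract a uniform bound on a derived $M'$-dependent aggregate merely from the per-atomic-region events already conditioned on. That step is a genuine hole as written.

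The irony is that the obstacle you are fighting does not exist, so the machinery is both unproven and unnecessary. Every internal node $T$ of the subdivision tree satisfies $A(T)>36(\bigdelta n)^2$ (if the triangle is subdivided then $h>6\bigdelta n$, and by Proposition~\ref{prop:obtuse-height} this forces $A>36(\bigdelta n)^2$), hence $\sqrt{A(T)}>6\bigdelta n$; combined with $\sum_{T\in\Tend}\sqrt{A(T)}<11n$ from Lemma~\ref{lem:sum-of-roots-of-areas}, the number of internal nodes is $O(1/\bigdelta)$, the number of leaves is at most one more, and therefore $|\Tend|=O(1/\eps)$, not $\mathrm{poly}(1/\eps)$. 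Since $\stripset_{M'}$ has at most two pieces per $T\in\Tend$ plus three reference-box pieces, the total number of pieces is $O(1/\eps)$, and the naive sum of your additive terms is $O(1/\eps)\cdot c_1\eps^2=O(c_1\eps)$, which is absorbed by tuning $c_1$. So either the paper's sharper threshold $z=c\eps n\sqrt{|\Gamma|}$ or the observation $|\Tend|=O(1/\eps)$ closes your proof cleanly; the aggregated Bernstein step as you propose it does not. (Two smaller remarks: your atomic family must also include the strip/half-plane regions $W_{\ell_i}$ used by the algorithm, not only triangles and quadrilaterals with vertices in $\mathcal V$, and your count $|\mathcal V|=O(1/\eps^3)$ should be in terms of $\bigdelta=\eps/\con$, which changes only constants.)
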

\begin{proof}

Consider a region $R=(R_+,R_-),$ partitioned into two regions $R_+$ and $R_-,$ such that in some step of the algorithm we are checking the assumption that $R_+$ is black and $R_-$ is white, i.e., evaluating $\dout_+(R_+)+\dout_-(R_-).$ Let $\stripset$ be the set of all such regions $R$. We will show that with probability at least 2/3, the estimates $\dout_+(R_+)+\dout_-(R_-)$ are accurate on all regions in $\stripset$.

Fix $R=(R_+,R_-)\in\stripset$. Let $\Gamma$ be the set of misclassified pixels in $R$, i.e., pixels in $R_+$ which are white in $M$ and pixels in $R_-$ which are black in $M$. Define $\gamma=|\Gamma|/n^2$. Algorithm~\ref{alg:convexity-dist-approximation} approximates $\gamma$ by $\dout_+(R_+)+\dout_-(R_-)=\frac 1 s |\Gamma\cap S|.$ Equivalently, it uses the estimate $\frac 1 p |\Gamma\cap S|$ for $|\Gamma|$ (recall that $p=s/n^2$). The error of the estimate is $\err(R)=\frac 1 p |\Gamma\cap S|-|\Gamma|$.

\begin{claim}\label{claim:error-for-R}
$\Pr[|\err(R)|> \sqrt\gamma\cdot c\mydelta n^2]\leq 2\exp(-\frac 3 8 c^2\mydelta^2 s)$, where $c=1/21$.
\end{claim}

\begin{proof}
For each pixel $u,$ we define random variables $\chi_u$ and $X_u$,
where $\chi_u$ is the indicator random variable for the event $u\in S$ (i.e., a Bernoulli variable with the probability parameter $p$), whereas $X_u=\frac{\chi_u}p-1$.
Then our estimate of $|\Gamma|$ is $\frac 1 p |\Gamma\cap S|=\frac 1 p\sum_{u\in\Gamma}\chi_u,$ whereas
$\err(R)=\sum_{u\in\Gamma}X_u$. We use Bernstein inequality
\ifnum\full=1
(Theorem~\ref{thm:bernstein})
\else
(stated in the {\color{black} full version} for completeness)
\fi
with parameters
$m=\gamma n^2$ and $z=\sqrt\gamma\cdot c\mydelta n^2$ to bound $\Pr[\sum_{u\in\Gamma}X_u> \sqrt\gamma\cdot c\mydelta n^2]$.
The variables $X_u$ are identically distributed. The maximum value of $|X_u|$ is $a=\frac{1-p}{p}$. Note that
$\E[X_u^2]=\frac 1 {p^2}\E[(\chi_u -p)^2]=\frac 1 {p^2}\Var[\chi_u]=\frac{1-p}{p}=a$.
We assume w.l.o.g.\ that $z<|\Gamma|.$ (If $z\geq |\Gamma|$ then $\sum_{u\in\Gamma}X_u$ can never exceed $z$, and the probability we are bounding is 0.)
By Bernstein inequality,
\begin{eqnarray*}
\Pr\left[\sum_{u\in\Gamma}X_u>z\right]&\le& \exp \left(\frac{-z^2/2}{a|\Gamma|+a\cdot z/3}\right)
<\exp\left(-\,\frac 3 8 \cdot \frac{z^2\cdot p}{|\Gamma|}\right)=\exp\left(-\,\frac 3 8 \cdot \frac{\gamma\cdot c^2\mydelta^2n^4}{\gamma n^2}\cdot \frac s {n^2}\right)\\
&=&\exp(-\frac 3 8 c^2\mydelta^2 s).
\end{eqnarray*}

The second inequality holds because $a< 1/p$ and $z<|\Gamma|$.
The equalities are obtained by substituting the expressions for $z,|\Gamma|,$ and $p$, and simplifying.
By symmetry, $\Pr[|\err(R)|\geq z]\leq 2\exp(-\frac 3 8 c^2\mydelta^2 s)$.
\end{proof}
\ifnum\full=0
The rest of the proof appears in the {\color{black} full version}.
\end{proof}
\else
\begin{claim}\label{claim:convexity-approx-regions-count}
The number of regions in $\stripset$ is at most $50/\bigdelta^8$.
\end{claim}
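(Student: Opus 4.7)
My plan is to enumerate the regions of $\stripset$ by tracking, for each step of Algorithm~\ref{alg:convexity-dist-approximation} and its subroutines \Best and \BestFixed, which tuple of reference line-point pairs uniquely determines each region, and then count those tuples. The basic parameter counts follow directly from Definitions~\ref{def:reference-lines} and~\ref{def:reference-polygons}: there are at most $2\pi/\bigdelta+2$ reference directions in $D_\mydelta$, at most $2\sqrt{2}/\bigdelta+1$ reference lines per direction, and at most $\sqrt{2}/\bigdelta+1$ reference points per line. Consequently, with explicit absolute constants, the set of horizontal (or vertical) line-point pairs has size at most $c_1/\bigdelta^2$, and the set of all line-point pairs has size at most $c_2/\bigdelta^3$.

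For the regions produced directly by the main body of the algorithm---namely $W_{\ell_0}\cup W_{\ell_2}$, $W_{\ell_1}$, $W_{\ell_3}$, $\bigtriangleup b_0b_1b_2$, and $\bigtriangleup b_0b_3b_2$---each is specified by the loop indices $(\ell_0,b_0),(\ell_2,b_2)$ (both horizontal) together with the relevant vertical line-point pair $(\ell_1,b_1)$ or $(\ell_3,b_3)$, yielding at most $O(1/\bigdelta^6)$ distinct regions. For regions created inside \Best, each input triangle $\bigtriangleup b'b''v$ is identified by the two line-point pairs that built it (one horizontal, one vertical in the first level of recursion; in general, two pairs with restricted directions determined by the enclosing triangle). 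Each \Best iteration over reference points $(b'_0,b''_0)$ on the sides produces one quadrilateral $b'b''b''_0b'_0$ determined by those two pairs plus the two choices $b'_0,b''_0$, giving $O(1/\bigdelta^6)$ such quadrilaterals.

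The dominant contribution comes from \BestFixed. An input triangle $\bigtriangleup b'_0 b''_0 v$ is pinned down by the $\leq 3$ line-point pairs inherited from the enclosing \Best, so there are at most $O(1/\bigdelta^6)$ distinct \BestFixed inputs. Inside each input, \BestFixed iterates over line-point pairs $(\ell,b)$ with $\ell\in L_{\hat{\varphi}}$ and $b$ lying in a fixed triangle---an $O(1/\bigdelta^2)$ choice (one direction $\hat{\varphi}$, $O(1/\bigdelta)$ lines, $O(1/\bigdelta)$ reference points). Each such choice adds two new triangular regions $\bigtriangleup v'v''v$ and $\bigtriangleup b'_0 b''_0 b$, so the total contribution from \BestFixed is $O(1/\bigdelta^6)\cdot O(1/\bigdelta^2)=O(1/\bigdelta^8)$. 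Summing over all sources and substituting the explicit constants $c_1,c_2$ gives the claimed bound of $50/\bigdelta^8$.

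The main obstacle is proving that the recursion in \Best and \BestFixed does not blow up the state space, i.e.\ that every triangle ever passed to these subroutines---at any recursion depth---is characterised by a bounded tuple of reference line-point pairs. The crucial observation is that each derived vertex $v'$ (resp.\ $v''$) created by \BestFixed is the intersection of a reference line $\ell$ introduced at the current level with a side of the enclosing triangle, so the tuple describing the recursive input is obtained from the enclosing tuple by appending $(\ell,b)$. Combined with the fact that the recursion has depth $O(\log(1/\bigdelta))$ (since each subdivision halves the height until it drops below $6\bigdelta n$), this ensures the distinct-input count remains $O(1/\bigdelta^6)$ as claimed, and the overall bound holds.
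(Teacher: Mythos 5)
Your high-level plan --- count regions by the tuples of reference line-point pairs that specify them, and argue the dominant $O(1/\bigdelta^8)$ contribution comes from the inner loop of \BestFixed --- is the same enumeration strategy the paper uses, and your per-source exponents ($O(1/\bigdelta^6)$ for the outer loops and for \Best quadrilaterals, an extra $O(1/\bigdelta^2)$ inside \BestFixed) are the right ones. However, the paragraph where you argue that the DP state space stays bounded is internally inconsistent and leaves a real gap. You assert that the tuple describing a recursive input is ``obtained from the enclosing tuple by appending $(\ell,b)$'' and then appeal to an $O(\log(1/\bigdelta))$ recursion depth to conclude that the distinct-input count stays $O(1/\bigdelta^6)$. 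If tuples genuinely grow by appending one line-point pair per level of recursion, a logarithmic depth bound gives you $(\Theta(1/\bigdelta^3))^{\Theta(\log(1/\bigdelta))}$ possible tuples, i.e.\ a quasi-polynomial count, not $O(1/\bigdelta^6)$; the depth bound does not rescue the argument.

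What you actually need, and what the paper's proof (and its DP implementation in Section~\ref{sec:convexity-dist-approx-wrap-up}) relies on, is a fixed-size invariant: every instance triangle $\bigtriangleup b'b''v$ passed to \Best is characterized by \emph{exactly two} line-point pairs $(\ell(b',v),b')$ and $(\ell(b'',v),b'')$, regardless of recursion depth, because both sides $b'v$ and $b''v$ lie on reference lines, $b',b''$ are reference points on those lines, and the apex $v$ is implicitly the intersection of the two reference lines. This invariant is preserved under the Subdivision Step: the child triangle $\bigtriangleup b'_0 b v'$ has side $b'_0 v'$ on the reference line $\ell(b',v)$ and side $bv'$ on the new reference line $\ell$, with $b'_0,b$ reference points on those lines --- so again two line-point pairs, not ``enclosing tuple plus one.'' Without stating this invariant the claim that there are only $O(1/\bigdelta^6)$ triangle instances is unsupported, and the whole count falls apart. (A secondary, minor issue: you never actually carry out the constant bookkeeping that yields $50/\bigdelta^8$; the paper does this by explicitly tallying the four region types with the constants $\pi^2$ and $\sqrt 2$, and it is not obvious your slightly larger parameter estimates still land under $50$.)
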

\begin{proof}
Let $k=1/\bigdelta$. There are four types of regions in $\stripset$, each corresponding to a different call of the form $\dout_+(R_+)+\dout_-(R_-)$ in the algorithm. The first type is a horizontal double strip of the form $R_+=\emptyset$ and $R_-= W_{\ell_0}\cup W_{\ell_2}$. There are ${k+1}\choose 2$ such strips.
The second type is where $R_+$ is a black triangle $\bigtriangleup b_0b_1b_2$ (or $\bigtriangleup b_0b_3b_2$) and $R_-$ is a vertical strip $W_{\ell_1}$ (respectively, $W_{\ell_3}$). For each horizontal double strip, there are $2k-1$ vertical strips. For each of them, there are $k$ ways to choose a reference point on the vertical line that delineates the strip. So, overall, there are $(k+1)k(k-1/2)k$ regions of type 2. Type 1 and 2 together have at most $.5k^8$ regions.
%
%
Regions of type 3 are black quadrilaterals of the form $R_+=b'_0b_0''b'b''.$ Each quadrilateral is defined by two reference lines, $b'b_0'$ and $b''b_0'',$ with two reference points on each. There are ${\pi k} \choose 2$ ways to choose reference directions for the two lines. For each of them, there are at most $\sqrt 2 k\cdot {{\sqrt 2 k}\choose 2}$ ways to choose a reference line and two reference points. Overall, the number of quadrilaterals in $\stripset$ is at most $\pi^2 k^8$.
Finally, regions of type 4 are contained in triangles of the form $\bigtriangleup v b'_0 b''_0$; they are of the form either $R_+=\emptyset, R_-=\bigtriangleup v b'_0 b''_0$ or $R_+=\bigtriangleup bb'_0b''_0$, $R_-=\bigtriangleup vv'v''$. In the former case, regions are defined by two line-point pairs $(\ell(b'_0,v),b'_0)$ and $(\ell(b''_0,v),b''_0)$. There are ${\pi k} \choose 2$ pairs of reference directions. For each of them, there are at most $\sqrt 2 k$ choices for each reference line and $\sqrt 2 k$ choices for each reference points. In the latter case, they are defined by three reference line-point pairs: $(\ell(b'_0,v),b'_0),(\ell(b''_0,v),b''_0),$ and $(\ell(v',v''),b),$ but the direction of the line through $v'v''$ is determined by $b'_0,b''_0$. As before, there are ${\pi k} \choose 2$ pairs of reference directions. For each of them, there are at most $\sqrt 2 k$ choices for each reference line and $\sqrt 2k$ choices for each reference points. Overall, the number of regions of type 4 is upper-bounded by $4\pi^2 k^8$.
Overall, $|\stripset|\leq (5\pi^2+.5) k^8<50 k^8=50/\bigdelta^8,$ as claimed.
\end{proof}

By taking a union bound over all regions in $\stripset$ and applying Claims~\ref{claim:error-for-R}--\ref{claim:convexity-approx-regions-count}, we get that the probability that for one or more of them the error is larger than stated in Claim~\ref{claim:error-for-R} is at most $|\stripset|\cdot 2\exp(-\frac 3 8 c^2\mydelta^2 s)
\leq\frac{100}{\bigdelta^8}\cdot\exp(-\frac 3 8 c^2\mydelta^2 s)\leq 1/3$,
where the last inequality holds provided that $s\geq C\frac 1{\mydelta^2}\ln\frac 1\mydelta$ for some sufficiently large constant $C$. We get that
\begin{align}\label{eq:low-error-on-all}
\Pr[|\err(R)|\leq \sqrt\gamma\cdot c\mydelta n^2 \text{ for all $R\in\stripset$}]\geq 2/3.
\end{align}

Now suppose that event in (\ref{eq:low-error-on-all}) holds, that is, the error is low for all regions. Fix a reference polygon $M'$. Consider the partition of $M'$ into regions from $R=(R_+,R_-)\in\stripset$ on which Algorithm~\ref{alg:convexity-dist-approximation} evaluates $\dout_+(R_+)+\dout_-(R_-)$ while implicitly computing $\dout(M')$. Let $\stripset_{M'}\subset \stripset$ be the set of regions in the partition. Recall the four types of regions from the proof of Claim~\ref{claim:convexity-approx-regions-count}. Then  $\stripset_{M'}$ contains one region of type 1 and two regions of type 2, defined by the reference box of $M'$. Denote their areas by $A'_1,A'_2,A'_3$. For each triangle $T\in\Tend$ created during the construction of $M'$ in Definition~\ref{def:reference-polygons}, the set $\stripset_{M'}$ contains at most one region of type 3 and at most one region of type 4. They were implicitly colored, respectively, in Item~\ref{item:ref-poly-definition} and Items~\ref{item:ref-poly-definition-tall}-\ref{item:ref-poly-definition-short} of Definition~\ref{def:reference-polygons}, when triangle $T$ was processed. Let $A_T$ and $A'_T$ denote their respective areas.

Recall that $A(R)$ denotes the area of $R$ and that an approximate (but precise enough for asymptotic analysis) upper bound on the number of misclassified pixels in $R$ is $A(R)$. 
Since the event in (\ref{eq:low-error-on-all}) holds,
$$\err(M')\leq \sum_{R\in\stripset_{M'}}\err(R)
\leq c\mydelta n \sum_{R\in\stripset_{M'}} \sqrt{A(R)}
\leq c\mydelta n \big(\sum_{j=1}^3\sqrt{A'_j}  + \sum_{T\in\Tend}(\sqrt{A_T}+\sqrt{A'_T})\big).
$$
Since $\sum_{j=1}^3A'_j\leq n^2$ and $A_T+A'_T\leq A(T)$ for all $T\in\Tend$,  by concavity of the square root function,
$$\sum_{j=1}^3\sqrt{A'_j}\leq\sqrt{3\sum_{j=1}^3 A'_j}\leq \sqrt{3} n
\text{ and }
\sqrt{A_T}+\sqrt{A'_T}\leq\sqrt{2(A_T+A'_T)}\leq\sqrt{2A(T)}.$$
We substitute these expressions in the previous inequality, use Lemma~\ref{lem:sum-of-roots-of-areas} and recall that $c=1/21$:
$$\err(M')\leq
c\mydelta n \big(\sqrt{3} n  + \sqrt 2 \sum_{T\in\Tend}\sqrt{A(T)}\big)
\leq
c\mydelta n^2 (\sqrt{3}  + 11\sqrt 2)\leq \frac 5 6 \mydelta n^2.
$$
This holds for all reference polygons $M'$ as long as the event in (\ref{eq:low-error-on-all}) happens, i.e., with probability at least 2/3. This completes the proof of Lemma~\ref{lem:accuracy-on-ref-polygons}.
\end{proof}

For completeness, we state Bernstein's inequality, which was used in the proof of Lemma~\ref{lem:accuracy-on-ref-polygons}.

\begin{theorem}[Bernstein's inequality]\label{thm:bernstein}
 Let $X_1,\dots,X_m$ be $m$ independent zero-mean random variables, where $|X_i|\leq a$ for all $i\in [m]$. Then for all positive $z$,
$$
\Pr\left[\sum_{i = 1}^m X_i > z\right] \le \exp\left(- \frac{z^2/2 }{\sum_{i = 1}^m \E[X_i^2]+a\cdot z/3}\right) .
$$

\end{theorem}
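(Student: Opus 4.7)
The plan is to use the standard Chernoff-bound (exponential moment / Cramér) method. First I would apply Markov's inequality to $e^{t\sum_i X_i}$ for an arbitrary $t>0$, yielding
$$
\Pr\Big[\textstyle\sum_{i=1}^m X_i>z\Big]\le e^{-tz}\prod_{i=1}^m \E\bigl[e^{tX_i}\bigr],
$$
and at the end optimize $t$. This reduces the problem to obtaining a good upper bound on each moment generating function $\E[e^{tX_i}]$ using only the zero-mean assumption, the bound $|X_i|\le a$, and the variance proxy $\E[X_i^2]$.

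The main technical step is the per-variable bound. I would Taylor-expand
$$
\E\bigl[e^{tX_i}\bigr]=1+t\E[X_i]+\sum_{k\ge 2}\frac{t^k\E[X_i^k]}{k!}
$$
and exploit zero-mean to kill the linear term. For $k\ge 2$, the inequality $|X_i|^k\le a^{k-2}X_i^2$ lets me bound $\E[X_i^k]\le a^{k-2}\E[X_i^2]$, so
$$
\E\bigl[e^{tX_i}\bigr]\le 1+\E[X_i^2]\sum_{k\ge 2}\frac{t^k a^{k-2}}{k!}.
$$
I would then handle the tail sum $\sum_{k\ge 2}\frac{(ta)^{k-2}}{k!}$ by comparing term-by-term with a geometric series $\sum_{k\ge 2}\frac{(ta/3)^{k-2}}{2}$ (valid because $k!\ge 2\cdot 3^{k-2}$ for $k\ge 2$), obtaining
$$
\E\bigl[e^{tX_i}\bigr]\le 1+\frac{t^2 \E[X_i^2]/2}{1-ta/3}\le \exp\!\left(\frac{t^2\E[X_i^2]/2}{1-ta/3}\right),
$$
provided $ta<3$. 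The $1+x\le e^x$ step loses nothing useful and turns the product over $i$ into a sum in the exponent.

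Combining these pieces gives, with $v=\sum_{i=1}^m \E[X_i^2]$,
$$
\Pr\Big[\textstyle\sum_i X_i>z\Big]\le\exp\!\left(-tz+\frac{t^2 v/2}{1-ta/3}\right)
\quad\text{for all }0<t<3/a.
$$
Finally I would choose $t=z/(v+az/3)$, which one checks satisfies $ta<3$ and is essentially the minimizer; substituting yields the stated bound $\exp\!\bigl(-\tfrac{z^2/2}{v+az/3}\bigr)$. The only step that requires any care is the tail-sum comparison that produces the $1-ta/3$ denominator; everything else is mechanical Chernoff manipulation. The hypothesis $a\ge\max_i|X_i|$ is used exactly once, in $|X_i|^k\le a^{k-2}X_i^2$, and the independence hypothesis only in factoring the MGF of the sum.
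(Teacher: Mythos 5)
Your proposal is correct. Note that the paper does not prove this statement at all: Bernstein's inequality is quoted there as a classical tool (``for completeness''), so there is no in-paper argument to compare against. Your derivation is the standard Chernoff/MGF proof and all the steps check out: the zero-mean assumption kills the linear term, $|X_i|\le a$ gives $\E[X_i^k]\le a^{k-2}\E[X_i^2]$ for $k\ge 2$, the comparison $k!\ge 2\cdot 3^{k-2}$ yields the geometric tail and hence the $1-ta/3$ denominator for $t<3/a$, and substituting $t=z/(v+az/3)$ (which indeed satisfies $ta<3$ whenever $v>0$; the case $v=0$ is trivial) gives exactly $\exp\bigl(-\tfrac{z^2/2}{v+az/3}\bigr)$ with $v=\sum_i\E[X_i^2]$. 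This is a complete and self-contained proof of the stated inequality.
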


\subsection{Wrapping up: proof of Theorem~\ref{thm:convexity_dist_appr} and corollary on agnostic learning}\label{sec:convexity-dist-approx-wrap-up}
\subparagraph{Analysis of Algorithm~\ref{alg:convexity-dist-approximation}.}
Let $d_M$ be the distance of $M$ to convexity. Then there exists a convex image $M^*$ such that $\dis(M,M^*)=d_M$.
By Lemma~\ref{lem:nearby-reference-polygon}, there is a reference polygon $\hat M$ such that $\dis(M^*,\hat M)\leq \mydelta/6$, and consequently, $d_M\leq\dis(M,\hat M)\leq d_M+\mydelta/6$.
By Lemma~\ref{lem:accuracy-on-ref-polygons},
with probability at least 2/3 over the choice of the samples taken by Algorithm~\ref{alg:convexity-dist-approximation},
$|\dout(M')-\dis(M,M')|\leq 5\mydelta/6$ for all reference polygons $M'$.
Suppose this event happened. Then $\dout\geq d_M-5\mydelta/6$ because $\dis(M,M')\geq d_M$ for all convex images $M'$. Moreover,  $\dout(\hat{M})\leq\dis(M,\hat{M})+5\mydelta/6\leq d_M+\mydelta.$ Thus,
$d_M-5\mydelta/6\leq\dout\leq\dout(\hat{M})\leq d_M+\mydelta.$ That is, $|d_M-\dout|\leq \mydelta$ with probability at least 2/3, as required.

\subparagraph{Sample and time complexity of Algorithm~\ref{alg:convexity-dist-approximation}.} The number of samples taken by the algorithm is $s=O(\mydelta^{-2}\log \mydelta^{-1})$.

Next we explain how to implement it to run in time $O(\mydelta^{-8})$. Refer to Figure~\ref{fig:tolerant-reducing-area}.
Each instance triangle $\bigtriangleup b'b''v$ of the dynamic programming in subroutine \Best is specified by two line-point pairs: $(\ell(b',v),b'),(\ell(b'',v),b'')$. The number of line-point pairs is $O(\mydelta^{-3})$ because for each we select
the reference direction, the shift of the line, and the reference point, each
in $O(\mydelta^{-1})$ ways.  Hence, we have $O(\mydelta^{-6})$ entries in the dynamic
programming table for \Best.

In the process of solving an instance of  \Best, we consider $O(\mydelta^{-2})$ possibilities for points $b_0',b_0''$, that is, $O(\mydelta^{-8})$ possibilities over all instances. We show how to evaluate each of the possibilities in amortized time $O(1)$.
For that, we count white and black sample pixels in each sub-area in Figure~\ref{fig:tolerant-reducing-area} in amortized time $O(1)$.

First, we show how to do it for the entire triangle $\bigtriangleup b'b''v$.  We have $O(\mydelta^{-6})$
triangles that can be partitioned into $O(\mydelta^{-5})$ groups by specifying
the first line-point pair $(\ell(b',v),b')$ and the second line (through $b''$ and $v$). That is, within each group, we vary
only point $b''$ on the second line.  We sort all sample points $p\in S$
according to the angle of the segment $pb'$. Similarly, we sort
the reference points $b''$ on the second line according to the angle of the segment $b''b'$.  After sorting, a single scan can establish the counts of white and black pixels in the triangles. Clearly,
we can sort in time $o(\mydelta^{-3})$. Thus, we
compute white/black counts for all instance triangles of \Best in time $o(\mydelta^{-8})$.

When we consider a possibility in \Best, the triangle
$\bigtriangleup b'_0b''_0v$ is also an instance triangle, so we can find the
white/black counts for the quadrilateral $b'b_0'b_0''b''$ by computing the difference between the counts for entire triangle $\bigtriangleup b'b''v$ and triangle $\bigtriangleup b_0'b_0''v$, that is, in time $O(1)$.

When we consider a possibility in subroutine \BestFixed, we need the
counts for the four parts of $\bigtriangleup b'_0b''_0v$. Since we already calculated the counts for $\bigtriangleup b'_0b''_0v$ and because we can perform subtractions, it is
enough to do it for three parts.  Two of them,
$\bigtriangleup b'_0bv'$ and
$\bigtriangleup b''_0bv''$, are instance triangles for \Best, so we already calculated their counts. The third we choose is the triangle $\bigtriangleup v'v''v$.  Note that this triangle is specified by three reference lines, so
there are $O(\mydelta^{-6})$ such triangles. We make a table for all of them.
To fill the table, we consider $O(\mydelta^{-5})$ groups:  we group together triangles for which line $\ell$ has
a common direction. By sorting samples in $S$, we can compute the counts for
each group in time $o(\mydelta^{-3})$. Thus, the time for filling the second table is $o(\mydelta^{-8})$. To summarize, Algorithm~\ref{alg:convexity-dist-approximation} runs in time $O(\mydelta^{-8})$. This completes the proof of Theorem~\ref{thm:convexity_dist_appr}.
\end{proof}

\begin{corollary}\label{cor:convexity-agnostic-learner}
The class of convex images is properly agnostically PAC-learnable with sample complexity  $O(\frac{1}{\mydelta^2}\log\frac{1}{\mydelta})$ and time complexity $O(\frac{1}{\mydelta^8})$ under the uniform distribution.
\end{corollary}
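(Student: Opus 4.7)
The plan is to follow the same template that yielded Corollary~\ref{cor:half-plane-agnostic-learner} from Theorem~\ref{thm:half-plane-dist-appr}: modify Algorithm~\ref{alg:convexity-dist-approximation} so that, in addition to returning the numerical estimate $\dout$, it also returns the reference polygon $\hat M\in P_\mydelta$ that realizes $\dout=\min_{M'\in P_\mydelta}\dout(M')$. Since the algorithm already works by dynamic programming over the set of reference polygons, this requires only a standard backpointer reconstruction: for every DP entry (each instance triangle in \Best{} and \BestFixed{}, and each choice of reference box in the top-level loop) we store which choice from items~\ref{item:ref-poly-definition}--\ref{item:ref-poly-definition-short} of Definition~\ref{def:reference-polygons} attained the minimum, and we assemble $\hat M$ by tracing these pointers from the outer minimizer. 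The bookkeeping does not change the asymptotic sample complexity ($O(\mydelta^{-2}\log\mydelta^{-1})$) or the running time ($O(\mydelta^{-8})$).

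Next I would argue correctness. Let $d_M=\dis(M,\mathcal C)$ be the distance from $M$ to convexity, and let $M^\star$ be a convex image achieving this distance. By Lemma~\ref{lem:nearby-reference-polygon}, there is a reference polygon $M^\dagger\in P_\mydelta$ with $\dis(M^\star,M^\dagger)\leq \mydelta/6$, hence $\dis(M,M^\dagger)\leq d_M+\mydelta/6$ by the triangle inequality. By Lemma~\ref{lem:accuracy-on-ref-polygons}, with probability at least $2/3$ over the samples, every reference polygon $M'\in P_\mydelta$ satisfies $|\dout(M')-\dis(M,M')|\leq 5\mydelta/6$. Condition on this event. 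Then $\dout(\hat M)\leq \dout(M^\dagger)\leq \dis(M,M^\dagger)+5\mydelta/6\leq d_M+\mydelta$, and consequently
\begin{equation*}
\dis(M,\hat M)\leq \dout(\hat M)+\tfrac{5\mydelta}{6}\leq d_M+\mydelta+\tfrac{5\mydelta}{6}.
\end{equation*}
A tighter accounting that matches the additive error $\mydelta$ claimed in the corollary can be obtained either by tightening the constants in Lemma~\ref{lem:accuracy-on-ref-polygons} (the constant $c=1/21$ already has slack), or equivalently by running the modified algorithm with parameter $\mydelta/3$ in place of $\mydelta$, which only affects the hidden constants in the complexity bounds. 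Since $\hat M\in P_\mydelta\subseteq\mathcal C$ and $\dis(M,\hat M)\leq \dis(M,\mathcal C)+\mydelta$, this is exactly the guarantee required of a proper agnostic PAC learner under the uniform distribution.

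The only genuinely non-routine point is confirming that $\hat M$ can be extracted from the dynamic program without inflating the running time: this is immediate because the DP has only $O(\mydelta^{-8})$ transitions total, each of which we already evaluate in amortized $O(1)$ time in the proof of Theorem~\ref{thm:convexity_dist_appr}, so storing one backpointer per transition adds only $O(\mydelta^{-8})$ space and $O(1)$ extra work per transition. The sampling access required is exactly the uniform (equivalently, Bernoulli, by Remark~\ref{remark:bernoulli}) access used by Algorithm~\ref{alg:convexity-dist-approximation}, matching the PAC learning access model, which completes the argument.
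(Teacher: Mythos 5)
Your proposal takes essentially the same approach as the paper: the paper's proof of this corollary is a three-sentence note that one modifies Algorithm~\ref{alg:convexity-dist-approximation} to return the minimizing reference polygon via an extra DP table, and then appeals to the analysis of that algorithm. Your backpointer-reconstruction argument and the observation that this adds only $O(1)$ work per transition is exactly what the paper has in mind, and your remark that reference polygons are themselves convex images (so the output is proper) is the right reason the learner is proper.

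You are also more careful than the paper on one point. Tracing the paper's own inequalities, letting $\hat M$ denote the polygon returned (the minimizer of $\dout$, not the polygon near $M^*$), the chain $\dis(M,\hat M)\le \dout(\hat M)+\tfrac{5\mydelta}{6}\le d_M+\mydelta+\tfrac{5\mydelta}{6}$ yields $d_M+\tfrac{11\mydelta}{6}$, not $d_M+\mydelta$; the paper's one-line invocation of ``the analysis'' glosses over the fact that the bound $\dis(M,\hat M)\leq d_M+\mydelta/6$ established there refers to the \emph{near} reference polygon, not the \emph{output} one. Your proposed remedies — tightening the constant $c$ in Lemma~\ref{lem:accuracy-on-ref-polygons}, or simply running the algorithm with error parameter $\mydelta/3$ — are both valid and standard, and neither changes the stated asymptotic sample or time complexity. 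So the proposal is correct, matches the paper's strategy, and patches a small constant-factor slack that the paper's terse proof does not address.
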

\begin{proof}
We can modify Algorithm~\ref{alg:convexity-dist-approximation} to output, along with $\dout$, a reference polygon $\hat{M}$ with $\dout(\hat{M})=\dout$. With an additional DP table, we can compute which points became its vertices. By the analysis of Algorithm~\ref{alg:convexity-dist-approximation}, with probability at least 2/3, the output $\hat{M}$ satisfies $\dis(M,\hat{M})\leq d_M+\mydelta.$
\end{proof}
\fi

\section{Distance Approximation to the Nearest Connected Image}\label{sec:connectedness}
To define {\em connectedness}, we consider {\em the  image graph $G_M$} of an image $M$. The vertices of $G_M$ are
$\{(i,j)\mid M[i,j]=1\}$, and two vertices $(i,j)$ and $(i',j')$ are connected by an edge if $|i-i'|+|j-j'|=1$.
In other words, the image
graph consists of black pixels connected by the grid lines. The image is
{\em connected} if its image graph is connected.

\begin{theorem}\label{thm:connectedness_dist_appr}
There is a block-uniform $\mydelta$-additive distance approximation algorithm for connectedness with sample complexity $O(\frac{1}{\mydelta^{4}})$ and running time $\exp\left(O\left(\frac 1 \mydelta \right)\right)$.
\end{theorem}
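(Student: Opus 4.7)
The plan is to combine block sampling with a combinatorial decomposition of the distance to connectedness, exploiting the hint in the introduction that ``distance to connectedness can be represented as an average of distances of sub-images to a related property.'' At a high level: partition the $n\times n$ image into $r$-blocks for an appropriately chosen $r=r(\mydelta)$, define a localized property $\cP_{\text{loc}}$ on $r\times r$ sub-images so that the global distance of $M$ to connectedness equals, up to additive error $O(\mydelta)$, the average over all $r$-blocks $B$ of $\dis(B,\cP_{\text{loc}})$, and then estimate this average by drawing $O(1/\mydelta^2)$ blocks uniformly and outputting the empirical average of $\dis(B,\cP_{\text{loc}})$.

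First I would formalize $\cP_{\text{loc}}$. The right definition should capture how much ``work'' must be done inside a block to allow the whole image to be made connected: intuitively, a block is in $\cP_{\text{loc}}$ if its contents are compatible with some globally connected completion, possibly using thin corridors that cross the block boundary at a bounded number of places encoded by a succinct boundary signature. The definition must be symmetric across blocks so that local fixes glue into a global fix without double-counting boundary pixels.

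Second, establish the combinatorial characterization $\dis(M,\text{Connected}) = \frac{1}{N}\sum_B \dis(B,\cP_{\text{loc}}) \pm O(\mydelta)$, where $N$ is the number of blocks. The ``$\leq$'' direction is the constructive one: given the best local fix of each block, concatenate the fixes (resolving any conflicts along shared borders) into an $M'$ that is connected, whose total modification cost is bounded by the sum of local costs plus an $O(\mydelta n^2)$ term from the total block perimeter. The ``$\geq$'' direction, which I expect to be the main technical obstacle, requires showing that an optimal global connectedness-fix induces locally good assignments on almost every block; the natural approach is to start from an optimal global modification, restrict it to each block, and charge the discrepancy to boundary crossings, arguing that the amortized boundary contribution is $O(\mydelta)$ by choice of $r$.

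Third, compute $\dis(B,\cP_{\text{loc}})$ exactly for each sampled block. Since $\cP_{\text{loc}}$ depends only on the block's contents and its boundary signature, whose description length is $O(r)$, we can exhaustively enumerate boundary signatures and run a dynamic program over the $r\times r$ interior, giving running time $\exp(O(r))$ per block. Choosing $r=\Theta(1/\mydelta)$ makes the per-block running time $\exp(O(1/\mydelta))$ and the per-block sample size $r^2=O(1/\mydelta^2)$, giving total sample complexity $O(1/\mydelta^2)\cdot O(1/\mydelta^2)=O(1/\mydelta^4)$. Finally, since $\dis(B,\cP_{\text{loc}})\in[0,1]$, a Hoeffding bound over the $O(1/\mydelta^2)$ sampled blocks yields the desired $\mydelta$-additive estimate with probability at least $2/3$.
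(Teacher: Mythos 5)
Your high-level blueprint---decompose into blocks of side $\Theta(1/\mydelta)$, reduce the global distance to an average of per-block distances to a local property, sample $O(1/\mydelta^2)$ blocks, and compute each local distance exactly in $\exp(O(1/\mydelta))$ time---matches the paper's, and the complexity bookkeeping is right. The gap is in the decomposition itself: your local property $\cP_{\text{loc}}$ is never pinned down, and the most natural readings of it do not give a clean exchange between the global and local quantities.

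As you set it up, whether a block is ``compatible with a globally connected completion'' is not an intrinsic property of the block, and making it intrinsic by minimizing over boundary signatures breaks the additivity: adjacent blocks can independently choose mutually inconsistent signatures, so the average of local minima can undershoot the global cost and your ``$\geq$'' direction does not close. You gesture at an amortized boundary-crossing charge, but as sketched this is a genuine obstacle rather than a routine bookkeeping step, and it is precisely the point where the proof has to do real work.

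The paper's proof avoids the whole issue with one structural device you are missing. It reserves a sparse grid of pixels---the rows and columns at coordinates divisible by $r$, which lie \emph{outside} the $(r-1)\times(r-1)$ sub-squares---and considers the auxiliary image $M_\mydelta$ in which every grid pixel is forced black, an $O(\mydelta n^2)$ change. With the grid black, connectivity of $M_\mydelta$ becomes \emph{exactly} equivalent to each square \emph{independently} satisfying border connectedness $\C'$ (every black pixel has a path to the square's own border). No boundary signatures, no inter-block consistency constraint, no charging: the black grid is a universal skeleton to which every border-connected square automatically attaches, so the decomposition is the identity $d_\mydelta = \bigl((1-\mydelta/4)(1-1/n)\bigr)^2\cdot\frac{1}{|S_{4/\mydelta}|}\sum_{S}\dis(S,\C')$. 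Your ``$\geq$'' direction also collapses to a one-liner: grid the optimal connected fix $M'$ of $M$; the result lies in the grid-connected class and is within $\dis(M,M')$ of $M_\mydelta$, so $d_\mydelta\le d_M$. Your remark about resolving conflicts along shared borders at $O(\mydelta n^2)$ cost is this grid trick trying to emerge, but because your blocks tile the whole image rather than leaving a dedicated grid skeleton disjoint from the sub-squares, and because your local property floats freely with unspecified signatures, you never obtain the intrinsic local property that makes the exchange exact and the per-block dynamic program simple: the paper's DP only tracks the connectivity status of maximal runs of ones within each row, not boundary signatures.
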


\ifnum\full=1
\subsection{Border Connectedness}\label{sec:border_connectedness}
\fi

The first idea in our algorithms for connectedness is that we can modify an image
\ifnum\full=1
in a relatively few places by superimposing a grid on it
(as shown in Figures~\ref{fig:Hawaii} and~\ref{fig:Hawaii-grided}),
and as a result obtain an image whose distance to connectedness is determined by the properties of individual squares into which the grid lines partition the image. The squares and the relevant property of the squares  are defined next.
\begin{figure}[ht]
\begin{minipage}[b]{0.45\linewidth}
\centering
\includegraphics[width=\linewidth]{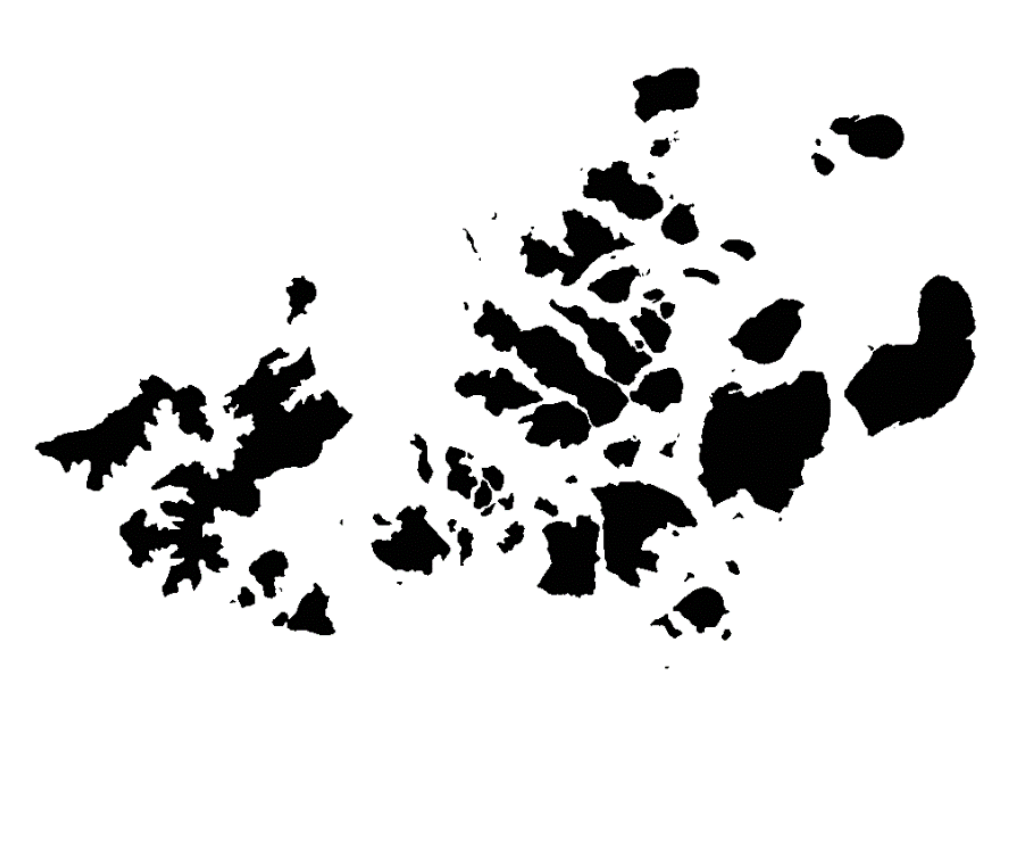}
\caption{An image $M$.}
\label{fig:Hawaii}
\end{minipage}
\hspace{0.1\linewidth}
\begin{minipage}[b]{0.45\linewidth}
\centering
\includegraphics[width=\linewidth]{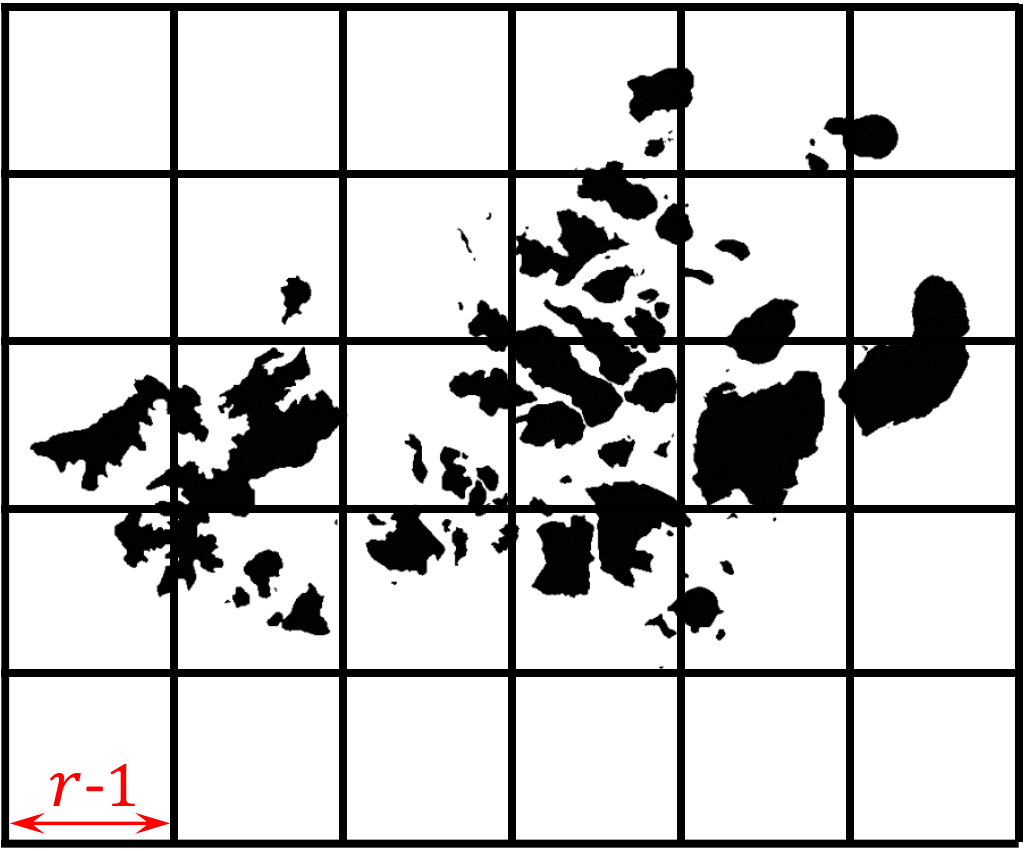}
\caption{A gridded image obtained from~$M$.}
\label{fig:Hawaii-grided}
\end{minipage}
\end{figure}
\else
by superimposing a grid on it
(see figures in the {\color{black} full version}),
and as a result obtain a nearby image whose distance to connectedness is determined by the properties of individual squares into which the grid lines partition the image. The squares and the relevant property of the squares  are defined next.
\fi

For a set $S\subset \integerset{n}^2$ and $(i,j)\in \integerset{n}^2$, we define
$S+(i,j)=\{(x+i,y+j):~(x,y)\in S\}$.

\begin{definition}[Squares and grid pixels]\label{def:squares} Fix a side length $n\equiv 1\pmod r$.
For all integers $i,j\in\integerset{n-r}$, where $i$ and $j$ are divisible by $r$, the $(r-1)\times (r-1)$ image that
consists of all pixels in $[r-1]^{2}+(i,j)$ is called an {\em $r$-square} of $M$. The set of all $r$-squares of $M$ is denoted $S_r$.

The pixels that do not lie in any squares of $S_r$, i.e., pixels $(i,j)$ where $i$ or $j$ is divisible by $r,$ are called {\em grid pixels}. The set of all grid pixels is denoted by $\gp_r$.
 \end{definition}

\begin{claim}\label{claim:GP-size}
$|\gp_r|\leq 2n^2/r$.
\end{claim}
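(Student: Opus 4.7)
The plan is a short counting argument directly from Definition~\ref{def:squares}. By that definition, a pixel $(i,j)\in\integerset{n}^2$ lies in $\gp_r$ iff at least one of $i,j$ is divisible by $r$. Since $n \equiv 1 \pmod r$, writing $n = kr + 1$, the multiples of $r$ in $\integerset{n} = \{0,1,\dots,n-1\}$ are exactly $0, r, 2r, \dots, kr = n-1$, so there are precisely $k+1 = (n-1)/r + 1$ ``grid'' row indices and, symmetrically, $k+1$ grid column indices.

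By inclusion--exclusion applied to the $k+1$ grid rows and $k+1$ grid columns, the number of grid pixels is
\[
|\gp_r| \;=\; 2n(k+1) - (k+1)^2 \;=\; (k+1)\bigl(2n-(k+1)\bigr).
\]
To finish, I would substitute $k+1 = (n+r-1)/r$ and simplify; the stated inequality $|\gp_r|\leq 2n^2/r$ then reduces to the polynomial inequality $(n+r-1)^2 \geq 2nr(r-1)$, which is elementary to verify in the intended regime $n \gg r$ (the regime in which $r$-squares are actually meaningful).

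I do not anticipate any substantive obstacle. The only mild subtlety is that the crude union bound by itself gives only $|\gp_r|\leq 2n(k+1) = 2n^2/r + O(n)$, so the $-(k+1)^2$ correction coming from inclusion--exclusion is essential for pinning down the stated bound without a lower-order slack. The divisibility hypothesis $n \equiv 1 \pmod r$ is what lets us count $k+1$ exactly as an integer rather than carry a ceiling through the algebra.
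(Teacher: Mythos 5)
Your approach is the same as the paper's: the paper's one-line proof is exactly the inclusion--exclusion count $|\gp_r| = 2(k+1)n - (k+1)^2$ with $k+1 = (n-1)/r + 1$, followed by the asserted bound. Your observation that the $-(k+1)^2$ correction is essential is correct --- the union bound alone gives $2n^2/r + \Theta(n)$, which overshoots --- and your algebraic reduction to $(n+r-1)^2 \geq 2nr(r-1)$ is a correct and useful unpacking of a step the paper leaves tacit. The only imprecision is the quoted regime: the reduced inequality holds not merely when $n \gg r$, but when $n \gtrsim 2r^2$. Rewriting it as a quadratic in $n$ gives the exact threshold $n \geq (r-1)^2 + (r-1)\sqrt{(r-1)^2-1}$, which is roughly $2(r-1)^2$; for smaller $n$ the claim is literally false (e.g., $n = 4$, $r = 3$ gives $|\gp_r| = 12 > 32/3 = 2n^2/r$). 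The paper never states this hypothesis either, so you inherit its gap, but writing ``$n \gg r$'' actively misstates the required asymptotic: a reader could reasonably conclude that $n = 10r$ suffices, which fails once $r$ is moderately large.
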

\ifnum\full=1
\begin{proof}
$|\gp_r|=2(\frac{n-1}r +1)n-(\frac{n-1}r +1)^2
\leq 2n^2/r$.
\end{proof}
\fi
Note that a square consists of pixels of an \rblock, with the pixels of the first row and column removed. Therefore, a block-uniform algorithm can obtain a uniformly random $r$-square.

Recall the definition of the border of an image from Section~\ref{sec:defintions_notation}.
\begin{definition}[Border connectedness]\label{def:border_connectedness}
A (sub)image $S$ is {\em border-connected} if for every black pixel $(i,j)$ of $S$, the image graph $G_S$ contains a path from $(i,j)$ to a pixel on the border. The property
\emph{border connectedness,} denoted $\C'$, is the set of all border-connected images.
\end{definition}


\ifnum\full=1
\subsection{Proof of Theorem~\ref{thm:connectedness_dist_appr}}\label{sec:proof_of_thm}
\fi
The main idea behind Algorithm~\ref{alg:connectedness}, used to prove Theorem~\ref{thm:connectedness_dist_appr}, is to relate the distance to connectedness to the distance to another property, which we call {\em grid connectedness}. The latter distance is the average over squares of the distances of these squares to {\em border connectedness}. The average can be easily estimated by looking at a sample of the squares.


W.l.o.g.\ assume that $n \equiv 1 \pmod {4/\mydelta}$. (Otherwise, we can pad the image with white pixels without changing whether it is connected and adjust the accuracy parameter.)

\begin{algorithm}\caption{Distance approximation to connectedness.}
\SetKwInOut{Input}{input}\SetKwInOut{Output}{output}
\label{alg:connectedness}
\Input{$n\in\mathbb{N}$ and $\mydelta\in(0,1/4)$; block-sample access to an $n\times n$ binary matrix $M$.}
\DontPrintSemicolon
\BlankLine
\nl Sample $s= 4/\mydelta^2$ squares uniformly and independently from $S_{4/\mydelta}$ (see Definition~\ref{def:squares}).\;
\tcp{This can be done by drawing random blocks from the $4/\mydelta$-partition of $\domain$.}
\nl\label{step:ave-dist-border-conn} For each such square $S$,
compute $\dis(S,\C')$ (see \ifnum\full=1 Section~\ref{sec:border-con}\else the {\color{black} full version}\fi \ for details), where $\C'$ is border connectedness (see Definition~\ref{def:border_connectedness}). Let $\dhatsquares$ be the average of computed distances $\dis(S,\C')$.\;
\nl \Return $\dout = \left((1-\frac \mydelta 4) (1-\frac 1n)\right)^2\cdot\dhatsquares$.
\end{algorithm}

\begin{definition}\label{def:grid_pixels} Fix $\mydelta\in(0,1/4)$. Let image $M_\mydelta$ be a {\em gridded image} obtained from image $M$ as follows:
\ifnum\full=0
\begin{center}
$M_\mydelta[i,j]=
\begin{cases}
1 &\text{ if } (i,j)$ is a grid pixel from $\gp_{4/\mydelta};\\
M[i,j] & \text{ otherwise.}
\end{cases}
$
\end{center}
\else
$$M_\mydelta[i,j]=
\begin{cases}
1 &\text{ if } (i,j)$ is a grid pixel from $\gp_{4/\mydelta};\\
M[i,j] & \text{ otherwise.}
\end{cases}
$$
\fi
Let $\C$ be the set of all connected images. For $\mydelta\in(0,1/4),$ define \emph{grid connectedness}
$\C_{\mydelta}=\{ M \mid  M \in\C  \text{, and $M[i,j]=1$ for all $(i,j)\in \gp_{4/\mydelta}$}\}.$
\end{definition}
\begin{lemma}\label{lem:distance_relation}
 Let $d_M=\dis(M,\C)$ and  $d_\mydelta=\dis(M_\mydelta,\C_{\mydelta})$. Then $d_M-\frac{\mydelta}{2}\leq d_\mydelta\leq d_M$. Moreover,
$$d_\mydelta=\Bigl(\bigl(1-\frac \mydelta 4\bigr) \bigl(1-\frac 1n\bigr)\Bigr)^2\cdot\frac 1 {|S_{4/\mydelta}|}\sum_{S\in S_{4/\mydelta}}\dis(S,\C').$$
\end{lemma}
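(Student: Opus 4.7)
The plan is to prove the three parts separately: $d_\eps \le d_M$, $d_M - \eps/2 \le d_\eps$, and the explicit formula for $d_\eps$.

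For the upper bound $d_\eps \le d_M$, I would take an optimal $M^* \in \C$ with $\dis(M,M^*)=d_M$, and define $M^*_\eps$ by setting every grid pixel (from $\gp_{4/\eps}$) to black. The key point is that $M^*_\eps$ lies in $\C_\eps$: every non-grid pixel is adjacent to a grid pixel (since squares have side $r-1$ surrounded by grid lines, each square pixel has at least one grid neighbor), and the grid itself is a connected subgraph, so making the grid entirely black leaves every black pixel connected. Since $M_\eps$ and $M^*_\eps$ agree on grid pixels (both black there), their disagreements are a subset of those between $M$ and $M^*$, giving $d_\eps \le \dis(M_\eps, M^*_\eps) \le \dis(M, M^*) = d_M$.

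For the lower bound, take an optimal $M^* \in \C_\eps$ achieving $\dis(M_\eps, M^*)=d_\eps$; since $M^* \in \C_\eps \subseteq \C$, the triangle inequality gives $d_M \le \dis(M, M^*) \le \dis(M, M_\eps) + \dis(M_\eps, M^*)$. The first term counts only grid pixels (where $M_\eps$ might differ from $M$ by flipping white to black), so $\dis(M, M_\eps) \le |\gp_{4/\eps}|/n^2 \le \eps/2$ by Claim~\ref{claim:GP-size} (applied with $r=4/\eps$). This yields $d_M - \eps/2 \le d_\eps$.

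For the formula, the central observation is that an image $M'$ whose grid pixels are all black belongs to $\C_\eps$ if and only if every $r$-square of $M'$ (viewed as an $(r-1)\times(r-1)$ sub-image) lies in $\C'$. The forward direction holds because a black pixel in a square can only leave the square through a black border pixel (grid pixels separate squares), so connectedness to the black grid forces a path within the square to its border. The reverse holds because each border pixel of a square is adjacent to a grid pixel, so border-connectedness of every square, combined with connectedness of the all-black grid, makes the whole image connected. Because the optimization problems over different squares are independent (changes within one square cannot affect border-connectedness of another), we get the decomposition
\[
\Dis(M_\eps, \C_\eps) = \sum_{S \in S_{4/\eps}} \Dis(S, \C').
\]

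Converting absolute to relative distances, each square has $(r-1)^2$ pixels with $r=4/\eps$ and there are $|S_{4/\eps}| = ((n-1)/r)^2$ squares (since $n \equiv 1 \pmod r$). Thus
\[
d_\eps = \frac{1}{n^2}\sum_{S \in S_{4/\eps}}(r-1)^2\,\dis(S,\C') = \frac{(r-1)^2 |S_{4/\eps}|}{n^2}\cdot \frac{1}{|S_{4/\eps}|}\sum_{S}\dis(S,\C'),
\]
and the prefactor simplifies to $\bigl((1-1/r)(1-1/n)\bigr)^2 = \bigl((1-\eps/4)(1-1/n)\bigr)^2$, yielding the claimed identity. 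The main obstacle is the equivalence between global connectedness of the gridded image and local border-connectedness of each square; once this is established, the decomposition of the optimization problem and the counting argument are routine.
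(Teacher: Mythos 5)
Your overall structure --- three separate bounds, with the formula coming from a local--global equivalence (grid-connectedness iff every square is border-connected) plus a pixel count --- matches the paper's proof, and your lower-bound and formula arguments are correct and usefully fill in detail that the paper states tersely.

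However, your justification of the upper bound $d_\mydelta\le d_M$ rests on a false claim: you assert that ``every non-grid pixel is adjacent to a grid pixel.'' This fails whenever $r=4/\mydelta\ge 4$: a pixel in the interior of an $(r-1)\times(r-1)$ square, two or more steps from the square's edge, has no grid neighbor; only the \emph{border} pixels of a square touch the grid. What actually needs to be argued is that gridding a connected $M^*$ makes each square of $M^*_\mydelta$ border-connected. The correct reasoning is: since $M^*$ is connected, a black pixel $p$ in square $S$ has a path in $G_{M^*}$ to any other black pixel; the first time such a path leaves $S$ it must step onto a grid pixel (grid lines separate squares), so the path's last pixel inside $S$ lies on $S$'s border, whence $p$ is connected to $S$'s border within $S$. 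Note this requires such a path to exist: in the degenerate case where all black pixels of $M^*$ are confined to the interior of a single square (not touching its border), $M^*_\mydelta$ is disconnected and $d_\mydelta\le d_M$ actually fails (though only by $O(1/n^2)$, which is negligible downstream). The paper's proof makes the same assertion ``$M'_\mydelta$ satisfies $\C_\mydelta$'' without argument; your attempt to fill that gap is the part that is incorrect and should be replaced by the border-pixel argument sketched above.
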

\begin{proof}
First, we prove that $d_\mydelta\leq d_M.$
Let $M'$ be a connected image such that $\dis(M,M')=d_M$. Then $M'_\mydelta$, the gridded image obtained from $M'$, satisfies $\C_{\mydelta}$. Since $\dis(M_\mydelta,M'_\mydelta)\leq d_M$, it follows that $d_\mydelta\leq d_M.$

Now we show that $d_M-\frac{\mydelta}{2}\leq d_\mydelta$. Let $M''_\mydelta\in \C_\mydelta$ be such that $\dis(M_\mydelta,M''_\mydelta)=d_\mydelta$. Then $M''_\mydelta\in \C$ and, by Claim~\ref{claim:GP-size}, $\dis(M,M''_\mydelta)\leq |\gp_{4/\mydelta}|/n^2 +d_\mydelta\leq \mydelta/2+d_\mydelta$, implying $ d_M\leq \mydelta/2+d_\mydelta$, as required.

Finally, observe that to make $M_\mydelta$ satisfy $\C_\mydelta$, it is necessary and sufficient to ensure that each square satisfies $\C'$. In other words,
$$d_\mydelta n^2=\sum_{S\in S_{4/\mydelta}}\Dis(S,\C')= (4/\mydelta-1)^2\sum_{S\in S_{4/\mydelta}}\dis(S,\C').$$
Since $|S_{4/\mydelta}|=(\frac{n-1}{4/\mydelta})^2$, the desired expression for $d_\mydelta$ follows.
\end{proof}

\ifnum\full=1

\subparagraph{Analysis of Algorithm~\ref{alg:connectedness}.}
Let $\dsquares =\frac 1 {|S_{4/\mydelta}|}\sum_{S\in S_{4/\mydelta}}\dis(S,\C')$. Recall that $\dhatsquares$ is the empirical average computed by the algorithm.
By the Chernoff-Hoeffding bound, $\Pr[|\dhatsquares-\dsquares|>\mydelta/2]\leq 2\exp(-2\mydelta^{2}s)\leq 1/3$. So, with probability at least 2/3, we have $|\dhatsquares-\dsquares|\leq \mydelta/2$. If this event happens then
$|\dout-d_\mydelta|\leq \mydelta/2$ because by Algorithm~\ref{alg:connectedness} and Lemma~\ref{lem:distance_relation}, respectively, $\dout=A\cdot\dhatsquares$ and $d_\mydelta= A\cdot\dsquares$, where $A=\left((1-\frac \mydelta 4) (1-\frac 1n)\right)^2 \leq 1$. By Lemma~\ref{lem:distance_relation}, $|d_\mydelta-\eps|\leq \mydelta/2$. Thus,
$|\dout-\eps|
\leq |\dout-d_\mydelta|+ |d_\mydelta-\eps|
\leq \mydelta/2+\mydelta/2=\mydelta$ holds with probability at least 2/3, as required.

\subparagraph{Query and time complexity.}
Algorithm~\ref{alg:connectedness} samples $O(1/\mydelta^2)$ squares containing $O(1/\mydelta^2)$ pixels each. Thus, the sample complexity is $O(1/\mydelta^4)$.

The most expensive step in Algorithm~\ref{alg:connectedness} is Step~\ref{step:ave-dist-border-conn} where the distance of a square $S$ to border connectedness is calculated. By Theorem~\ref{thm:border-con} (see section \ref{sec:border-con}), the running time of this step for one square is $\exp\left(O\left(\frac 1 \mydelta \right)\right)$ and it is called $O(1/\eps^2)$ times. Therefore, the running time of Algorithm~\ref{alg:connectedness} is $\exp\left(O\left(\frac 1 \mydelta \right)\right)$, as claimed.
\else
The rest of the analysis is completed in the {\color{black} full version of this paper} using the Chernoff-Hoeffding bound.
\fi

\subsection{Algorithm for Border Connectedness}
\label{sec:border-con}

\begin{theorem}
\label{thm:border-con}
Let $S$ be a $k\times k$ image. There is an algorithm that computes $\dis(S,\C')$ (i.e., distance of $S$ to border connectedness) in time $\exp \left (O(k)\right)$.
\end{theorem}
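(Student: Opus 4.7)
The plan is to use dynamic programming, processing the image column by column from left to right. At each step, the DP state will summarize all information about the columns processed so far that is relevant for completing the configuration on the right. Specifically, for column $i$, the state consists of: (i) the color pattern of column $i$ (which of the $k$ pixels in that column are black), (ii) the partition of the black pixels of column $i$ into equivalence classes, where two black pixels are equivalent iff they lie in the same connected component of black pixels restricted to columns $0,1,\dots,i$, and (iii) a bit for each equivalence class indicating whether that class already contains a pixel known to lie on the image border (left border, or top/bottom row of any processed column). Because the image graph is planar and the frontier is a single column, the partitions that arise are non-crossing, so there are at most $\mathrm{Cat}(k)=O(4^k)$ of them; together with $2^k$ color patterns and at most $2^k$ border-touched flags per partition, the number of states per column is $\exp(O(k))$.

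The transition from column $i$ to column $i+1$ enumerates the $2^k$ possible color patterns of column $i+1$. For each candidate pattern, the incremental cost is the number of pixels in column $i+1$ whose assigned color differs from $S$. The new state is obtained by: (a) starting from the vertical-adjacency partition of the black pixels in column $i+1$; (b) for each horizontal edge between a black pixel of column $i$ and a black pixel of column $i+1$, merging the classes they belong to, inheriting the border-touched flag as a logical OR; (c) marking as border-touched any class that contains the top or bottom pixel of column $i+1$; (d) restricting the state to the black pixels of column $i+1$, and declaring the transition infeasible (cost $+\infty$) if this restriction drops a class whose flag is not yet set, since that class represents a component with no path to the border. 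The initial state for column $0$ treats every black pixel as border-touched (column $0$ is the left border), and the final answer is read off after processing column $k-1$, whose black pixels are automatically border-touched (right border), so no class can be infeasibly dropped at the end.

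Correctness follows because the state captures exactly the information needed to determine, from any completion on the right, whether the resulting image satisfies $\C'$, and the DP minimizes the total cost, which equals $\Dis(S,\C')$; dividing by $k^2$ yields $\dis(S,\C')$. For running time, there are $k$ columns and $\exp(O(k))$ states per column. From each state, we iterate over $2^k$ new color patterns and compute the resulting state and its cost in $\mathrm{poly}(k)$ time (partition merges can be handled by a small union-find structure on at most $k$ elements). The total running time is $k\cdot \exp(O(k))\cdot 2^k\cdot \mathrm{poly}(k)=\exp(O(k))$, as required.

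The main obstacle is the careful bookkeeping in step (d) of the transition: when a class present in the old frontier has no representative in the new frontier, we must check its border-touched flag and forbid the transition if it is unset, since the corresponding connected component has been permanently sealed off from any future border contact. A second subtle point is to argue that restricting attention to non-crossing partitions suffices, which relies on the planarity of the grid graph and the fact that any crossing partition of a column cannot be realized as the trace of a planar connectivity pattern; this is what keeps the state count at $\exp(O(k))$ rather than the much larger Bell number.
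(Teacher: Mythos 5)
Your proposal is correct and takes essentially the same approach as the paper: a frontier (row-by-row in the paper, column-by-column in yours) dynamic program whose state records the frontier's coloring together with a non-crossing connectivity partition and border-reachability information. The paper encodes the non-crossing partition via a five-symbol status alphabet with bracket-matching (its subroutine \Constgr literally decodes it with a stack), which is an equivalent representation to your explicit non-crossing partition plus per-class flags; both give $\exp(O(k))$ states and transitions, and your infeasibility check in step (d) matches the paper's consistency test in \Compst.
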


\begin{proof}{\color{black}
To prove the theorem we give a dynamic programming algorithm that computes $\dis(S,\C')$ in the following way: starting from row 1 of $S$, it processes a row and proceeds to the next one. The algorithm stops after processing row $k$. The information the algorithm computes and stores for each row is explained later in this section.

\begin{definition}
\label{def:block}
Let $\overline{cl}\in\{0,1\}^k$ be a vector. Call maximal consecutive runs of $1$'s in $\overline{cl}$ {\em 1-blocks} and let $n(\overline{cl})$ denote the number of 1-blocks in $\overline{cl}$. Let $\bold{1}^t$ (respectively, $\bold{0}^t$) denote the string of $t$ ones (respectively, zeros) and let $\Sigma=\{0,1,<,\times,>\}$.
\end{definition}
}

Consider a $k\times k$ image $S$. Recall that $G_{S}$ denotes the image graph of $S$. For every $i\in[k]$, denote the subgraph of $G_{S}$, induced by the first $i$ rows in $S$, by $G^i_{S}$. Index 1-blocks in row $i$ of $S$ in the increasing order of indices of pixels they contain. For example, a row $001110011$ contains two 1-blocks; the 1-block with three $1$'s  has index 1, the 1-block with two $1$'s has index 2. Each 1-block in row $i$ has one of the following 5 statuses w.r.t.\ $G^i_{S}$:
\begin{itemize}
\item{connected to the border of $S$ (denoted by 1);}
\item{isolated, i.e., not connected to the border and to any other 1-block in its row (denoted by 0);}
\item{first 1-block in its connected component, i.e., it is in the connected component with other 1-blocks of row $i$ and has the smallest index among them (denoted by $<$);}
\item{intermediate 1-block in its connected component, i.e., has neither largest nor smallest index in its connected component (denoted by $\times$);}
\item{last 1-block in its connected component, i.e., it is in the connected component with other 1-blocks of row $i$ and has the largest index among them (denoted by $>$).}
\end{itemize}

{\color{black}
\begin{definition}\label{def:config}
Let $\overline{cl}$ denote the coloring of $S$ in row $i$, for some $i\in[k]$ (i.e., $\overline{cl}=S[i]$). Statuses of 1-blocks of $\overline{cl}$ w.r.t. $G^i_S$ are captured by a {\em status vector} $\overline{st}\in\Sigma^{n(\overline{cl})}$. The pair $(\overline{cl},\overline{st})$ is called the {\em configuration w.r.t.\ $G^i_{S}$.}
\end{definition}

\begin{definition}\label{def:bc-sets}
For all $i\in[k]$, $\overline{cl}\in\{0,1\}^k$, and vectors $\overline{st}$ over $\Sigma$ of length at most $k$, define $B(i,\overline{cl},\overline{st})=\{S'\mid S' \text{ is a }k\times k \text{ border-connected image with configuration }(\overline{cl},\overline{st}) \text{ w.r.t.\ } G^i_{S'}\}.$
\end{definition}

For all $i\in[k]$ and $k\times k$ images $S'$, let $cost_i(S,S')$ denote the number of pixels on which the first $i$ rows of $S$ and $S'$ differ. For all $i\in[k], \overline{cl}\in\{0,1\}^k$, and $\overline{st}\in\Sigma^k$ define
$$
{cost}(i,\overline{col},\overline{st})=
\begin{cases}
\min_{S'\in B(i,\overline{cl},\overline{st})}(cost_i(S,S')) & \text{ if } B(i,\overline{col},\overline{st})\neq\emptyset,\\
\infty & \text{ otherwise.}
\end{cases}
$$
Note that the number of all possible configurations for a row is at most $2^k\cdot 5^k=\exp(k)$. We show that if for some $i\in[k-1]$, the value of ${cost}(i,\overline{cl},\overline{st})$ is known for every configuration $(\overline{cl},\overline{st})$ in row $i$, then for every configuration $(\overline{cl'},\overline{st'})$ in row $i+1$, the cost ${cost}(i+1,\overline{cl'},\overline{st'})$ can be computed in time exponential in $k$. This is a crucial ingredient that helps us to show that the running time of our algorithm is exponential in $k$.

For a fixed $i\in[k-1]$, consider an image $S'\in B(i,\overline{cl},\overline{st})$. Let $S''$ be an image which has the same $j\in[i]$ rows as $S'$. Let $\overline{cl'}$ denote the coloring of row $i+1$ in $S''$. If $S''$ is border-connected then configuration $(\overline{cl},\overline{st})$ is {\em consistent} with coloring $\overline{cl'}$, i.e., every 1-block in $\overline{cl}$ that has status other than 1 w.r.t.\ $G^{i}_{S''}$ is connected to a 1-block in $\overline{cl'}$ w.r.t.\ $G^{i+1}_{S''}$. Moreover, for some status vector $\overline{st'}\in\Sigma^{n(\overline{cl'})}$, image $S''\in B(i+1,\overline{cl'},\overline{st'})$ and $\overline{st'}$ can be determined from $\overline{cl},\overline{cl'}$, and $\overline{st}$. Observe that if ${cost}(i,\overline{cl},\overline{st})$ is known for every configuration $(\overline{cl},\overline{st})$, then $cost(i+1,\overline{cl'},\overline{st'})$ can be computed. After computing costs for all configurations in each row, $\dis(S,\C')=\min_{\overline{cl},\overline{st}}cost(k,\overline{cl},\overline{st})$ can be found. In order to find $\overline{st'}$, our algorithm uses subroutine \Compst. \Compst uses subroutine \Constgr that creates a graph whose nodes are 1-blocks of $\overline{cl}$ and edges are defined according to the information provided by $\overline{st}$. Now we explain how $\overline{st'}$ is found. }

Now we show how subroutine \Compst computes $\overline{st'}$ from colorings $\overline{col}$,$\overline{col'}$, and the status vector $\overline{st}\in\Sigma^{num(\overline{col})}$ of $\overline{col}$ w.r.t\ $G^i_{I}$, where $I$ is an image from $B(i,\overline{col},\overline{st})$ for $i\in[k-1]$. Let $n_1=num(\overline{col})$ and $n_2=num(\overline{col'})$. Index 1-blocks in $\overline{col}$ in the nondecreasing order of indices of pixels they contain. Let $I'$ be an image obtained from $I$ by recoloring its row $i+1$ to $\overline{col'}$. Index 1-blocks in $\overline{col'}$ in the nondecreasing order of indices of pixels they contain and add $n_1$ to each index. Consider graph $G=(V,E)$ where $V=[n_1+n_2]$ and $E$ has every edge of the following two types:
\begin{enumerate}
\item edges $(i,j)$, where $i,j\in[n_1]$, $i<j$, and $i$ is not connected to any $j'<j$ in  $G^{i}_{I'}$.
\item $(i,n_1+j)$, where $i\in[n_1], j\in[n_2]$, and $i$ is connected to $n_1+j$ in $G^{i+1}_{I'}$.
\end{enumerate}

\Compst uses subroutine \Constgr to construct graph $G$. (\Constgr computes set $E$.)
After graph $G$ is constructed, \Compst checks whether configuration  $(\overline{col},\overline{st})$ and $\overline{col'}$ are consistent w.r.t.\ $G^{i+1}_{I'}$. If they are not consistent it outputs a $\perp$ symbol. If they are consistent then $B(i+1,\overline{col'},\overline{st'})\neq\emptyset$ (rows $i+1,\ldots, k$ in $I'$ can be recolored to all black rows and the resulting image is in $B(i+1,\overline{col'},\overline{st'})$). \Compst finds vector $\overline{st'}$ based on the connectivity information of graph $G$ and information provided by vector $\overline{st}$.

To check whether $\overline{cl'}$ is consistent with $(\overline{col},\overline{st})$ our algorithm uses subroutine \Compst (Algorithm~\ref{alg:status}). If it is consistent, to
find the status vector $\overline{st'}$. Subroutine \Compst uses subroutine \Constgr that constructs a graph from $\overline{cl}$ and $\overline{st}$ that helps to compute $\overline{st'}$. The nodes in this graph are 1-blocks of $\overline{cl}$ and edges are defined according to the information provided by $\overline{st}$. \Constgr is explained next.

For each $i\in[k]$, every image $S$ has some configuration $(\overline{cl},\overline{st})$ in its $i$'th row w.r.t.\ $G^i_S$. Vector $\overline{st}$ in this  configuration has the information about which 1-blocks in $\overline{cl}$ are in the same connected component in $G^i_S$ and which are connected to the border. Thus, if $(\overline{cl},\overline{st})$ is given we can construct a graph $G$ whose nodes are all 1-blocks of $\overline{cl}$ and the status vector of $\overline{cl}$ w.r.t.\ $G$ is $\overline{st}$. Subroutine \Constgr (Algorithm~\ref{alg:graph}) constructs graph $G$.

\begin{algorithm}\label{alg:graph}
\caption{Subroutine \Constgr used in Algorithm~\ref{alg:status}.}
\SetKwInOut{Input}{input}\SetKwInOut{Output}{output}
\Input{vector $\overline{cl}\in\{0,1\}^k$, and $\overline{st}\in\Sigma^{n(\overline{cl})}$.}
\DontPrintSemicolon
\BlankLine
\nl Index 1-blocks in $\overline{cl}$ in the nondecreasing order of indices of pixels they contain. \\
\tcp{Let $n_1=num(\overline{cl}), V=[n_1], E=\emptyset,$ and $stack=\emptyset$ (we maintain a stack)}

\nl \ForAll {indices $j=1,2,\ldots,n_1$} {
\nl\textbf{if} $stack\neq\emptyset$ and $\overline{st}[j]=1$ \textbf{then} \Return $\perp$\\
\nl\textbf{if} $stack=\emptyset$ and $\overline{st}[j]\in\{\boldsymbol{\times},\boldsymbol{>}\}$ \textbf{then} \Return $\perp$\\

\nl\textbf{if} $\overline{st}[j]\in\{\boldsymbol{<},\boldsymbol{\times}\}$ \textbf{then} $push(j)$\\
\nl\textbf{if} $\overline{st}[j]=\ \boldsymbol{>}$ \textbf{then} \textbf{do} $p=pop(stack)$; add $\{p,j\}$ to $E$ \textbf{until} $p=\ \boldsymbol{<}$\\
}
\nl \Return $G=(V,E)$\\
\end{algorithm}

\begin{algorithm}
\caption{Distance to border connectedness of a square $S$.}
\label{alg:dist}
\SetKwInOut{Input}{input}\SetKwInOut{Output}{output}
\Input{access to a $k\times k$ square $S$.}
\DontPrintSemicolon
\BlankLine
\nl\label{st:init}\ForAll  {indices $i\in[k]$, vectors $\overline{col}\in\{0,1\}^k$, and $\overline{st}\in \Sigma^{num(\overline{col})}$} \do{
\tcp{For $i\in[k]$, let $\overline{r}_{i}\in\{0,1\}^{k}$ be a vector that corresponds to the $i^{th}$ row of $S$.}
\nl\quad$cost(i,\overline{col}, \overline{st})=
\begin{cases} |\overline{r}_{1}-\overline{col}|_1, & \mbox{if } i=1\mbox{ and }\overline{st}=\bold{1}^{num(\overline{col})};\\
\infty, & \mbox{otherwise.}\end{cases}$\\}

\nl\label{st:subsequent}\ForAll {indices $i=2,3,...,k$, vectors $\overline{col},\overline{col'}\in\{0,1\}^k$ and $\overline{st}\in\Sigma^{num(\overline{col})}$}
   \do{
\nl\quad \textbf{if} $cost(i-1,\overline{col},\overline{st})\neq\infty$ \textbf{then} $\overline{st'}=\Compst(i,\overline{col},\overline{st},\overline{col'})$\\
\nl\quad\textbf{if} $\overline{st'}\neq\perp$ \textbf{then} $cost(i,\overline{col'},\overline{st'})=\min\{cost(i,\overline{col'},\overline{st'}), cost(i-1,\overline{col},\overline{st})+|\overline{r}_{i}-\overline{col'}|_1\}$

}
\nl \label{st:actual-min}\Return $(\min_{\overline{col}\in\{0,1\}^k, \overline{st}\in\Sigma^{num(\overline{col})}}cost(k,\overline{col}, \overline{st}))\cdot k^{-2}$
\end{algorithm}

\subparagraph{Analysis of Algorithm~\ref{alg:dist}.}
The following lemma shows that Algorithm~\ref{alg:dist} is correct.
\begin{lemma}
\label{lm:correct}
For all $i\in[k]$, $\overline{col}\in\{0,1\}^k$, and $\overline{st}\in\Sigma^{k}$, Algorithm~\ref{alg:dist} correctly computes ${cost}(i,\overline{col}, \overline{st})$.
\end{lemma}
\begin{proof} We prove the lemma inductively. For the first row, Algorithm~\ref{alg:dist} indeed computes the cost of every configuration. (Note that every 1-block in the first row is connected to the border and thus, every such 1-block has status $1$. Therefore, $cost(1,\overline{col},\overline{st})=|r_1-\overline{col}|_1$ if $\overline{st}=1^{num(\overline{col})}$, and $cost(1,\overline{col},\overline{st})=\infty$, otherwise). Assume that the statement in the lemma holds for some row $i\in[k-1]$. We prove the statement for row $i+1$. Note that if $B(i+1,\overline{col'},\overline{st'})=\emptyset$ for some $\overline{col'},\overline{st'}$, then $cost(i+1,\overline{col'},\overline{st'})=\infty$. The algorithm correctly sets $cost(i+1,\overline{col'},\overline{st'})$ to $\infty$ and never changes it. If $B(i+1,\overline{col'},\overline{st'})\neq\emptyset$ consider an image $I^*\in B(i+1,\overline{col'},\overline{st'})$ such that $cost(i+1,\overline{col'},\overline{st'})=cost_{i+1}(S,I^*)$. Let $\overline{col}$ be the coloring of row $i$ in $I^*$ and $\overline{st}$ be the status vector of $\overline{col}$ w.r.t.\ $G^i_{I^*}$. Then $\overline{col'}$ and the configuration $(\overline{col},\overline{st})$
are consistent w.r.t. $G^{i+1}_{I^*}$. Note that $I^*\in B(i,\overline{col},\overline{st})$ and $cost_{i+1}(S,I^*)=cost_i(S,I^*)+|r_{i+1}-\overline{col'}|_1$. Moreover, for every image $I_1\in B(i,\overline{col},\overline{st})$, there is an image $I_2\in B(i+1,\overline{col'},\overline{st'})$ such that $cost_i(S,I_1)=cost_i(S,I_2)$. (In $I_2$, recolor row $i+1$ to $\overline{col'}$ and all rows $i+2,\ldots, k$ to all black rows and obtain image $I_2$. In image $I_2$, $\overline{col'}$ and $(\overline{col},\overline{st})$ are consistent w.r.t.\ $G^{i+1}_{I_2}$ and every 1-block in its rows $i+1,\ldots,k$ is border-connected.  Thus, image $I_2$ is border-connected and $I_2\in B(i+1,\overline{col'},\overline{st'})$.) Therefore, $cost_i(S,I^*)=\min_{I\in B(i,\overline{col},\overline{st})}cost_i(S,I)=cost(i,\overline{col},\overline{st})$. At some point, the algorithm considers the configuration $(\overline{col},\overline{st})$ for row $i$ and the coloring $\overline{col'}$ for row $i+1$. By the inductive assumption, the algorithm correctly computes $cost(i,\overline{col},\overline{st})$ which is equal to $cost_i(S,I^*)$. The output of \Compst for the triple $i,\overline{col},\overline{st}$ will be the vector $\overline{st'}$ and the algorithm sets $cost(i+1,\overline{col'},\overline{st'})$ to $cost(i,\overline{col},\overline{st})+|r_{i+1}-\overline{col'}|_1=cost_i(S,I^*)+|r_{i+1}-\overline{col'}|_1$ which is the correct value. This completes the proof.
\end{proof}

By Lemma~\ref{lm:correct}, Algorithm~\ref{alg:dist} computes the cost of every configuration in row $k$. The algorithm outputs the minimum one among these costs. Let $\hat{S}$ be an image such that $\dis(S,\C')=\dis(S,\hat{S})$. Note that configurations of row $k$ that are not possible for a border-connected image have unbounded costs (i.e., $\infty$). Thus, row $k$ in $\hat{S}$ has some configuration which has the minimum cost among all configuration costs for the row. Note that the cost of a configuration in row $k$ is equal to the cost of recoloring of $S$ to some border connected square. Therefore, the output of Algorithm~\ref{alg:dist} is equal to $\dis(S,\C')$.

\begin{algorithm}\label{alg:status}
\caption{Subroutine \Compst used in Algorithm~\ref{alg:dist}.}
\SetKwInOut{Input}{input}\SetKwInOut{Output}{output}
\Input{index $i$; vectors $\overline{col},\overline{col'}\in\{0,1\}^k$, and $\overline{st}\in\Sigma^{num(\overline{col})}$.}
\DontPrintSemicolon
\BlankLine
\nl Construct graph $G=\Constgr(\overline{col},\overline{col'},\overline{st})$\tcp{Let $G=(V,E)$.}
\nl \textbf{if} $E=\emptyset$ \textbf{then} \Return $\perp$\\
\tcp{Let $n_1=num(\overline{col}), n_2=num(\overline{col'})$}
\nl Let $\overline{col}[0]=\overline{col'}[0]=row_1=row_2=0$\\
\nl For every $j=1,2,\ldots,n_1$\\
\nl\quad \textbf{if} $\overline{col}[j-1]=0$ and $\overline{col}[j]=1$ \textbf{then} increment $row_1$ by 1\\
\nl\quad \textbf{if} $\overline{col'}[j-1]=0$ and $\overline{col'}[j]=1$ \textbf{then} increment $row_2$ by 1\\
\nl\quad \textbf{if} $\overline{col}[j]\cdot \overline{col'}[j]=1$ and $\{row_1,row_2\}\notin E$ \textbf{then} add $\{row_1,row_2\}$ to $E$\\

\nl \textbf{if} $\exists j\in[n_1]$ with $\overline{st}[j]\neq 1$ such that $(j,n_1+j')\notin E$ for all $j'\in[n_2]$ \textbf{then} \Return $\perp$

\nl \textbf{if} $i=k$ \textbf{then} $\overline{st'}=\bold{1}^{n_2}$\\
\nl \textbf{else}\\ 
\nl\quad Let $\overline{st'}=\bold{0}^{n_2}$. Update $\overline{st'}[1]=\overline{col'}[1]$ and $\overline{st'}[n_2]=\overline{col'}[k]$

\nl\quad For each edge $(j,n_1+j')\in E$, $j\in[n_1],j'\in[n_2]$, \textbf{if} $\overline{st}[j]=1$ \textbf{then} $\overline{st}[n_1+j']=1$.

\nl\quad {Run BFS to find connected components in $G$. For each pair $(n_1+j,n_1+j')$, where \\ \quad $j,j'\in[n_2]$, \textbf{if} $j$ and $j'$ are connected and $\overline{st'}[j]=1$ \textbf{then} $\overline{st'}[n_1+j']=1$.}

\nl\quad{For each connected component of vertices $n_1+j$ not marked by 1, where $j\in[n_2]$, update \\ \quad the corresponding entries of $\overline{st'}$ with the corresponding symbols in $\Sigma$ (i.e., $\overline{st'}[j]=\ \boldsymbol{<}$ if it is\\ \quad the vertex with the smallest index in the component, $\overline{st'}[j]=\ \boldsymbol{>}$ if it is the vertex with the \\ \quad largest index in the component, and $\overline{st'}[j]= \boldsymbol{\times}$, otherwise).}

\nl \Return $\overline{st'}$

\end{algorithm}

\subparagraph{Query and Time Complexity of Algorithm~\ref{alg:dist}.}
The most expensive step in Algorithm~\ref{alg:dist} is Step 3. Note that there are at most $k\cdot2^k\cdot5^k$ sets $B(\cdot,\cdot,\cdot)$ for which subroutine \Compst is called in this step. \Compst uses subroutine \Constgr to construct graph $G=(V,E)$. To construct type 1 and type 2 edges in $E$ subroutine \Constgr performs $O(n_1)+O(n_1)=O(k)$ operations. Thus, the running time of \Constgr is $O(k)$ (recall that $n_1=num(\overline{col})$, $n_2=num(\overline{col'})$). Therefore, Steps 1-5 of \Compst run in time $O(k)$. Among the remaining steps (Steps 6-10) of \Compst the most expensive ones are Steps 8 and 9. Each of them runs in time $O(n_1\cdot n_2)=O(k^2)$ time. Thus, the running time of \Compst is $O(k^2)$ and Algorithm~\ref{alg:dist} runs in time $O( k^2\cdot k2^k5^k)=\exp \left (O(k)\right)$, as claimed.
\end{proof}

\bibliographystyle{abbrvnat}
\bibliography{visual-properties}

\begin{thebibliography}{25}
\providecommand{\natexlab}[1]{#1}
\providecommand{\url}[1]{\texttt{#1}}
\expandafter\ifx\csname urlstyle\endcsname\relax
  \providecommand{\doi}[1]{doi: #1}\else
  \providecommand{\doi}{doi: \begingroup \urlstyle{rm}\Url}\fi

\bibitem[Barany(2000)]{Barany00}
I.~Barany.
\newblock Extremal problems for convex lattice polytopes: a survey.
\newblock \emph{Contemporary Mathematics}, 2000.

\bibitem[Berman et~al.(2014)Berman, Raskhodnikova, and
  Yaroslavtsev]{BermanRY14}
P.~Berman, S.~Raskhodnikova, and G.~Yaroslavtsev.
\newblock {$L_p$}-testing.
\newblock In \emph{STOC}, pages 164--173, 2014.

\bibitem[Berman et~al.(2016{\natexlab{a}})Berman, Murzabulatov, and
  Raskhodnikova]{BMR16icalp}
P.~Berman, M.~Murzabulatov, and S.~Raskhodnikova.
\newblock Tolerant testers of image properties.
\newblock In \emph{ICALP}, 2016{\natexlab{a}}.

\bibitem[Berman et~al.(2016{\natexlab{b}})Berman, Murzabulatov, and
  Raskhodnikova]{BMR16socg}
P.~Berman, M.~Murzabulatov, and S.~Raskhodnikova.
\newblock Testing convexity of figures under the uniform distribution.
\newblock In \emph{SoCG}, 2016{\natexlab{b}}.

\bibitem[Blum()]{Avrim-lecture-notes}
A.~Blum.
\newblock Machine learning theory.
\newblock \emph{Lecture notes}.
\newblock URL \url{www.cs.cmu.edu/~avrim/ML12/lect0201.pdf}.

\bibitem[Campagna et~al.(2013)Campagna, Guo, and Rubinfeld]{CampagnaGR13}
A.~Campagna, A.~Guo, and R.~Rubinfeld.
\newblock Local reconstructors and tolerant testers for connectivity and
  diameter.
\newblock In \emph{APPROX-RANDOM}, pages 411--424, 2013.

\bibitem[Czumaj and Sohler(2001)]{CS01}
A.~Czumaj and C.~Sohler.
\newblock Property testing with geometric queries.
\newblock In \emph{Algorithms - {ESA} 2001, 9th Annual European Symposium,
  Aarhus, Denmark, August 28-31, 2001, Proceedings}, pages 266--277, 2001.
\newblock \doi{10.1007/3-540-44676-1_22}.
\newblock URL \url{http://dx.doi.org/10.1007/3-540-44676-1_22}.

\bibitem[Czumaj et~al.(2000)Czumaj, Sohler, and Ziegler]{CzumajSZ00}
A.~Czumaj, C.~Sohler, and M.~Ziegler.
\newblock Property testing in computational geometry.
\newblock In \emph{Algorithms - {ESA} 2000, 8th Annual European Symposium,
  Saarbr{\"{u}}cken, Germany, September 5-8, 2000, Proceedings}, pages
  155--166, 2000.
\newblock \doi{10.1007/3-540-45253-2_15}.
\newblock URL \url{http://dx.doi.org/10.1007/3-540-45253-2_15}.

\bibitem[Ehrenfeucht et~al.(1989)Ehrenfeucht, Haussler, Kearns, and
  Valiant]{EHKV89}
A.~Ehrenfeucht, D.~Haussler, M.~J. Kearns, and L.~G. Valiant.
\newblock A general lower bound on the number of examples needed for learning.
\newblock \emph{Inf. Comput.}, 82\penalty0 (3):\penalty0 247--261, 1989.

\bibitem[Fischer and Fortnow(2006)]{FischerF06}
E.~Fischer and L.~Fortnow.
\newblock Tolerant versus intolerant testing for boolean properties.
\newblock \emph{Theory of Computing}, 2\penalty0 (1):\penalty0 173--183, 2006.

\bibitem[Goldreich and Ron(2002)]{GR02}
O.~Goldreich and D.~Ron.
\newblock Property testing in bounded degree graphs.
\newblock \emph{Algorithmica}, 32\penalty0 (2):\penalty0 302--343, 2002.

\bibitem[Goldreich et~al.(1998)Goldreich, Goldwasser, and Ron]{GGR98}
O.~Goldreich, S.~Goldwasser, and D.~Ron.
\newblock Property testing and its connection to learning and approximation.
\newblock \emph{J. ACM}, 45\penalty0 (4):\penalty0 653--750, 1998.

\bibitem[Haussler(1992)]{Haussler92}
D.~Haussler.
\newblock Decision theoretic generalizations of the {PAC} model for neural net
  and other learning applications.
\newblock \emph{Inf. Comput.}, 100\penalty0 (1):\penalty0 78--150, 1992.
\newblock \doi{10.1016/0890-5401(92)90010-D}.
\newblock URL \url{http://dx.doi.org/10.1016/0890-5401(92)90010-D}.

\bibitem[Kalai et~al.(2008)Kalai, Klivans, Mansour, and Servedio]{KalaiKMS08}
A.~T. Kalai, A.~R. Klivans, Y.~Mansour, and R.~A. Servedio.
\newblock Agnostically learning halfspaces.
\newblock \emph{{SIAM} J. Comput.}, 37\penalty0 (6):\penalty0 1777--1805, 2008.

\bibitem[Kearns et~al.(1994)Kearns, Schapire, and Sellie]{KearnsSS94}
M.~J. Kearns, R.~E. Schapire, and L.~Sellie.
\newblock Toward efficient agnostic learning.
\newblock \emph{Machine Learning}, 17\penalty0 (2-3):\penalty0 115--141, 1994.
\newblock \doi{10.1007/BF00993468}.
\newblock URL \url{http://dx.doi.org/10.1007/BF00993468}.

\bibitem[Kleiner et~al.(2011)Kleiner, Keren, Newman, and Ben-Zwi]{KleinerKNB11}
I.~Kleiner, D.~Keren, I.~Newman, and O.~Ben-Zwi.
\newblock Applying property testing to an image partitioning problem.
\newblock \emph{IEEE Trans. Pattern Anal. Mach. Intell.}, 33\penalty0
  (2):\penalty0 256--265, 2011.

\bibitem[Korman et~al.(2011)Korman, Reichman, and Tsur]{KormanRT}
S.~Korman, D.~Reichman, and G.~Tsur.
\newblock Tight approximation of image matching.
\newblock \emph{CoRR}, abs/1111.1713, 2011.

\bibitem[Korman et~al.(2013)Korman, Reichman, Tsur, and Avidan]{KormanRTA13}
S.~Korman, D.~Reichman, G.~Tsur, and S.~Avidan.
\newblock Fast-match: Fast affine template matching.
\newblock In \emph{CVPR}, pages 2331--2338. IEEE, 2013.

\bibitem[Parnas et~al.(2006)Parnas, Ron, and Rubinfeld]{PRR06}
M.~Parnas, D.~Ron, and R.~Rubinfeld.
\newblock Tolerant property testing and distance approximation.
\newblock \emph{J. Comput. Syst. Sci.}, 72\penalty0 (6):\penalty0 1012--1042,
  2006.

\bibitem[Pick(1899)]{pick}
G.~Pick.
\newblock Geometrisches zur zahlenlehre.
\newblock \emph{Sitzenber. Lotos (Prague)}, 19:\penalty0 311--319, 1899.

\bibitem[Raskhodnikova(2003)]{Ras03}
S.~Raskhodnikova.
\newblock Approximate testing of visual properties.
\newblock In \emph{RANDOM-APPROX}, pages 370--381, 2003.

\bibitem[Ron and Tsur(2014)]{TsurR10}
D.~Ron and G.~Tsur.
\newblock Testing properties of sparse images.
\newblock \emph{{ACM} Trans. Algorithms}, 10\penalty0 (4):\penalty0
  17:1--17:52, 2014.
\newblock \doi{10.1145/2635806}.
\newblock URL \url{http://doi.acm.org/10.1145/2635806}.

\bibitem[Rubinfeld and Sudan(1996)]{RS96}
R.~Rubinfeld and M.~Sudan.
\newblock Robust characterizations of polynomials with applications to program
  testing.
\newblock \emph{SIAM J. Comput.}, 25\penalty0 (2):\penalty0 252--271, 1996.

\bibitem[Schmeltz(1992)]{Sch92}
B.~Schmeltz.
\newblock Learning convex sets under uniform distribution.
\newblock In \emph{Data Structures and Efficient Algorithms, Final Report on
  the {DFG} Special Joint Initiative}, pages 204--213, 1992.

\bibitem[Valiant(1984)]{Valiant84}
L.~G. Valiant.
\newblock A theory of the learnable.
\newblock \emph{Commun. {ACM}}, 27\penalty0 (11):\penalty0 1134--1142, 1984.
\newblock \doi{10.1145/1968.1972}.
\newblock URL \url{http://doi.acm.org/10.1145/1968.1972}.

\end{thebibliography}

\end{document}